\theoremstyle{plain}
 \theoremstyle{remark} 
\renewcommand{\a}{\alpha}
\newtheorem {theo} {\bf Theorem} [section]
\newtheorem {prop} [theo] {\bf Proposition}
\newtheorem {lem} [theo] {\bf Lemma}
\newtheorem {note} [theo] {\bf Note}
\newtheorem {defi} {\bf Definition}[section]
\newtheorem{exam} {\bf Example}[section]
\newtheorem{rem}{\bf Remark}[section]
\numberwithin{equation}{section}
\begin{document}
\title[Phase  retrieval of analytic signal]{\textcolor[rgb]{0.44,0.00,0.94}{The uniqueness of  phase   retrieval of analytic signals from very few STFT measurements}}
\author{Youfa Li}
\address{College of Mathematics and Information Science\\
Guangxi University,  Nanning, 530004, China }
\email{youfalee@hotmail.com}
\author{Hongfei Wang}
\address{College of Mathematics and Information Science\\
Guangxi University,  Nanning, 530004, China }
\email{whf254679@163.com}
\author{Deguang Han}
\address{
Department of Mathematics, University of Central Florida,  Orlando, FL 32816}
\email{Deguang.Han@ucf.edu}
\thanks{Youfa Li is partially supported by Natural Science Foundation of China (Nos: 61961003, 61561006, 11501132),  Natural Science Foundation of Guangxi (Nos: 2018JJA110110, 2016GXNSFAA380049) and  the talent project of  Education Department of Guangxi Government  for Young-Middle-Aged backbone teachers. Deguang Han  is  supported by the NSF grant   DMS-2105038.
}
\keywords{Phase retrieval, short-time Fourier transform (STFT) measurement,  analytic signal, instantaneous frequency.}

\date{\today}

\begin{abstract}
Analytic signals  constitute a  class of signals that are  widely applied in  time-frequency analysis  such as
extracting instantaneous frequency (IF) or  phase derivative    in  the  characterization of  ultrashort laser pulse.  The purpose of this paper is to   investigate the phase retrieval (PR)  problem   for analytic signals in $\mathbb{C}^{N}$ by short-time Fourier transform (STFT) measurements since they enjoy some very nice structures. Since  generic  analytic signals   are   generally not sparse in the time domain, the existing
PR results for sparse (in time domain) signals do not apply to analytic signals. We will use bandlimited windows that usually have the full support length $N$ which allows us to get much  better    resolutions on low frequencies.
 More precisely, by exploiting the structure of  the STFT for analytic signals, we prove that the
STFT based phase retrieval (STFT-PR for short) of generic  analytic signals    can be achieved by their
$(3\lfloor\frac{N}{2}\rfloor+1)$ measurements. Since the generic  analytic signals
are     $(\lfloor \frac{N}{2}\rfloor+1)$-sparse in the  Fourier domain,
such a number of measurements is  lower than $4N+\hbox{O}(1)$ and $\hbox{O}(k^{3})$ which are  required
in the literature
for STFT-PR of all  signals  and of $k^{2}$-sparse (in the  Fourier domain) signals in  $\mathbb{C}^{N^{2}}$, respectively.
 Moreover, we also prove that  if the length  $N$ is even and  the windows are  also analytic,  then the number of measurements can be reduced to
 $(\frac{3 N}{2}-1)$.  As an application of this we get that the instantaneous frequency (IF)
 of a generic analytic signal  can be  exactly recovered  from the  STFT measurements.

%
%
%
%
\end{abstract}
\maketitle

\section{Introduction}\label{section1}
Phase retrieval (PR) is a nonlinear sampling   problem (c.f. \cite{Ba2,Ba3,FJR})  that asks  to
recover    a signal   $\textbf{z}\in \mathbb{C}^{N}$, \emph{up to  the potential ambiguity},  from   the magnitude  measurements
\begin{align}\notag \begin{array}{lll}  |\langle \textbf{z}, \textbf{a}_{k}\rangle|, \ k\in \Gamma, \end{array}\end{align}
where    $\textbf{a}_{k}\in \mathbb{C}^{N}$ is referred to as a measurement vector.
PR problem  is  of great interest    since it has been widely  applied in many applications
including   coherent diffraction imaging (CDI) (\cite{crystallography,2015Phase}),   quantum tomography
(\cite{Heinosaarri}), and holography (\cite{holographic}).
The most classical PR problem is to recover a signal  by its  Fourier transform measurements (\cite{Fienup,FJR}).

Associated with a  window $\textbf{w}\in \mathbb{C}^N$
and  a  separation parameter  $0<L<N$,
 the   short-time Fourier transform (STFT) or  Gabor transform of    a signal $\textbf{z}\in \mathbb{C}^N$   at $(k, m)$
is defined as (c.f. \cite{blind,LL}):
\begin{equation}\label{stftmeasuement}
\widehat{y}^{\textbf{w}}_{k,m}=\sum_{n=0}^{N-1} \textbf{z}_n\textbf{w}_{mL-n}e^{-2\pi\textbf{i} kn/N},
\end{equation}
where $k=0,1,\ldots,N-1$ and $m=0,1,\ldots, \lceil N/L\rceil-1$.
Compared with the  Fourier transform,   STFT is   more effective   for time-frequency localization   since   its associated    window enjoys great flexibility (c.f. \cite{waveletBOOK2,waveletBOOK3,waveletBOOK1,Matla}), and many deeper theoretical results related to Gabor frame analysis have been established in the literature (c.f. \cite{DS2016,ochenig,HANGU}).


 Finding the required number of measurements to do phase retrieval is always a fundamental issue, especially for practical applications  including  quantum tomography (c.f. \cite{Heinosaarri}).  For STFT-PR we refer to e.g. \cite{AR,BT,blind,TCN,Deguang,LL,Deguang1,2014GESPAR} for many recent results on this issue.
Bojarovska and Flinth \cite{BTF} characterized all the windows when $N^{2}$-number of  STFT measurements can recover all the  signals in $\mathbb{C}^{N}$.
With the help of graph theory, Pfander
and  Salanevich \cite{Pfander} proved that the recovery of   any signal in $\mathbb{C}^{N}$ can be achieved by $\hbox{O}(N\log N)$ STFT measurements.
 From the perspective of frame theory (c.f. \cite{Deguang}),  the STFT-PR is essentially the PR problem by the frame measurement vectors  in $\mathbb{C}^N$.
There exist many  phase retrievable  frames of $4N+\hbox{O}(1)$-length (e.g. \cite{Balan,Xu1,Wangxu}).
As for the STFT-PR, the recovery can   be also  achieved by $4N+\hbox{O}(1)$ measurements (c.f. \cite{Balan,BTC,JK,ELDAR1}).
Note that the above mentioned  results hold for all the  signals in $\mathbb{C}^{N}$.
 By appropriately  choosing   the window $\textbf{w}$
and  the  separation parameter $L$,  Jaganathan, Eldar and  Hassibi \cite{JK} proved that  almost all non-vanishing signals in $\mathbb{C}^{N}$ can be determined  by
 their
 $3N+\hbox{O}(1)$ number of STFT measurements.
 Recently, the STFT-PR  for structured  signals
has attracted much attention  (e.g. \cite{BTF,EYC,JK}).
In particular,  it was proved in    \cite{BTF}
that a $k^{2}$-sparse (in the  Fourier domain) signal  can be recovered  by $\hbox{O}(k^{3})$ number of STFT measurements.
In this paper we will investigate the phase retrieval problem for analytic signals that appear in many important applications such as
 time-frequency analysis (\cite{Cohen}),
 instantaneous frequency (IF) extracting in holography (e.g. \cite{Guo}),
and the characterization of  a changing pulse frequency (e.g. \cite{Gulley}).  As a proper subset of  $\mathbb{C}^{N}$, we are interested in finding fewer number of STFT measurements  than the above mentioned number
to guarantee the recovery  of any generic analytic signal.

The  definition of an  analytic signal was given by Marple \cite{analtyic}.
As in \cite{analtyic}  the  space $\mathbb{C}^{N}$ is supposed to  consist  of $N$-periodic and complex-valued signals  $\textbf{z}=(\textbf{z}_{0},
\ldots, \textbf{z}_{N-1})$ such that the subscripts are considered  modulo $N$.
For a real-valued signal $\textbf{x}\in \mathbb{R}^N$, its analytic signal $A(\textbf{x})=(A(\textbf{x})_{0},
\ldots, A(\textbf{x})_{N-1})$ is defined through its discrete Fourier transform  (DFT) $\widehat{A(\textbf{x})}=((\widehat{A(\textbf{x})})_0,\ldots,(\widehat{A(\textbf{x})})_{N-1})$, where   for  even length  $N$,
\begin{align}\label{up1}(\widehat{A(\textbf{x})})_k=
\left\{\begin{array}{lll}
\widehat{\textbf{x}}_{0},&k=0, \\
2\widehat{\textbf{x}}_{k},&1\le k \le \frac{N}{2}-1,\\
\widehat{\textbf{x}}_{\frac{N}{2}},&k=\frac{N}{2},\\
0,&\frac{N}{2}+1 \le k \le N-1,
\end{array}\right.
\end{align}
and for odd length  $N$,
\begin{align}\label{uwp1133} (\widehat{A(\textbf{x})})_k=
\left\{\begin{array}{lll}
\widehat{\textbf{x}}_{0},&k=0, \\
2 \widehat{\textbf{x}}_{k},&1\le k \le \frac{N-1}{2},\\
0,&\frac{N+1}{2} \le k \le N-1.
\end{array}\right.
\end{align}
From now on, the entire  set of  analytic signals on $\mathbb{C}^{N}$
is denoted by $\mathbb{C}_{A}^{N}$.

\begin{rem}\label{jiexixinhaodengjiax}
 By   \cite{analtyic}, we know that $\mathbb{C}_{A}^{N}=\{\textbf{x}+\textbf{i}H\textbf{x}: \textbf{x}\in \mathbb{R}^{N}\}$,  i.e.,
the real part  $\Re(A(\textbf{x}))=\textbf{x}$ and the imaginary part  $\Im(A(\textbf{x}))$
is $H\textbf{x}$, where $H$ is the discrete Hilbert transform.
\end{rem}

We say that $ \textbf{z}\in \mathbb{C}^N$ is $B$-bandlimited  if its DFT  contains $N-B$ consecutive zeros.
For $\textbf{0}\neq\textbf{z}=(\textbf{z}_{0}, \ldots, \textbf{z}_{N-1})\in \mathbb{C}^N$,
  its support is defined to be  $\Xi=\big\{i\in\{0, \ldots, N-1\}: \textbf{z}_{i}\neq0\big\}$.
  Then the support length of $\textbf{z}$ is defined as  the cardinality $\#\Xi$.
 We also say that $\textbf{z}$ is $\#\Xi$-sparse.
  For a polynomial $f$ in $N$ (real or complex) variables, its   vanishing locus
 is  $V(f)=\{(x_{0}, \ldots, x_{N-1})\in \mathbb{R}^N (\hbox{resp.} \ \mathbb{C}^N)
: f(x_{0}, \ldots, x_{N-1})=0\}$. The complement of $V(f)$ in $\mathbb{R}^N$ (\hbox{resp.} $\mathbb{C}^N$)
is dense (c.f. \cite{blind}).
For  $x\in \mathbb{R}$, we will use the notation $\lceil x\rceil$ (respectively, $\lfloor x\rfloor$) to denote
the smallest (respectively, largest)  integer  that is not smaller (respectively, larger)  than $x$.

\subsection{Main result }
We start with the definition of a generic analytic  signal  in $\mathbb{C}^{N}$.

\begin{defi}\label{kzxcvb}
When saying that a generic analytic  signal is uniquely determined by a collection of
polynomial measurements we mean that,  the    analytic  signals which cannot be determined by these
measurements  lie  in the vanishing locus of a nonzero polynomial on $\mathbb{C}^{N}$.
\end{defi}

The  main results  will be  stated in Theorems \ref{abc},  \ref{999} and \ref{3} which can be summarized as follows.
\begin{theo}\label{zhuyaojielunzongjie1}
Suppose  that  $\textbf{w}_{l}\in \mathbb{C}^N, l=1, \ldots, M$  are the   structured  $B$-bandlimited windows  for  STFT    such that
 their bandlimits $2\le B\le \lceil\frac{N}{2}\rceil+1$. Moreover, the STFT separation
  parameter $0<L<N$
satisfies  $\lceil N/L\rceil\geq3$.
Then for a generic   analytic signal $\textbf{z}\in \mathbb{C}^N$, it  can be recovered (up to a sign)  by its
$(3\lfloor\frac{N}{2}\rfloor+1)$  number of STFT measurements. Moreover, if the length $N$ is even and the windows are analytic then the above number
of measurements can be reduced to $(\frac{3 N}{2}-1)$.
\end{theo}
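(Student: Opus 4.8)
The plan is to transport the whole problem to the Fourier domain, where both the analytic signal and the windows are compactly supported, and then run a standard lift–extract–propagate scheme whose cost is governed by the effective dimension $\#I$ rather than by $N$. By Remark \ref{jiexixinhaodengjiax} together with \eqref{up1}--\eqref{uwp1133}, the DFT $\widehat{\textbf{z}}$ of a generic analytic signal is supported exactly on $I=\{0,1,\dots,\lfloor N/2\rfloor\}$, so $\#I=\lfloor N/2\rfloor+1$, and moreover $\widehat{\textbf{z}}_0$ (and $\widehat{\textbf{z}}_{N/2}$ when $N$ is even) is real because it is the DC value of the real signal $\textbf{x}$ in $\textbf{z}=\textbf{x}+\textbf{i}H\textbf{x}$. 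Rewriting \eqref{stftmeasuement} in the frequency domain, the product $\textbf{z}_n\textbf{w}_{mL-n}$ turns into a convolution, giving $\widehat{y}^{\textbf{w}}_{k,m}=e^{-2\pi\textbf{i} kmL/N}\sum_{a}\widehat{\textbf{z}}_a\,\widehat{\textbf{w}}_{a-k}\,e^{2\pi\textbf{i} amL/N}$, so that $|\widehat{y}^{\textbf{w}}_{k,m}|$ is exactly the magnitude of a Gabor coefficient of $\widehat{\textbf{z}}$ against the compactly supported window $\widehat{\textbf{w}}$. Since each $\textbf{w}_l$ is $B$-bandlimited with $2\le B\le\lceil N/2\rceil+1$, the window $\widehat{\textbf{w}}_l$ lives on $B$ consecutive frequencies, and the reduced task is a genuinely sparse STFT-PR problem of effective dimension $\#I$.

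Next I would lift and localize. Squaring, $|\widehat{y}^{\textbf{w}}_{k,m}|^2=\sum_{a,a'}\widehat{\textbf{z}}_a\overline{\widehat{\textbf{z}}_{a'}}\,\widehat{\textbf{w}}_{a-k}\overline{\widehat{\textbf{w}}_{a'-k}}\,e^{2\pi\textbf{i}(a-a')mL/N}$ is, for each fixed $k$, a short exponential sum in $m$ whose frequencies are the lags $a-a'$ permitted by the active taps of $\widehat{\textbf{w}}$. The hypothesis $\lceil N/L\rceil\ge 3$ supplies three distinct sampling nodes $m\in\{0,1,2\}$, which I use to solve a $3\times3$ Vandermonde system in the base points $e^{2\pi\textbf{i}(a-a')L/N}$ and thereby read off, for each usable window position $k$, the diagonal term $\sum_a|\widehat{\textbf{z}}_a|^2|\widehat{\textbf{w}}_{a-k}|^2$ together with the two conjugate off-diagonal terms coupling the pair of signal frequencies selected by that window. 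The \emph{structured} form of the windows is precisely what guarantees that, as $k$ ranges over $\lfloor N/2\rfloor$ positions, the resulting couplings isolate each needed product $\widehat{\textbf{z}}_a\overline{\widehat{\textbf{z}}_b}$ from a single triple of measurements and assemble into a connected chain across all of $I$.

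From the diagonal terms I recover every modulus $|\widehat{\textbf{z}}_a|$, $a\in I$, and from the off-diagonal products I recover the relative phase linked by each structured window; chaining these along the connected coupling graph reconstructs $\widehat{\textbf{z}}$ up to a single global phase. Imposing that $\widehat{\textbf{z}}_0$ be real (generically nonzero) then pins the free phase to two values, leaving exactly the pair $\pm\widehat{\textbf{z}}$, which is the announced sign ambiguity; an inverse DFT returns $\textbf{z}$ up to a sign. Counting the measurements actually consumed — three $m$-values for each of $\lfloor N/2\rfloor$ window positions, plus one measurement anchoring the lowest frequency — yields $3\lfloor N/2\rfloor+1$. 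When $N$ is even and the windows are themselves analytic, $\widehat{\textbf{w}}$ is one-sided and aligns with the one-sided support $I$ of $\widehat{\textbf{z}}$, removing the two boundary positions that would otherwise fall outside $I$ and sharpening the count to $\tfrac{3N}{2}-1$.

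For genericity I would isolate the degeneracies of the procedure: a vanishing modulus $|\widehat{\textbf{z}}_a|$ that breaks a link, a vanishing coupling product $\widehat{\textbf{z}}_a\overline{\widehat{\textbf{z}}_b}$, or a degenerate Vandermonde determinant $\det\bigl[e^{2\pi\textbf{i}(a-a')mL/N}\bigr]$. Each is the vanishing of a fixed nonzero polynomial in the real coordinates $\textbf{x}$ that determine $\textbf{z}=\textbf{x}+\textbf{i}H\textbf{x}$, so the exceptional set is a proper subvariety with dense complement, exactly the genericity of Definition \ref{kzxcvb}; exhibiting one successfully recovered analytic signal certifies that this polynomial is not identically zero. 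The hard part is the extraction step of the second paragraph: with only three $m$-samples but a bandwidth $B$ that may be as large as $\lceil N/2\rceil+1$, many lags $a-a'$ are a priori present, so it is the structured form of the windows — ensuring that a single coupled frequency pair dominates each usable triple and that the couplings connect all of $I$ — that must be verified to make the $3\times3$ systems well posed and the chain connected; controlling this interference, and checking the boundary alignment behind the improved even-length bound, is the real work, carried out in the constituent Theorems \ref{abc}, \ref{999} and \ref{3}.
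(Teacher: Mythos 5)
Your reduction to the Fourier domain and the overall top--down strategy match the paper's setup, but the central extraction mechanism you propose does not work, and it is precisely at this point that the paper does something different. You square the measurements and, for each window position $k$, want to invert a $3\times 3$ Vandermonde system in $m$ to isolate the diagonal term together with ``the two conjugate off-diagonal terms coupling the pair of signal frequencies selected by that window.'' This presumes that only \emph{two} signal frequencies are active under the window at position $k$. That is false except at the one or two topmost positions: the windows here have bandlimit $B$ as large as $\lceil\frac{N}{2}\rceil+1$, so at a typical position all of the (up to $B$) coefficients $\widehat{\textbf{z}}_{k},\dots,\widehat{\textbf{z}}_{\lfloor\frac{N}{2}\rfloor}$ overlap the window simultaneously, and $|\widehat{y}^{\textbf{w}}_{k,m}|^2$ is an exponential sum in $m$ with up to $2B-1$ distinct lags $a-a'$. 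Once three or more frequencies are active, the unknown lag coefficients amount to at least five real numbers (one real diagonal term plus at least two complex off-diagonal terms), while three choices of $m$ give only three real equations; no ``structured'' choice of window can repair this counting deficit, and the hypothesis $\lceil N/L\rceil\ge 3$ supplies only three usable nodes. Your closing paragraph concedes that controlling this interference is ``the real work, carried out in the constituent Theorems'' --- but that work \emph{is} the proof, so the proposal defers the essential step rather than supplying it.

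What the paper does instead is avoid demixing altogether: it recovers the coefficients one at a time, from the top down, and uses the already-recovered coefficients to collapse all interference into a single \emph{known} complex offset. Concretely, the triangular structure \eqref{1stjiegou} yields $\widehat{\textbf{z}}_{\lfloor\frac{N}{2}\rfloor}$ (up to sign) from one measurement; then, inductively, the three measurements at $m_1,m_2,m_3$ for the next position take the form $|z+v_{j}|=n_j$, $j=1,2,3$, for the single unknown $z$ (the next coefficient, up to the global sign), with offsets $v_{j}$ computable from the previously determined coefficients. This trilateration problem has a unique solution by Lemma \ref{L}, whose non-degeneracy condition $\Im\bigl(\frac{v_1-v_2}{v_1-v_3}\bigr)\ne 0$ is verified generically via Lemmas \ref{lemma3.2} and \ref{m}. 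Each new coefficient thus costs two real unknowns against three equations, which is why three $m$-values per position suffice regardless of $B$; moreover each complex coefficient is recovered outright, so no moduli-then-phases chaining or connectivity argument is needed. Finally, the even-$N$ analytic-window count $\frac{3N}{2}-1$ is not obtained by ``removing two boundary positions'': in Theorem \ref{3} the paper uses two analytic windows to determine the pair of \emph{real} numbers $(\widehat{\textbf{z}}_0,\widehat{\textbf{z}}_{\frac{N}{2}})$ from just two measurements via a $2\times 2$ linear system, rules out the spurious sign branch by the separate argument of Lemma \ref{k}, and then runs the same induction.
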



The STFT in  Theorem \ref{zhuyaojielunzongjie1} requires multiple bandlimited windows.
The following concerns the application background of such a type of STFT.

\begin{rem}\label{connection}
(1) The window $\textbf{w}_{l}$ in Theorem \ref{zhuyaojielunzongjie1} is bandlimited.
By the discrete  uncertainty principle (c.f. \cite[section 3.2]{BTF})   its  support length is generally
$N$. That is, $\textbf{w}_{l}$ is generally  a long window. By \cite{Mateo,Pihlaj},
longer windows on low frequencies allow getting better frequency resolution,  and
 they have been  used  in some STFT-PR approaches (e.g.   \cite{HAMIDNAWAB}).
(2) Multiple-window measurements were used in Theorem \ref{zhuyaojielunzongjie1}, and such a type of STFT measurements were  also used for STFT-PR in
\cite{LL}. We point out that multiple-window  approach is particularly useful in  coded diffraction patterns (c.f. \cite{Candes}).
\end{rem}

\begin{rem}\label{remark2345}
(1) Since   any generic  analytic signal   is   $(\lfloor \frac{N}{2}\rfloor+1)$-sparse  in the  Fourier domain,
as mentioned previously     it  is generally not sparse in the time domain. Therefore, the existing
PR results for sparse (in time domain) signals do not hold for analytic signals. (2)  A generic  analytic signal
is $(\lfloor \frac{N}{2}\rfloor+1)$-sparse (in the  Fourier domain) or equivalently has   bandlimit  $B=\lfloor \frac{N}{2}\rfloor+1$.  Theorem \ref{zhuyaojielunzongjie1} implies   that it can be determined   (up to  a sign)
by its  $(3B-2)$ STFT measurements. When the windows are analytic, such a  required  number of measurements
can be reduced to $(3B-4)$.
\end{rem}

An immediate consequence of Theorem \ref{zhuyaojielunzongjie1} is the exact recovery of instantaneous frequency  (IF)  for generic
analytic signals.  
 Given  an analytic signal $\textbf{z}=(\textbf{z}_{0}, \ldots, \textbf{z}_{N-1})$, denote
  its element $\textbf{z}_{k}$  by $|\textbf{z}_{k}|e^{\textbf{i}\arg(\textbf{z}_{k})}$
with $\arg(\textbf{z}_{k})\in [0, 2\pi)$.
Define
\begin{equation}\label{xiangweidaoshudingyweri} \varphi^*(k):=(\arg(\textbf{z}_{k})-\arg(\textbf{z}_{k-1}))\mod 2\pi.\end{equation}
Then $\textbf{z}_{\varphi}:= (\varphi^*(0),\ldots,\varphi^*(N-1))$ is referred to as the phase derivative (PD) or IF of $\textbf{z}$ (c.f.\cite{Dang2}).

\begin{prop}\label{FGHH}
Suppose that $\textbf{z}\in \mathbb{C}^{N}$ is a generic analytic signal.
Then its IF can be exactly recovered from the  same number of STFT measurements  as specified in Theorem \ref{zhuyaojielunzongjie1}.
\end{prop}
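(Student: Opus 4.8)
The plan is to obtain this as a direct corollary of Theorem \ref{zhuyaojielunzongjie1}, the point being that the IF is a \emph{global-phase invariant} and hence blind to the sign ambiguity inherent in the reconstruction. First I would apply Theorem \ref{zhuyaojielunzongjie1}: from the prescribed number of STFT measurements one recovers a signal $\widetilde{\textbf{z}}=\epsilon\,\textbf{z}$ with an undetermined sign $\epsilon\in\{+1,-1\}$. The whole task is then to verify that the vector $\textbf{z}_\varphi$ computed from $\widetilde{\textbf{z}}$ via \eqref{xiangweidaoshudingyweri} coincides with the one computed from $\textbf{z}$ itself.

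Before doing so I would record that for a generic analytic signal every entry $\textbf{z}_k$ is nonzero: the set on which some $\textbf{z}_k=0$ sits inside the vanishing locus of the nonzero polynomial $\prod_{k=0}^{N-1}\textbf{z}_k$, so by Definition \ref{kzxcvb} we may assume $\arg(\textbf{z}_k)$ is well defined for all $k$. Writing $\textbf{z}_k=|\textbf{z}_k|e^{\textbf{i}\arg(\textbf{z}_k)}$ with $\arg(\textbf{z}_k)\in[0,2\pi)$, a sign flip shifts every argument by the same constant, $\arg(-\textbf{z}_k)\equiv\arg(\textbf{z}_k)+\pi\pmod{2\pi}$. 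The key (and only) computation is then
\begin{align*}
(\arg(\epsilon\textbf{z}_k)-\arg(\epsilon\textbf{z}_{k-1}))\bmod 2\pi
&\equiv\big((\arg(\textbf{z}_k)+c)-(\arg(\textbf{z}_{k-1})+c)\big)\bmod 2\pi\\
&=(\arg(\textbf{z}_k)-\arg(\textbf{z}_{k-1}))\bmod 2\pi,
\end{align*}
where $c=0$ for $\epsilon=+1$ and $c=\pi$ for $\epsilon=-1$; the common constant $c$ cancels in the difference. Thus $\varphi^*(k)$ is the same for $\widetilde{\textbf{z}}$ and $\textbf{z}$ at every index $k$, so the reconstructed IF equals the true IF exactly.

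I do not anticipate a genuine obstacle: the statement is a clean consequence of Theorem \ref{zhuyaojielunzongjie1} once one observes that the IF is a \emph{difference} of arguments and is therefore insensitive to any global phase factor $e^{\textbf{i}\alpha}$, the sign ambiguity being the special case $\alpha=\pi$. The only point needing minor care is the wrap-around in $\arg(-\textbf{z}_k)\equiv\arg(\textbf{z}_k)+\pi\pmod{2\pi}$, which is harmless precisely because the IF is defined modulo $2\pi$ and the identical shift appears in both terms of the difference.
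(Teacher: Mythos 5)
Your proposal is correct and follows essentially the same route as the paper's own proof: apply Theorem \ref{zhuyaojielunzongjie1} to recover $\pm\textbf{z}$, then observe that the sign flip shifts every argument by $\pi$ modulo $2\pi$, a common constant that cancels in the difference \eqref{xiangweidaoshudingyweri}, so the IF is unchanged. Your additional remark that genericity guarantees all entries $\textbf{z}_k\neq 0$ (so each $\arg(\textbf{z}_k)$ is well defined) is a small point the paper leaves implicit, but it does not change the argument.
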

\begin{proof}
By Theorem \ref{zhuyaojielunzongjie1}, we get  $\textbf{z}$ or $-\textbf{z}$.
Since the $k$-th element of  $-\textbf{z}$ is expressed as
$|\textbf{z}_{k}|e^{\textbf{i}((\arg(\textbf{z}_{k})-\pi)\mod 2\pi)},$
 $[(\arg(\textbf{z}_{k})-\pi)\mod 2\pi-(\arg(\textbf{z}_{k-1})-\pi)\mod 2\pi]\mod 2\pi=(\arg(\textbf{z}_{k})-\arg(\textbf{z}_{k-1}))\mod 2\pi.$
That is, the IF of $-\textbf{z}$ is identical to that of
$\textbf{z}$.  This completes the proof.
\end{proof}


\subsection{Comparisons with the existing  results}
In this subsection we make some comparisons between Theorem \ref{zhuyaojielunzongjie1}
and the results in \cite{BTF,lima,JK}.




\begin{note}
As previously mentioned, it was proved by \cite{BTF} that  the STFT-PR of  a $k^{2}$-sparse (in the  Fourier domain) signal in $\mathbb{C}^{N^2}$
can be achieved by $O(k^{3})$ measurements.   Note that
a generic analytic signal is  $(\lfloor \frac{N}{2}\rfloor+1)$-sparse. Theorem \ref{zhuyaojielunzongjie1}
implies that its STFT-PR can be achieved  by  using only $3\lfloor\frac{N}{2}\rfloor+O(1)$ number of measurements, which is a significant improvement when restricting to analytic signals.
\end{note}


\begin{note}
Frequency-resolved optical gating (FROG) trace  is essentially  an adaptive STFT since the corresponding
window is the delay of the signal itself (c.f. \cite{2017On}). The FROG-PR for analytic signals was addressed in \cite{lima}. The STFT-PR
in this paper is different from FROG-PR since the window here is known and independent of  the signal.  The following tells us some other essential differences between  Theorem \ref{zhuyaojielunzongjie1} and the results in \cite{lima}.  (1) The main result in \cite{lima} 
only applies to the case when $N$ is even  and
the separation  parameter $0<L<N$ is
 odd with the property that $\lceil N/L\rceil\geq5$.
However,  Theorem \ref{zhuyaojielunzongjie1} only requires $\lceil N/L\rceil\geq3$ in this case,
and the odevity of $L, N$ is not required.
 (2) The ambiguity for FROG-PR in \cite{lima} is
 different from  that in Theorem \ref{zhuyaojielunzongjie1}
 since it   additionally  contains  shift and reflection. Consequently, the IFs of  only a few analytic signals can be extracted from
 FROG-PR (\cite[section 4]{lima}).  However,  Proposition  \ref{FGHH} applies to every generic analytic signal.
\end{note}

\begin{note}
Recall again that the PR result  in \cite{JK}
holds for almost all non-vanishing  signals in $\mathbb{C}^{N}$
by polynomial measurements. The complement of the set of almost all non-vanishing signals
has the measure zero.
On the other hand,
it follows from Remark \ref{remark2345} that
for $N\geq2$  the analytic signal in $\mathbb{C}^{N}$ is bandlimited.
 Since the  set of analytic signals has  measure zero,  Theorem \ref{zhuyaojielunzongjie1}\ does not contradicts with
the result in  \cite{JK}. We  will address this a little bit more in the next section.

\end{note}

\section{Preliminary}
A complex number  $0\neq z\in \mathbb{C}$  is traditionally  denoted by  $|z|e^{\textbf{i}\arg(z)}$, where    $\textbf{i}$, $|z|$ and  $\arg(z)$ are   the imaginary unit,   modulus and     phase, respectively.
The   real and imaginary parts, and  conjugation of $z$ are denoted by
$\Re(z)$,$\Im(z)$ and $\bar{z}$, respectively.

The discrete Fourier transform  (DFT) of $\textbf{z} \in \mathbb{C}^N$ is defined  by
$\widehat{\textbf{z}}:=(\widehat{\textbf{z}}_0,\widehat{\textbf{z}}_1,\ldots,\widehat{\textbf{z}}_{N-1})$
such that
$\widehat{\textbf{z}}_{k}= \sum_{n=0}^{N-1}\textbf{z}_{n}e^{-2\pi \textbf{i}kn/N}.$
The inverse discrete Fourier transform (IDFT) admits
the formula  \begin{align}\label{cba} \textbf{z}_{k}= \frac{1}{N}\sum_{n=0}^{N-1}\widehat{\textbf{z}}_{n}e^{2\pi \textbf{i}kn/N}.\end{align}
By the  IDFTs of $ \textbf{z}$ and $\textbf{w}$,   the STFT $\widehat{y}^{\textbf{w}}_{k,m}$
in \eqref{stftmeasuement} can be expressed as
 \begin{equation}\label{stft}
\widehat{y}^{\textbf{w}}_{k,m} =\dfrac{1}{N}\sum_{l=0}^{N-1}\widehat{\textbf{z}}_{k+l}\widehat{\textbf{w}}_l\omega^{lm},
\end{equation}
where $\omega=e^{\frac{2\pi\textbf{i}L}{N}}$.
The following is a   characterization of the  DFT structure for an analytic signal.
\begin{prop}\label{jiegou}
 (c.f. \cite{lima}) Suppose that $\textbf{z}\in \mathbb{C}^N$. Denote the Cartesian product of sets by $\times$. Then $\textbf{z}$ is analytic if and only if the following two items holds:

(i) for even length $N$, $\widehat{\textbf{z}}\in \mathbb{R}\times \mathbb{C}^{\frac{N}{2}-1}\times \mathbb{R}\times \overbrace{\{0\}\times \ldots\times \{0\}}^{(\frac{N}{2}-1)\text{copies}}$;

(ii) for odd length $N$, $\widehat{\textbf{z}}\in \mathbb{R}\times \mathbb{C}^{\frac{N-1}{2}}\times \overbrace{\{0\}\times\ldots\times \{0\}}^{\frac{N-1}{2}\text{copies}}$.
\end{prop}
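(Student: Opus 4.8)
The plan is to derive both implications directly from the defining equations \eqref{up1}--\eqref{uwp1133} of the analytic signal, combined with one structural fact about the DFT of a real vector: if $\textbf{x}\in\mathbb{R}^N$, then $\widehat{\textbf{x}}_{N-k}=\overline{\widehat{\textbf{x}}_k}$ for all $k$ (indices taken mod $N$), and conversely any vector whose DFT satisfies this Hermitian symmetry is real-valued. First I would record this symmetry, which is immediate from $e^{-2\pi\textbf{i}(N-k)n/N}=\overline{e^{-2\pi\textbf{i}kn/N}}$ together with $\textbf{x}_n\in\mathbb{R}$; the converse follows by the same manipulation applied to the IDFT formula \eqref{cba}. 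A consequence I would flag at once is that this symmetry forces $\widehat{\textbf{x}}_0\in\mathbb{R}$, and, when $N$ is even, $\widehat{\textbf{x}}_{N/2}=\sum_n\textbf{x}_n(-1)^n\in\mathbb{R}$, since these are exactly the indices fixed by $k\mapsto N-k$.

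For the forward implication, suppose $\textbf{z}=A(\textbf{x})$ with $\textbf{x}\in\mathbb{R}^N$. Reading off \eqref{up1} in the even case, $(\widehat{A(\textbf{x})})_0=\widehat{\textbf{x}}_0$ and $(\widehat{A(\textbf{x})})_{N/2}=\widehat{\textbf{x}}_{N/2}$ are real by the observation above, the middle entries $(\widehat{A(\textbf{x})})_k=2\widehat{\textbf{x}}_k$ for $1\le k\le N/2-1$ are unconstrained complex numbers, and the upper band $N/2+1\le k\le N-1$ vanishes. This is precisely membership in $\mathbb{R}\times\mathbb{C}^{N/2-1}\times\mathbb{R}\times\{0\}^{N/2-1}$, i.e.\ item (i); the odd case read from \eqref{uwp1133} is identical except that there is no Nyquist slot, yielding item (ii).

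For the converse I would reconstruct a real preimage. Given $\textbf{z}$ whose DFT lies in the prescribed set, define $\textbf{x}$ through its DFT by setting the complex band $\widehat{\textbf{x}}_k=\tfrac12\widehat{\textbf{z}}_k$ for $1\le k\le\lceil N/2\rceil-1$, keeping the DC term $\widehat{\textbf{x}}_0=\widehat{\textbf{z}}_0$ unhalved, keeping (for even $N$) the Nyquist term $\widehat{\textbf{x}}_{N/2}=\widehat{\textbf{z}}_{N/2}$ unhalved, and filling the upper band by conjugate reflection $\widehat{\textbf{x}}_{N-k}=\tfrac12\overline{\widehat{\textbf{z}}_k}$. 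The hypotheses that $\widehat{\textbf{z}}_0$ (and $\widehat{\textbf{z}}_{N/2}$) are real are exactly what make the resulting spectrum conjugate-symmetric at the fixed points $k=0$ (and $k=N/2$), so by the converse symmetry fact $\textbf{x}$ is real-valued; substituting $\textbf{x}$ back into \eqref{up1}--\eqref{uwp1133} then returns the prescribed $\widehat{\textbf{z}}$ entrywise, whence $\textbf{z}=A(\textbf{x})\in\mathbb{C}_A^N$.

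The only genuine subtlety, and the step I would treat most carefully, is the bookkeeping at the two self-conjugate frequencies $k=0$ and (for even $N$) $k=N/2$, the indices fixed by $k\mapsto N-k$. There the Hermitian symmetry of a real spectrum collapses to the single requirement that the coefficient be real, which is exactly why those slots carry $\mathbb{R}$ rather than $\mathbb{C}$; and keeping the halving convention on the complex band consistent with the unhalved DC and Nyquist terms in \eqref{up1} is what simultaneously guarantees that the reconstructed $\textbf{x}$ is real and is an exact preimage, closing the equivalence.
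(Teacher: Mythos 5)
Your proof is correct: the Hermitian-symmetry characterization of real-valued signals ($\widehat{\textbf{x}}_{N-k}=\overline{\widehat{\textbf{x}}_k}$, with reality forced at the fixed points $k=0$ and $k=N/2$), combined with reading off \eqref{up1}--\eqref{uwp1133} in one direction and reconstructing a real preimage by conjugate reflection with the correct halving convention in the other, is exactly the standard argument for this equivalence. Note that the paper itself supplies no proof here --- it states the proposition with a citation to \cite{lima} --- so your self-contained argument is precisely the expected one and fills that gap without any issue.
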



The following gives a characterization of the generic analytic signals.
\begin{prop}\label{ncer}
Let  $\Theta$ be a  set of  generic  analytic signals in $\mathbb{C}^{N}$ such that     any  signal  in     $\Theta$   can be determined by a collection of polynomial  STFT measurements. Meanwhile,   all the
signals in the  complement $\mathbb{C}_{A}^{N}\setminus\Theta$ can   not be determined by these  measurements and they lies in the vanishing locus of a nonzero polynomial  $f$ on $\mathbb{C}^{N}$. Denote $\mathbb{C}_{A}^{N}\setminus\Theta$  by
$\{A(\textbf{x})=\textbf{x}+\textbf{i}H\textbf{x}: \textbf{x}\in \Lambda\subseteq \mathbb{R}^{N}\}$.
Then $\Lambda$
lies in the vanishing locus of a nonzero polynomial $g$ on $\mathbb{R}^{N}$.
\end{prop}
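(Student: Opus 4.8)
The plan is to transfer the vanishing-locus condition from the complex coordinates of $A(\mathbf{x})$ back to the real coordinates $\mathbf{x}\in\mathbb{R}^N$ via the explicit linear map $\mathbf{x}\mapsto \mathbf{x}+\mathbf{i}H\mathbf{x}$ guaranteed by Remark \ref{jiexixinhaodengjiax}. The key observation is that this map is $\mathbb{R}$-linear and its components are real-linear (hence polynomial) in the entries of $\mathbf{x}$: writing $A(\mathbf{x})_j = \mathbf{x}_j + \mathbf{i}(H\mathbf{x})_j$, each real and imaginary part $\Re(A(\mathbf{x})_j)=\mathbf{x}_j$ and $\Im(A(\mathbf{x})_j)=(H\mathbf{x})_j=\sum_{n}H_{jn}\mathbf{x}_n$ is a real-linear form in $(\mathbf{x}_0,\ldots,\mathbf{x}_{N-1})$, where $H$ is the (fixed, real) matrix of the discrete Hilbert transform.

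First I would make explicit the fact that $f$ is a nonzero polynomial on $\mathbb{C}^N$, which for the purposes of a vanishing-locus argument means a polynomial in the $2N$ real variables given by the real and imaginary parts $u_j=\Re(\zeta_j)$, $v_j=\Im(\zeta_j)$ of a point $\bm{\zeta}=(\zeta_0,\ldots,\zeta_{N-1})\in\mathbb{C}^N$. Then I would define $g$ by substituting the above linear expressions: set
\begin{equation}\label{gdef}
g(\mathbf{x}_0,\ldots,\mathbf{x}_{N-1}) := f\bigl(u_0(\mathbf{x}),v_0(\mathbf{x}),\ldots,u_{N-1}(\mathbf{x}),v_{N-1}(\mathbf{x})\bigr),
\end{equation}
where $u_j(\mathbf{x})=\mathbf{x}_j$ and $v_j(\mathbf{x})=(H\mathbf{x})_j$. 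Since the composition of a polynomial with real-linear forms is again a polynomial, $g$ is a genuine polynomial on $\mathbb{R}^N$. By construction, if $\mathbf{x}\in\Lambda$ then $A(\mathbf{x})\in\mathbb{C}_A^N\setminus\Theta\subseteq V(f)$, so $f$ vanishes at the coordinates of $A(\mathbf{x})$ and hence $g(\mathbf{x})=0$; thus $\Lambda\subseteq V(g)$, giving the desired containment once we know $g\not\equiv 0$.

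The main obstacle is precisely showing that $g$ is not the zero polynomial, because $f$ vanishes on the proper subset $\{A(\mathbf{x}):\mathbf{x}\in\mathbb{R}^N\}$ a priori only when restricted to the analytic subspace, and the substitution only samples $f$ along this affine-linear image. The resolution I would pursue is to note, via Proposition \ref{jiegou}, that the real-linear map $\mathbf{x}\mapsto A(\mathbf{x})$ is an $\mathbb{R}$-linear injection of $\mathbb{R}^N$ onto the real subspace $\mathbb{C}_A^N$ of real dimension $N$ (its real-linear inverse being $A(\mathbf{x})\mapsto\Re(A(\mathbf{x}))=\mathbf{x}$). Because $f$ is a nonzero polynomial and the generic analytic signals form a dense subset of $\mathbb{C}_A^N$ while $\mathbb{C}_A^N\setminus\Theta$ lies in $V(f)$, the polynomial $f$ does not vanish identically on $\mathbb{C}_A^N$; I would choose a point $A(\mathbf{x}^*)\in\Theta$ at which $f$ is nonzero, so $g(\mathbf{x}^*)=f(A(\mathbf{x}^*))\neq 0$, proving $g\not\equiv 0$. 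The one subtlety to check carefully is that $f\not\equiv 0$ on $\mathbb{C}_A^N$ really follows from the hypothesis: this is exactly the content of $\Theta=\mathbb{C}_A^N\setminus V(f)$ being nonempty (indeed generic/dense), which is built into the meaning of a generic analytic signal in Definition \ref{kzxcvb}, so the argument closes.
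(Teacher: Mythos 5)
Your proposal is correct and takes essentially the same route as the paper: the paper likewise observes that the discrete Hilbert transform is additive, hence linear, writes $A(\mathbf{x})=(I+\mathbf{i}H_{\mathrm{e}})\mathbf{x}$ for a fixed real matrix $H_{\mathrm{e}}$, and defines $g(\mathbf{x}):=f\bigl((I+\mathbf{i}H_{\mathrm{e}})\mathbf{x}\bigr)$, concluding $\Lambda\subseteq V(g)$ exactly as you do. The one respect in which you go beyond the paper is the verification that $g\not\equiv 0$ --- the paper's proof stops after defining $g$ and never addresses this, even though it is the genuine subtlety (a nonzero polynomial on $\mathbb{C}^{N}$, e.g.\ $f(\mathbf{z})=\widehat{\mathbf{z}}_{N-1}$, can vanish identically on the real $N$-dimensional subspace $\mathbb{C}_{A}^{N}$, making the composition identically zero); your resolution is the right one, but note it requires reading the hypothesis as saying that $\Theta$ is nonempty and disjoint from $V(f)$, which is the intended content of Definition \ref{kzxcvb} yet slightly stronger than the literal assumption $\mathbb{C}_{A}^{N}\setminus\Theta\subseteq V(f)$.
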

\begin{proof}
For any analytic signal $A(\textbf{x})$,
it follows from Remark \ref{jiexixinhaodengjiax} that $H\textbf{x}=\frac{A(\textbf{x})-\textbf{x}}{\textbf{i}}$.
Consequently,
$\widehat{H\textbf{x}}=\frac{\widehat{A(\textbf{x})}-\widehat{\textbf{x}}}{\textbf{i}}$.
By \eqref{up1} and \eqref{uwp1133},
 for any  $\textbf{x}, \textbf{y}\in \mathbb{R}^{N}$ we have
$\widehat{H(\textbf{x}+\textbf{y})}=\widehat{H\textbf{x}}+\widehat{H\textbf{y}}$.
From this and the linearity of IDFT
we have that $H(\textbf{x}+\textbf{y})=H\textbf{x}+H\textbf{y}$. That is, the
discrete Hilbert transform $H$ is linear.
Then there exists a  matrix $H_{\hbox{e}}\in \mathbb{R}^{N\times N}$
such that  $H\textbf{x}=H_{\hbox{e}}\textbf{x}$ for any $\textbf{x}\in \mathbb{R}^{N}$.
Consequently,  \begin{align}\label{JNVBCX}A(\textbf{x})=(I+\textbf{i}H_{\hbox{e}})\textbf{x},
\end{align}
where $I$ is the identity matrix.   For any $\textbf{x}+\textbf{i}H\textbf{x}\in \mathbb{C}_{A}^{N}\setminus\Theta$ it
follows   from \eqref{JNVBCX} that
$f(\textbf{x}+\textbf{i}H\textbf{x})=f((I+\textbf{i}H_{\hbox{e}})\textbf{x})=0.$ By choosing $g(\textbf{x}):=f((I+\textbf{i}H_{\hbox{e}})\textbf{x})$,
the proof is completed.
\end{proof}

\section{Main results}

\subsection{Several lemmas}
This section  starts  with an auxiliary result from \cite[Lemma 3.2]{2017On}.
\begin{lem}\label{L}
Consider  an equation system w.r.t $z\in \mathbb{C}$:
\begin{equation}\label{8790}
\left\{
\begin{aligned}
|z+v_1| &=n_1,\\
|z+v_2| &=n_2,\\
|z+v_3| &=n_3,
\end{aligned}
\right.
\end{equation}
where  $v_1,v_2,v_3\in \mathbb{C}$ are distinct. If there exists a solution $\tilde{z}=a+\textbf{i}b$ to the above system
and  $\Im\big(\dfrac{v_1-v_2}{v_1-v_3}\big)\ne 0$, then it is the unique one. Moreover, it is given by
$$\left(\begin{array}{cccccccccc}a\\
b
\end{array}\right)=\frac{1}{2}\left(\begin{array}{cccccccccc} c &d \\e &f\end{array}\right)^{-1} \left(\begin{array}{cccccccccc}n_1^2-n_2^2+|v_2|^2-|v_1|^2\\n_1^2-n_3^2+|v_3|^2-|v_1|^2\end{array}\right),$$
where $c=\Re(v_1-v_2)$, $d=\Im(v_1-v_2)$, $e=\Re(v_1-v_3)$ and  $f=\Im(v_1-v_3)$.
\end{lem}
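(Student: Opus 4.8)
The plan is to read each modulus equation as a circle in the plane and to linearize the system by subtracting pairs of equations. First I would write $z=a+\textbf{i}b$ and $v_j=p_j+\textbf{i}q_j$ for $j=1,2,3$, and square each equation to get $|z+v_j|^2=(a+p_j)^2+(b+q_j)^2=n_j^2$. Expanding yields $a^2+b^2+2(p_ja+q_jb)+|v_j|^2=n_j^2$ for each $j$. The decisive observation is that all three equations share the same quadratic term $a^2+b^2$, so subtracting the second and the third from the first eliminates it and produces two genuinely linear equations in the real unknowns $a,b$.

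Carrying out the subtraction, the first-minus-second equation reads $2(p_1-p_2)a+2(q_1-q_2)b=n_1^2-n_2^2+|v_2|^2-|v_1|^2$, and first-minus-third reads $2(p_1-p_3)a+2(q_1-q_3)b=n_1^2-n_3^2+|v_3|^2-|v_1|^2$. Recognizing that $p_1-p_2=\Re(v_1-v_2)=c$, $q_1-q_2=\Im(v_1-v_2)=d$, $p_1-p_3=\Re(v_1-v_3)=e$, and $q_1-q_3=\Im(v_1-v_3)=f$, these two equations assemble precisely into the matrix system displayed in the statement. By construction, every solution of the original nonlinear system must satisfy this linear system, so it suffices to analyze the linear system.

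The crux is then to verify that the coefficient matrix is invertible, i.e. that $cf-de\neq0$, and to connect this to the hypothesis on the imaginary part. Computing directly, $\frac{v_1-v_2}{v_1-v_3}=\frac{(c+\textbf{i}d)(e-\textbf{i}f)}{e^2+f^2}$, whose imaginary part equals $\frac{de-cf}{e^2+f^2}$; since $v_1\neq v_3$ the denominator $e^2+f^2=|v_1-v_3|^2$ is strictly positive, so the assumption $\Im\bigl(\frac{v_1-v_2}{v_1-v_3}\bigr)\neq0$ is equivalent to $de-cf\neq0$, which is exactly $\det\begin{pmatrix}c&d\\e&f\end{pmatrix}\neq0$. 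With the matrix invertible, the linear system has a unique solution $(a,b)$, obtained by applying the inverse matrix to the right-hand side, and this is the closed-form expression in the statement. Since the assumed solution $\tilde{z}=a+\textbf{i}b$ satisfies the linear system, it must coincide with this unique pair, establishing both uniqueness and the explicit formula.

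I do not anticipate a serious technical obstacle; the only point requiring care is the logical direction of the argument. I am not claiming that the formula always yields a genuine solution of the three modulus equations—existence of $\tilde z$ is assumed in the hypothesis—but only that any solution is \emph{forced} to equal the formula, which is precisely what uniqueness together with the explicit form asserts. I would close with the geometric remark that the invertibility condition $cf-de\neq0$ says the three circle centers $-v_1,-v_2,-v_3$ are not collinear, equivalently that the ratio $\frac{v_1-v_2}{v_1-v_3}$ is not real, so that the two radical axes obtained by subtraction meet in a single point.
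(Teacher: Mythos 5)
Your proof is correct. The paper itself contains no proof of this lemma---it is imported directly from \cite[Lemma 3.2]{2017On}---and your argument (squaring each modulus equation, subtracting pairs to cancel the common term $a^{2}+b^{2}$, and identifying the nonvanishing of the determinant $cf-de$ with the hypothesis $\Im\bigl(\tfrac{v_1-v_2}{v_1-v_3}\bigr)\neq 0$ via $\Im\bigl(\tfrac{v_1-v_2}{v_1-v_3}\bigr)=\tfrac{de-cf}{e^{2}+f^{2}}$) is exactly the standard proof of that cited result. You also handle the one delicate logical point correctly: the linear system is only a consequence of the nonlinear one, so invertibility gives at most one solution, and the assumed existence of $\tilde z$ then forces it to equal the displayed formula.
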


The following two lemmas will be needed in the proofs of  Theorems \ref{abc} and \ref{999}.

\begin{lem}\label{lemma3.2}
Suppose that   a  rational function  $f(z)=\dfrac{az+b}{cz+d}$, $z=x+\textbf{i}y\in \mathbb{C}$    satisfies the conditions  $ad-bc\ne 0,ac\ne 0$,
$a,b,c,d\in\mathbb{C}$. Then the set $\{(x,y)\in \mathbb{R}^{2}: \Im(f(z))=0, z=x+\textbf{i}y\}$
lies in the vanishing locus of a nonzero polynomial on $\mathbb{R}^{2}$.
\end{lem}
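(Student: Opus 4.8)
The plan is to replace the apparently transcendental condition $\Im(f(z))=0$ by the vanishing of a single real polynomial, obtained by clearing the denominator through multiplication by its conjugate. Writing $z=x+\textbf{i}y$, I would first note that wherever $f$ is defined (that is, off the single point $z=-d/c$),
\[
f(z)=\frac{az+b}{cz+d}=\frac{(az+b)\,\overline{(cz+d)}}{|cz+d|^{2}},
\]
and since $|cz+d|^{2}$ is a strictly positive real number on the domain of $f$, the equivalence $\Im(f(z))=0 \Longleftrightarrow \Im\bigl((az+b)\overline{(cz+d)}\bigr)=0$ holds there. This moves all the content into the imaginary part of the numerator.

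Next I would set $g(x,y):=\Im\bigl((az+b)\overline{(cz+d)}\bigr)$ and expand it using $z=x+\textbf{i}y$, $\bar z=x-\textbf{i}y$, and $z\bar z=x^{2}+y^{2}$. Since $(az+b)\overline{(cz+d)}=a\bar c\,(x^{2}+y^{2})+a\bar d\,z+b\bar c\,\bar z+b\bar d$, its imaginary part $g$ is a real polynomial in $(x,y)$ of degree at most two: the coefficient of $x^{2}+y^{2}$ is $\Im(a\bar c)$, and the linear and constant parts are explicit real combinations of the real and imaginary parts of $a\bar d,\ b\bar c,\ b\bar d$. Combining with the first step, the target set satisfies $\{(x,y):\Im(f(z))=0\}\subseteq V(g)$, since it is precisely $V(g)$ intersected with the domain of $f$.

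The hard (and essentially the only non-formal) step is to guarantee that $g$ is a \emph{nonzero} polynomial; this is where the hypotheses are used, and I would give two arguments. The clean conceptual one uses only $ad-bc\ne 0$: then $f$ is a nonconstant M\"obius transformation, hence a nonconstant holomorphic map, hence open, so its image contains a nonempty open disk and cannot be contained in the real line; therefore $\Im(f)\not\equiv 0$ and so $g\not\equiv 0$. The explicit algebraic one, which is where $ac\ne 0$ enters, runs as follows: if $g\equiv 0$ then every coefficient vanishes, so in particular $\Im(a\bar c)=0$; using $c\ne 0$ this forces $a=\lambda c$ with $\lambda=a/c\in\mathbb{R}$, and then the vanishing of the linear part forces $a\bar d=\bar b c$, whence (again dividing by $c\ne 0$) $b=\lambda d$, so that $ad-bc=\lambda cd-\lambda dc=0$, contradicting $ad-bc\ne 0$. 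Either way $g\not\equiv 0$, which finishes the proof that $\{(x,y):\Im(f(z))=0\}$ lies in the vanishing locus $V(g)$ of a nonzero polynomial on $\mathbb{R}^{2}$. I expect the non-vanishing of $g$ to be the only real obstacle, as the preceding reduction and expansion are routine.
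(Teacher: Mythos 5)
Your proof is correct, and its core computation is exactly the paper's: multiply numerator and denominator by $\overline{cz+d}$, observe that away from the pole $z=-d/c$ the condition $\Im(f(z))=0$ is equivalent to the vanishing of
$g(x,y)=\Im\bigl((az+b)\overline{(cz+d)}\bigr)=\Im(a\bar c)(x^2+y^2)+(\Im(a\bar d)+\Im(b\bar c))x+(\Re(a\bar d)-\Re(b\bar c))y+\Im(b\bar d)$,
and take this quadratic as the desired polynomial. Where you go beyond the paper is the final step. The paper's proof simply defines this polynomial $F(x,y)$ and declares the proof complete; it never verifies that $F$ is a \emph{nonzero} polynomial, which is precisely what the lemma asserts, and consequently it never actually uses the hypotheses $ad-bc\ne 0$ and $ac\ne 0$. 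Your nonvanishing argument fills this gap, and both of your versions are sound: in the coefficient-chasing one, $\Im(a\bar c)=0$ together with $c\ne 0$ gives $a=\lambda c$ with $\lambda\in\mathbb{R}$, the vanishing of the linear part gives $a\bar d=\bar b c$, hence $b=\lambda d$ and then $ad-bc=\lambda cd-\lambda dc=0$, a contradiction; this uses only $c\ne 0$ and $ad-bc\ne 0$. The open-mapping argument is also valid (if $g\equiv 0$ then $\Im(f)\equiv 0$ on the domain of $f$, contradicting openness of a nonconstant M\"obius map) and uses only $ad-bc\ne 0$, so it even shows that the lemma's hypothesis $ac\ne 0$ is stronger than what the conclusion requires. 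In short: same route as the paper, but your write-up is the complete one, since it supplies the one step where the hypotheses genuinely matter.
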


\begin{proof}
Let $z=x+\textbf{i}y$.
We have
 \begin{align} \label{08765} \Im(f(z))&=\Im(\frac{(az+b)(\bar{c}\bar{z}+\bar{d})}{(cz+d)(\bar{c}\bar{z}+\bar{d})})=\Im(\frac{a\bar{c}|z|^2+a\bar{d}z+b\bar{c}\bar{z}+b\bar{d}}{|cz+d|^2}) \nonumber\\
 &=\dfrac{\Im(a\bar{c})(x^2+y^2)+(\Im(a\bar{d})+\Im(b\bar{c}))x+(\Re(a\bar{d})-\Re(b\bar{c}))y+\Im(b\bar{d})}{|cz+d|^2}.\end{align}
 Now the proof can be    completed by choosing the polynomial $F(x,y):=
 \Im(a\bar{c})(x^2+y^2)+(\Im(a\bar{d})+\Im(b\bar{c}))x+(\Re(a\bar{d})-\Re(b\bar{c}))y+\Im(b\bar{d}).$
%
%
\end{proof}

\begin{lem}\label{m}  Let $L$ and $ N$ be  such that $0<L<N$ and $\lceil N/L\rceil\geq3$.
If $0\le m_1,m_2,m_3\le \lceil N/L\rceil-1$ are distinct, then $\Im\big(\dfrac{\omega^{m_1}-\omega^{m_2}}{\omega^{m_1}-\omega^{m_3}}\big)\ne0$ where $\omega=e^{\frac{2\pi\textbf{i}L}{N}}$.
\end{lem}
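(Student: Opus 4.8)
The plan is to read the condition $\Im\big(\frac{\omega^{m_1}-\omega^{m_2}}{\omega^{m_1}-\omega^{m_3}}\big)=0$ geometrically. For three complex numbers $p,q,r$ with $p\ne r$, the quotient $\frac{p-q}{p-r}$ is real precisely when $p,q,r$ are collinear. Since $\omega^{m_1},\omega^{m_2},\omega^{m_3}$ all lie on the unit circle, and a line meets a circle in at most two points, three \emph{distinct} such points can never be collinear. Thus the whole lemma reduces to showing that $\omega^{m_1},\omega^{m_2},\omega^{m_3}$ are pairwise distinct (which, as a byproduct, guarantees $\omega^{m_1}\ne\omega^{m_3}$ so that the displayed expression is even well defined).

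To establish distinctness I would show $\omega^{m_i}\ne\omega^{m_j}$ whenever $m_i\ne m_j$. Since $\omega=e^{2\pi\textbf{i}L/N}$, one has $\omega^{m_i}=\omega^{m_j}$ if and only if $N\mid L(m_i-m_j)$. Writing $d=|m_i-m_j|$, the hypotheses give $1\le d\le \lceil N/L\rceil-1$, and the elementary inequality $\lceil N/L\rceil-1<N/L$ yields $0<Ld<N$. Hence $Ld$ is a positive integer strictly smaller than $N$, so $N\nmid Ld$ and therefore $\omega^{m_i}\ne\omega^{m_j}$. This is exactly where the range restriction $0\le m_1,m_2,m_3\le \lceil N/L\rceil-1$ is used, and I expect it to be the crux of the argument; once $0<Ld<N$ is in hand the conclusion is immediate.

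Alternatively, one can argue by direct computation, which sidesteps the geometric lemma and yields an explicit formula. Setting $\theta_j=2\pi L m_j/N$ and using $e^{\textbf{i}\alpha}-e^{\textbf{i}\beta}=2\textbf{i}\,e^{\textbf{i}(\alpha+\beta)/2}\sin\frac{\alpha-\beta}{2}$, the ratio collapses to $e^{\textbf{i}(\theta_2-\theta_3)/2}\cdot\frac{\sin((\theta_1-\theta_2)/2)}{\sin((\theta_1-\theta_3)/2)}$, i.e. a nonzero real scalar times a unit-modulus factor. The two sines are nonzero by the distinctness established above, and the factor $e^{\textbf{i}(\theta_2-\theta_3)/2}$ fails to be real by the same divisibility analysis applied to $m_2-m_3$, so the imaginary part of the product cannot vanish. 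Either route works; the only real work is the bookkeeping in the divisibility step $0<Ld<N$.
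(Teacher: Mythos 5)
Your proof is correct, and it takes a genuinely different route from the paper's. The paper proceeds by direct computation: it expands the imaginary part of the ratio into the sine combination $-\sin\big(\tfrac{2\pi(m_1-m_3)L}{N}\big)-\sin\big(\tfrac{2\pi(m_2-m_1)L}{N}\big)+\sin\big(\tfrac{2\pi(m_2-m_3)L}{N}\big)$, assumes it vanishes, and via product formulas reduces this to $\cos\big(\tfrac{\pi(m_2-m_3)L}{N}\big)=\cos\big(\tfrac{\pi(2m_1-m_3-m_2)L}{N}\big)$, which forces $m_1=m_2$ or $m_1=m_3$, a contradiction. Your geometric reduction --- a ratio $\tfrac{p-q}{p-r}$ is real iff $p,q,r$ are collinear, and three distinct points of a circle are never collinear --- replaces all of this trigonometry with the single arithmetic fact that $\omega^{m_1},\omega^{m_2},\omega^{m_3}$ are pairwise distinct. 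The crux is the same in both arguments: you need $N\nmid L(m_i-m_j)$, the paper needs $\sin\big(\tfrac{\pi(m_2-m_3)L}{N}\big)\ne 0$, and both rest on the observation that $1\le|m_i-m_j|\le\lceil N/L\rceil-1<N/L$, i.e.\ $0<L|m_i-m_j|<N$. What your route buys: it is shorter, it makes clear that the range restriction on the $m_j$ (together with $L$ being an integer separation parameter) is the real hypothesis while $\lceil N/L\rceil\ge 3$ only guarantees that three distinct indices exist, and it verifies explicitly that the denominator $\omega^{m_1}-\omega^{m_3}$ is nonzero, a point the paper leaves implicit. Your second, computational route via $e^{\textbf{i}\alpha}-e^{\textbf{i}\beta}=2\textbf{i}\,e^{\textbf{i}(\alpha+\beta)/2}\sin\tfrac{\alpha-\beta}{2}$ is also sound and is closer in spirit to the paper's proof, but it organizes the cancellation so that non-reality is read off from the single unimodular factor $e^{\textbf{i}(\theta_2-\theta_3)/2}$ (non-real by the same divisibility fact applied to $m_2-m_3$) multiplied by a nonzero real scalar, rather than from an identity among three sines.
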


\begin{proof}
Since
$$\dfrac{\omega^{m_1}-\omega^{m_2}}{\omega^{m_1}-\omega^{m_3}}=\dfrac{(\omega^{m_1}-\omega^{m_2})(\overline{\omega^{m_1}-\omega^{m_3}})}{|\omega^{m_1}-\omega^{m_3}|^2}=
\dfrac{1-\omega^{m_1-m_3}-\omega^{m_2-m_1}+\omega^{m_2-m_3}}{|\omega^{m_1}-\omega^{m_3}|^2},$$ we get that the condition
$\Im(\frac{\omega^{m_1}-\omega^{m_2}}{\omega^{m_1}-\omega^{m_3}})\ne0$ is equivalent to  the condition
\begin{align}\label{90955}\begin{array}{llll}
\Im(1-\omega^{m_1-m_3}-\omega^{m_2-m_1}+\omega^{m_2-m_3})\\
=-\sin\Big(\frac{2\pi(m_1-m_3)L}{N}\Big)-\sin\Big(\frac{2\pi(m_2-m_1)L}{N}\Big)+\sin\Big(\frac{2\pi(m_2-m_3)L}{N}\Big)\ne0.
\end{array}\end{align}

Assume to the contrary that $\Im(\frac{\omega^{m_1}-\omega^{m_2}}{\omega^{m_1}-\omega^{m_3}}) =0$. Then  we have $$\sin\Big(\frac{2\pi(m_2-m_3)L}{N}\Big)=\sin\Big(\frac{2\pi(m_1-m_3)L}{N}\Big)+\sin\Big(\frac{2\pi(m_2-m_1)L}{N}\Big),$$   which implies that
 $$\sin\Big(\!\frac{\pi(m_2\!-\!m_3)L}{N}\Big)\!\cos\Big(\!\frac{\pi(m_2\!-\!m_3)L}{N}\Big)\!=\!\sin\Big(\!\frac{\pi(m_2\!-\!m_3)L}{N}\Big)\!\cos\Big(\!\frac{\pi(2m_1\!-\!m_3\!-\!m_2)L}{N}\Big).$$
Since $m_2\ne m_3$ and $0\le m_2, m_3\le \lceil N/L\rceil-1$,
 we get that
 $\sin\Big(\frac{\pi(m_2-m_3)L}{N}\Big)\ne 0$ and hence  $$ \cos\Big(\frac{\pi(m_2-m_3)L}{N}\Big)=\cos\Big(\frac{\pi(2m_1-m_3-m_2)L}{N}\Big).$$
This implies that  $\frac{\pi(m_2-m_3)L}{N}=\frac{\pi(2m_1-m_3-m_2)L}{N}$ or $\frac{\pi(m_2-m_3)L}{N}=-\frac{\pi(2m_1-m_3-m_2)L}{N}$. Thus we have  either $m_1=m_2$ or $m_1=m_3$, which leads to  a contradiction. The proof is completed.
\end{proof}

\subsection{The first main result:    window bandlimit $2\le B\le \lceil\frac{N}{2}\rceil$ case}\label{firstmainresult}

Suppose that   the window  $\textbf{w}\in \mathbb{C}^N$ is $B$-bandlimited
such that  $2\le B\le \lceil\frac{N}{2}\rceil$. Consequently,  there exists
$i\in \{0, \ldots ,N-1\}$ such that \begin{align}\label{tiaojian} \widehat{\textbf{w}}_i=\cdots=\widehat{\textbf{w}}_{i+N-B-1}=0, \widehat{\textbf{w}}_{i+N-B}\neq0.\end{align}
For such a  subscript  $i$, we consider the following measurements
\begin{align}\label{fangcheng12} \big|\widehat{y}^{\textbf{w}}_{\lfloor\frac{N}{2}\rfloor-(i+N-B)-n+1,m}\big|=\dfrac{1}{N}\big|\sum_{l=0}^{N-1}\widehat{\textbf{z}}_{\lfloor\frac{N}{2}\rfloor-(i+N-B)-n+1+l}\widehat{\textbf{w}}_l\omega^{lm}\big|, \end{align}
where  $n=1, \ldots,\lfloor\frac{N}{2}\rfloor+1.$

The following is a $2$-bandlimited window in $\mathbb{C}^{48}$
such that \eqref{tiaojian} holds with $i=2.$

\begin{exam}
For $N=48$  we  design  a window $\textbf{w}$
such that its bandlimit   $B=2$.  Its real and imaginary parts are plotted in Figure \ref{zzz} (a) while  the real and imaginary parts of $\widehat{\textbf{w}}$ are plotted in Figure \ref{zzz} (b).
\end{exam}
\begin{figure}[htbp] 
\centering 
\subfigure[$\textbf{w}$]
{\begin{minipage}{7cm} 
\centering
\includegraphics[scale=0.5]{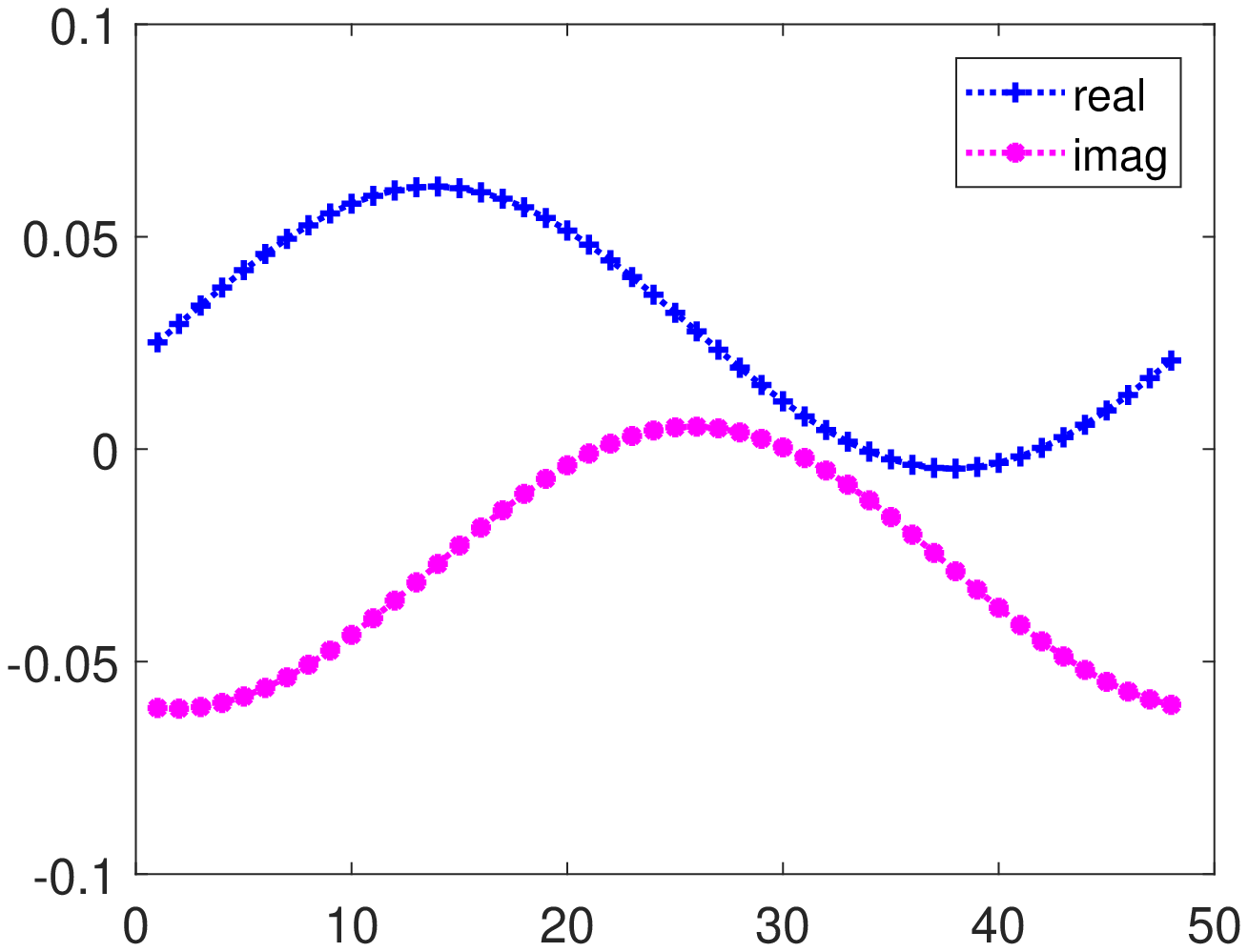}
\end{minipage}}
\subfigure[$\widehat{\textbf{w}}$]
{\begin{minipage}{7cm} 
\centering
\includegraphics[scale=0.5]{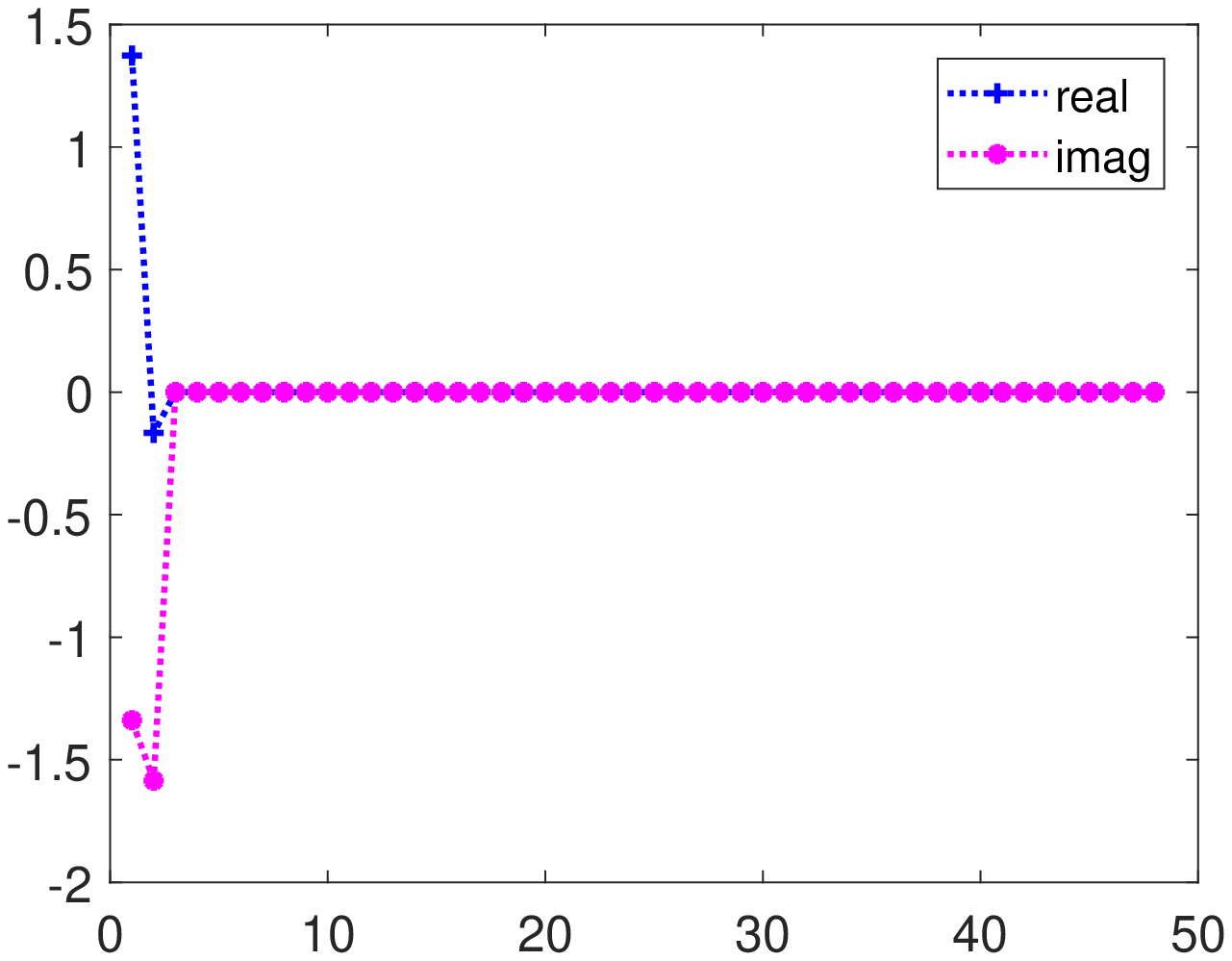}
\end{minipage}}
\caption{(a) The graph of   the real and imaginary parts of  $\textbf{w}$;   (b) The graph of   the real and imaginary parts  of  $\widehat{\textbf{w}}$.}
\label{zzz}
\end{figure}

 The following example on the  the summation in \eqref{fangcheng12}
for  the case $N=6$ and $B=3$ exhibits the  structure of
$|\widehat{y}^{\textbf{w}}_{\lfloor\frac{N}{2}\rfloor-(i+N-B)-n+1,m}|$
for  general $N$ and $B$.

\begin{exam}
Let $N=6,B=\lceil\frac{N}{2}\rceil=3.$ For an analytic signal $\textbf{z}\in \mathbb{C}^{6}$,
it follows from Proposition   \ref{jiegou} (i) that
its   DFT $\widehat{\textbf{z}}=(\widehat{\textbf{z}}_0,\widehat{\textbf{z}}_1,\widehat{\textbf{z}}_2,\widehat{\textbf{z}}_3, 0,  0)$.
Choose a $3$-bandlimited window $\textbf{w}\in \mathbb{C}^{6}$ such that
$\widehat{\textbf{w}}=(\widehat{\textbf{w}}_0,\widehat{\textbf{w}}_1,\widehat{\textbf{w}}_2,0,0,0)$
 and correspondingly  $i=3$ in \eqref{tiaojian}.
 For $n=1, \ldots, 4$, $|\widehat{y}_{4-n,0}^{\textbf{w}}|$ in \eqref{fangcheng12} are expressed as:
$|\widehat{y}_{3,0}^{\textbf{w}}|=\frac{1}{6}|\widehat{\textbf{z}}_{3}\widehat{\textbf{w}}_{0}|$,
$|\widehat{y}_{2,0}^{\textbf{w}}|=\frac{1}{6}|\widehat{\textbf{z}}_{2}\widehat{\textbf{w}}_{0}+\widehat{\textbf{z}}_3\widehat{\textbf{w}}_{1}|$,
$|\widehat{y}_{1,0}^{\textbf{w}}|=\frac{1}{6}|\widehat{\textbf{z}}_{1}\widehat{\textbf{w}}_{0}+\widehat{\textbf{z}}_{2}\widehat{\textbf{w}}_{1}+\widehat{\textbf{z}}_3\widehat{\textbf{w}}_{2}|$
and
$|\widehat{y}_{0,0}^{\textbf{w}}|=\frac{1}{6}|\widehat{\textbf{z}}_{0}\widehat{\textbf{w}}_{0}+\widehat{\textbf{z}}_1\widehat{\textbf{w}}_{1}+\widehat{\textbf{z}}_{2}\widehat{\textbf{w}}_{2}|$.
The terms $\widehat{\textbf{z}}_{k}\widehat{\textbf{w}}_{l}$  on which  $\widehat{y}_{4-n,0}^{\textbf{w}}$
is   dependent  are arranged  as follows,
\begin{align}\label{JHRLL}
\begin{matrix}
n=1&&&&&\widehat{\textbf{z}}_3\widehat{\textbf{w}}_0  \\
n=2&&&&\widehat{\textbf{z}}_2\widehat{\textbf{w}}_0&\widehat{\textbf{z}}_3\widehat{\textbf{w}}_1 \\
n=3&&&\widehat{\textbf{z}}_1\widehat{\textbf{w}}_0&\widehat{\textbf{z}}_2\widehat{\textbf{w}}_1&\widehat{\textbf{z}}_3\widehat{\textbf{w}}_2\\
n=4&&\widehat{\textbf{z}}_0\widehat{\textbf{w}}_0&\widehat{\textbf{z}}_{1}\widehat{\textbf{w}}_1&\widehat{\textbf{z}}_2\widehat{\textbf{w}}_2 \\
\end{matrix}
\end{align}
Based on  \eqref{JHRLL}  the terms $\widehat{\textbf{z}}_{k}\widehat{\textbf{w}}_{l}\omega^{lm}$ for  $\widehat{y}_{4-n,m}^{\textbf{w}}$
can be arranged similarly.
\end{exam}

 For the general case, similar to \eqref{JHRLL},
it follows from $2\le B\le \lceil\frac{N}{2}\rceil$, \eqref{tiaojian} and Proposition   \ref{jiegou}
that
the terms $\widehat{\textbf{z}}_{k}\widehat{\textbf{w}}_{l}$ on which  $|\widehat{y}^{\textbf{w}}_{\lfloor\frac{N}{2}\rfloor-(i+N-B)-n+1,0}|$ in \eqref{fangcheng12}
is   dependent are arranged  as follows,
\begin{small}\begin{align}\label{1stjiegou}
\begin{matrix}
&&&&&\widehat{\textbf{z}}_{\lfloor\!\frac{N}{2}\!\rfloor}\widehat{\textbf{w}}_{i\!+\!N\!-\!B}  \\
&&&&\widehat{\textbf{z}}_{\lfloor\!\frac{N}{2}\!\rfloor\!-\!1}\widehat{\textbf{w}}_{i\!+\!N\!-\!B}&\widehat{\textbf{z}}_{\lfloor\!\frac{N}{2}\!\rfloor}\widehat{\textbf{w}}_{i\!+\!N\!-\!B\!+\!1} \\
&&&\begin{rotate}{60}$\ddots$\end{rotate}&&\vdots\\
&&\widehat{\textbf{z}}_{\lfloor\!\frac{N}{2}\!\rfloor\!+\!1\!-\!B}\widehat{\textbf{w}}_{i\!+\!N\!-\!B}&\widehat{\textbf{z}}_{\lfloor\!\frac{N}{2}\!\rfloor\!+\!2\!-\!B}\widehat{\textbf{w}}_{i\!+\!N\!-\!B\!+\!1}&\cdots&\widehat{\textbf{z}}_{\lfloor\!\frac{N}{2}\!\rfloor}\widehat{\textbf{w}}_{i\!+\!N\!-\!1}\\
&\begin{rotate}{60}$\ddots$\end{rotate}&&&\begin{rotate}{60}$\ddots$\end{rotate}\\
\widehat{\textbf{z}}_0\widehat{\textbf{w}}_{i\!+\!N\!-\!B}&\widehat{\textbf{z}}_{1}\widehat{\textbf{w}}_{i\!+\!N\!-\!B\!+\!1}&\cdots&\widehat{\textbf{z}}_{B\!-\!1}\widehat{\textbf{w}}_{i\!+\!N\!-\!1} \\
\end{matrix}
\end{align}\end{small}
For $n=1$, as implied on the first row of \eqref{1stjiegou} the corresponding  measurement $|\widehat{y}^{\textbf{w}}_{\lfloor\frac{N}{2}\rfloor-(i+N-B),0}|$
is involved with only the term $\widehat{\textbf{z}}_{\lfloor\frac{N}{2}\rfloor}\widehat{\textbf{w}}_{i+N-B}$.
An observation on \eqref{fangcheng12} gives us that,
for $\big|\widehat{y}^{\textbf{w}}_{\lfloor\frac{N}{2}\rfloor-(i+N-B)-n+1,m}\big|$
the related terms  $\widehat{\textbf{z}}_{k}\widehat{\textbf{w}}_{l}\omega^{lm}$
are arranged as in \eqref{1stjiegou}. The following is on the determination of $\widehat{\textbf{z}}_{\lfloor\frac{N}{2}\rfloor}$.

\begin{lem}\label{theorem 1}
Suppose that the window $\textbf{w}\in \mathbb{C}^N$ is  $B$-bandlimited  such that
$2\le B\le \lceil\frac{N}{2}\rceil$. Consequently,  there exists
$i\in \{0, \ldots ,N-1\}$ such that $\widehat{\textbf{w}}_i=\cdots=\widehat{\textbf{w}}_{i+N-B-1}=0$
and $\widehat{\textbf{w}}_{i+N-B}\neq0.$
Then for any  analytic signal $\textbf{z}\in \mathbb{C}^N$ with DFT $\widehat{\textbf{z}}=(\widehat{\textbf{z}}_0,\widehat{\textbf{z}}_1,\ldots,\widehat{\textbf{z}}_{\lfloor\frac{N}{2}\rfloor},0,\ldots,0)$, we have
the following:

\textbf{Case I}: If  \textbf{$N$ is even},  then the component  $\widehat{\textbf{z}}_{\lfloor\frac{N}{2}\rfloor}$ can be determined (up to a sign) by the measurement $|\widehat{y}^{\textbf{w}}_{\lfloor\frac{N}{2}\rfloor-(i+N-B),0}|.$

\textbf{Case II}:  If  \textbf{$N$ is odd}, then the component $\widehat{\textbf{z}}_{\lfloor\frac{N}{2}\rfloor}$ can be determined (up to a unimodular scalar) by
$|\widehat{y}^{\textbf{w}}_{\lfloor\frac{N}{2}\rfloor-(i+N-B),0}|.$
\end{lem}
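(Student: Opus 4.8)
The plan is to isolate the single measurement corresponding to $n=1$ and $m=0$, namely $|\widehat{y}^{\textbf{w}}_{\lfloor\frac{N}{2}\rfloor-(i+N-B),0}|$, and to show that exactly one product term survives in the defining sum \eqref{stft}. Writing $k_1=\lfloor\frac{N}{2}\rfloor-(i+N-B)$, I would start from
$$\widehat{y}^{\textbf{w}}_{k_1,0}=\frac{1}{N}\sum_{l=0}^{N-1}\widehat{\textbf{z}}_{k_1+l}\,\widehat{\textbf{w}}_l$$
and invoke the zero pattern \eqref{tiaojian}: since $\widehat{\textbf{w}}_l=0$ for $l=i,\ldots,i+N-B-1$, the only window indices that can contribute are $l\in\{i+N-B,\ldots,i+N-1\}$, i.e. the $B$ surviving frequencies of $\textbf{w}$.

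Next I would track the signal indices $k_1+l$ over these surviving $l$. As $l$ runs through $i+N-B,\ldots,i+N-1$, the shifted index $k_1+l$ runs through $\lfloor\frac{N}{2}\rfloor,\lfloor\frac{N}{2}\rfloor+1,\ldots,\lfloor\frac{N}{2}\rfloor+B-1$. Using $B\le\lceil\frac{N}{2}\rceil$ one checks $\lfloor\frac{N}{2}\rfloor+B-1\le\lfloor\frac{N}{2}\rfloor+\lceil\frac{N}{2}\rceil-1=N-1$, so no wraparound modulo $N$ occurs, and the smallest such index is exactly $\lfloor\frac{N}{2}\rfloor\ge 0$. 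By Proposition \ref{jiegou} the analytic signal satisfies $\widehat{\textbf{z}}_k=0$ for $\lfloor\frac{N}{2}\rfloor<k\le N-1$, so every term with $l>i+N-B$ vanishes and the sum collapses to the single surviving term $\frac{1}{N}\widehat{\textbf{z}}_{\lfloor\frac{N}{2}\rfloor}\widehat{\textbf{w}}_{i+N-B}$ — precisely the lone entry on the first row of the array \eqref{1stjiegou}. Taking moduli and dividing by the known nonzero quantity $\frac{1}{N}|\widehat{\textbf{w}}_{i+N-B}|$ yields $|\widehat{\textbf{z}}_{\lfloor\frac{N}{2}\rfloor}|$.

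Finally I would split according to the parity of $N$, again using Proposition \ref{jiegou}. For even $N$, part (i) forces $\widehat{\textbf{z}}_{N/2}\in\mathbb{R}$, so knowing its modulus pins it down up to the sign $\pm$, which is Case I. For odd $N$, part (ii) places $\widehat{\textbf{z}}_{(N-1)/2}$ in the free complex block, so its modulus determines it only up to a unimodular factor $e^{\textbf{i}\theta}$, which is Case II. I expect the only point requiring genuine care to be the index bookkeeping that guarantees the single-term collapse, that is, the no-wraparound verification using $B\le\lceil\frac{N}{2}\rceil$ together with the frequency support of $\widehat{\textbf{z}}$; once this is established, the real-versus-complex dichotomy of Proposition \ref{jiegou} delivers both cases immediately.
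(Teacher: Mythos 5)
Your proposal is correct and follows essentially the same route as the paper: the measurement at $(\lfloor\frac{N}{2}\rfloor-(i+N-B),0)$ collapses to the single term $\frac{1}{N}\widehat{\textbf{z}}_{\lfloor\frac{N}{2}\rfloor}\widehat{\textbf{w}}_{i+N-B}$, and Proposition \ref{jiegou} then gives determination up to a sign (even $N$, real component) or a unimodular scalar (odd $N$). In fact you spell out the one step the paper leaves implicit, namely the no-wraparound index check $\lfloor\frac{N}{2}\rfloor+B-1\le N-1$ that justifies the collapse via the vanishing high frequencies of $\widehat{\textbf{z}}$.
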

\begin{proof}
It follows from \eqref{tiaojian} and \eqref{fangcheng12} that
$|\widehat{y}^{\textbf{w}}_{\lfloor\frac{N}{2}\rfloor-(i+N-B),0}|=\frac{1}{N}|\widehat{\textbf{z}}_{\lfloor\frac{N}{2}\rfloor}\widehat{\textbf{w}}_{i+N-B}|$.
Then $\widehat{\textbf{z}}_{\lfloor\frac{N}{2}\rfloor}=\frac{N|\widehat{y}^{\textbf{w}}_{\lfloor\frac{N}{2}\rfloor-(i+N-B),0}|}{|\widehat{\textbf{w}}_{i+N-B}|}e^{\textbf{i}\theta_0}$.
The proof for the odd case is completed.
By Proposition \ref{jiegou} for $N$ being even  we have $\widehat{\textbf{z}}_{\frac{N}{2}}\in \mathbb{R}$.
Then  $\widehat{\textbf{z}}_{\frac{N}{2}}=\epsilon\frac{N|\widehat{y}^{\textbf{w}}_{\frac{N}{2}-(i+N-B),0}|}{|\widehat{\textbf{w}}_{i+N-B}|}$ with $\epsilon\in\{1, -1\}$.
This completes the proof for the even case.
\end{proof}

Now it is ready to  establish the first main theorem.
\begin{theo}\label{abc}
Suppose that the window  $\textbf{w}\in \mathbb{C}^N$ is  $B$-bandlimited   such that $2\le B\le \lceil\frac{N}{2}\rceil$ and
there exists
$i\in \{0, \ldots ,N-1\}$ such that \eqref{tiaojian} holds and $\widehat{\textbf{w}}_{i+N-B+1}\neq0$.
Moreover,  we assume that the STFT separation parameter $0<L<N$ satisfies  $\lceil N/L\rceil\geq3$,
and choose any three distinct   numbers $m_{1}, m_{2}, m_{3}$ from   $\{0, 1, \ldots, \lceil N/L\rceil-1\}$.
Then   any  generic   analytic signal $\textbf{z}\in \mathbb{C}^N$ can be determined, up to a global sign, by  its $(3\lfloor\frac{N}{2}\rfloor+1)$ number of STFT measurements
\begin{align} \label{diyizhongleixingceliang} \big\{|\widehat{y}^{\textbf{w}}_{\lfloor\!\frac{N}{2}\!\rfloor-(i+N-B),0}|,|\widehat{y}^{\textbf{w}}_{k-(i+N-B),m_{j}}|: k=0,
\ldots, \lfloor\frac{N}{2}\rfloor-1,
j=1,2,3\big\}.\end{align}
%
%
%
%
\end{theo}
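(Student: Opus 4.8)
The plan is to recover the Fourier vector $\widehat{\textbf{z}}=(\widehat{\textbf{z}}_0,\ldots,\widehat{\textbf{z}}_{\lfloor N/2\rfloor},0,\ldots,0)$ (Proposition \ref{jiegou}) one entry at a time, from the top index $\lfloor N/2\rfloor$ downward to $0$, and then invert the DFT via \eqref{cba}. The entry $\widehat{\textbf{z}}_{\lfloor N/2\rfloor}$ is supplied by Lemma \ref{theorem 1}: the single measurement $|\widehat{y}^{\textbf{w}}_{\lfloor N/2\rfloor-(i+N-B),0}|$ fixes it up to a sign when $N$ is even and up to a unimodular factor when $N$ is odd. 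For a lower index $k$ I would expand the measurement \eqref{fangcheng12} at frequency $k-(i+N-B)$ and separation $m$: only the window positions $l=i+N-B+p$ with $0\le p\le B-1$ contribute, and among these only those with $k+p\le\lfloor N/2\rfloor$ survive because $\widehat{\textbf{z}}$ vanishes above $\lfloor N/2\rfloor$. Isolating the $p=0$ term and dividing by $\widehat{\textbf{w}}_{i+N-B}\omega^{(i+N-B)m}$ converts the data into $|\widehat{\textbf{z}}_k+v_m|=n_m$, where $v_m=\sum_{p\ge1}\frac{\widehat{\textbf{z}}_{k+p}\widehat{\textbf{w}}_{i+N-B+p}}{\widehat{\textbf{w}}_{i+N-B}}\omega^{pm}$ involves only the already--recovered entries and $n_m=N|\widehat{y}^{\textbf{w}}_{k-(i+N-B),m}|/|\widehat{\textbf{w}}_{i+N-B}|$ is read from the measurements, exactly matching the layout in \eqref{1stjiegou}.

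Evaluating this at the three separations $m\in\{m_1,m_2,m_3\}$ yields precisely the system of Lemma \ref{L} in the single unknown $\widehat{\textbf{z}}_k$; a solution exists because the true entry is one, so as soon as $v_{m_1},v_{m_2},v_{m_3}$ are distinct and $\Im\big(\frac{v_{m_1}-v_{m_2}}{v_{m_1}-v_{m_3}}\big)\ne0$, Lemma \ref{L} singles out $\widehat{\textbf{z}}_k$ uniquely. This fuels a downward induction on $k$. At the first step $k=\lfloor N/2\rfloor-1$ one has, since $B\ge2$, the single term $v_m=\frac{\widehat{\textbf{z}}_{\lfloor N/2\rfloor}\widehat{\textbf{w}}_{i+N-B+1}}{\widehat{\textbf{w}}_{i+N-B}}\omega^m$, whence $\frac{v_{m_1}-v_{m_2}}{v_{m_1}-v_{m_3}}=\frac{\omega^{m_1}-\omega^{m_2}}{\omega^{m_1}-\omega^{m_3}}$ has nonzero imaginary part by Lemma \ref{m}; this is exactly where the added hypothesis $\widehat{\textbf{w}}_{i+N-B+1}\ne0$ and the genericity $\widehat{\textbf{z}}_{\lfloor N/2\rfloor}\ne0$ are used.

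For the sign ambiguity I would phrase the conclusion as uniqueness. Suppose analytic signals $\textbf{z}$ and $\textbf{z}'$ produce the same data \eqref{diyizhongleixingceliang}. The top measurement gives $|\widehat{\textbf{z}}'_{\lfloor N/2\rfloor}|=|\widehat{\textbf{z}}_{\lfloor N/2\rfloor}|$, so $\widehat{\textbf{z}}'_{\lfloor N/2\rfloor}=e^{\textbf{i}\gamma}\widehat{\textbf{z}}_{\lfloor N/2\rfloor}$. Since a global phase leaves every STFT magnitude and the location of the high--frequency zeros untouched, I may replace $\textbf{z}'$ by $e^{-\textbf{i}\gamma}\textbf{z}'$ to align the top entries; the recursion above uses only magnitudes and the vanishing of $\widehat{\textbf{z}}$ beyond $\lfloor N/2\rfloor$, not the reality of $\widehat{\textbf{z}}_0$, so it still applies and forces $e^{-\textbf{i}\gamma}\textbf{z}'=\textbf{z}$, i.e. $\textbf{z}'=e^{\textbf{i}\gamma}\textbf{z}$. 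Analyticity of $\textbf{z}'$ then requires $e^{\textbf{i}\gamma}\widehat{\textbf{z}}_0\in\mathbb{R}$; as $\widehat{\textbf{z}}_0$ is real by Proposition \ref{jiegou} and generically nonzero, $e^{\textbf{i}\gamma}=\pm1$ and hence $\textbf{z}'=\pm\textbf{z}$. When $N$ is even the factor is a sign from the start because $\widehat{\textbf{z}}_{N/2}\in\mathbb{R}$.

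The main obstacle is the genericity bookkeeping needed to place all exceptional signals inside a single vanishing locus, as demanded by Definition \ref{kzxcvb} and Proposition \ref{ncer}. Distinctness is easy, since each $v_{m_1}-v_{m_2}$ is a nonzero linear form in $\widehat{\textbf{z}}$. The delicate requirement is $\Im\big(\frac{v_{m_1}-v_{m_2}}{v_{m_1}-v_{m_3}}\big)\ne0$ at a step $k$ with several surviving terms. Here I would freeze $\widehat{\textbf{z}}_{k+2},\ldots$ and regard $\frac{v_{m_1}-v_{m_2}}{v_{m_1}-v_{m_3}}$ as a Möbius function of the leading unknown $\widehat{\textbf{z}}_{k+1}$; the coefficients of $\widehat{\textbf{z}}_{k+1}$ in $v_{m_1}-v_{m_2}$ and in $v_{m_1}-v_{m_3}$ are $\frac{\widehat{\textbf{w}}_{i+N-B+1}}{\widehat{\textbf{w}}_{i+N-B}}(\omega^{m_1}-\omega^{m_2})$ and $\frac{\widehat{\textbf{w}}_{i+N-B+1}}{\widehat{\textbf{w}}_{i+N-B}}(\omega^{m_1}-\omega^{m_3})$, both nonzero, so the hypothesis $ac\ne0$ of Lemma \ref{lemma3.2} holds automatically. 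If this map is genuinely non-constant, Lemma \ref{lemma3.2} confines the failing $\widehat{\textbf{z}}_{k+1}$ to the zero set of a nonzero polynomial; if it degenerates to a constant, that constant equals $\frac{\omega^{m_1}-\omega^{m_2}}{\omega^{m_1}-\omega^{m_3}}$, which Lemma \ref{m} keeps off the real axis, so no $\widehat{\textbf{z}}_{k+1}$ fails. Consequently each non-degeneracy requirement, together with $\widehat{\textbf{z}}_{\lfloor N/2\rfloor}\ne0$ and $\widehat{\textbf{z}}_0\ne0$, is the complement of the zero set of a nonzero polynomial in the real parameter $\textbf{x}$ with $\textbf{z}=A(\textbf{x})$; their product is again a nonzero polynomial, and Proposition \ref{ncer} places every non-recoverable analytic signal in its vanishing locus, which is exactly the assertion.
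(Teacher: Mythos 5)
Your proposal is correct and follows essentially the same route as the paper's own proof: the top coefficient $\widehat{\textbf{z}}_{\lfloor N/2\rfloor}$ comes from Lemma \ref{theorem 1}, each lower coefficient comes from the three measurements at $m_1,m_2,m_3$ via the three-circle system of Lemma \ref{L}, the non-degeneracy condition $\Im\big(\frac{v_{m_1}-v_{m_2}}{v_{m_1}-v_{m_3}}\big)\neq 0$ is supplied by Lemma \ref{m} at the first step and by Lemma \ref{lemma3.2} plus genericity afterwards, and the residual unimodular ambiguity is reduced to a sign using the reality (and generic nonvanishing) of $\widehat{\textbf{z}}_0$, exactly as in the paper's odd-$N$ argument. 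One detail where you are in fact more careful than the paper: when the M\"obius map degenerates, i.e. $ad-bc=0$ (which happens identically for $B=2$, since then $b=d=0$), the paper's assertion that $ad-bc\neq 0$ holds generically breaks down, whereas your observation that the ratio is then the constant $\frac{\omega^{m_1}-\omega^{m_2}}{\omega^{m_1}-\omega^{m_3}}$, kept off the real axis by Lemma \ref{m}, closes that case.
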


\begin{proof}  We mainly prove for the case when $N$ is even since the proof for the odd $N$ case is very similar. We will complete it by induction.
By Lemma  \ref{theorem 1}, the component $\widehat{\textbf{z}}_{\frac{N}{2}}$
can be determined up to  a  sign  by the measurement $|\widehat{y}^{\textbf{w}}_{\frac{N}{2}-(i+N-B),0}|.$
Denote such  a determination result by   $\epsilon\widehat{\textbf{z}}_{\frac{N}{2}}$ with $\epsilon\in \{1,-1\}$.
In what follows, we discuss how to recover other components $\widehat{\textbf{z}}_{0}, \ldots,
 \widehat{\textbf{z}}_{\frac{N}{2}-1}$.

We first address the recovery of  $\widehat{\textbf{z}}_{\frac{N}{2}-1}$ by the
STFT measurements $\{|\widehat{y}^{\textbf{w}}_{\frac{N}{2}-1-(i+N-B),m_j}|: j=1,2, 3\} $.
Consider the equation system
w.r.t $\widehat{\mathring{\textbf{z}}}_{\frac{N}{2}-1}$:
\begin{align}\label{222new1}
\begin{array}{ll}|\widehat{y}^{\textbf{w}}_{\frac{N}{2}-1-(i+N-B),m_j}|&=\frac{1}{N}\big|\widehat{\mathring{\textbf{z}}}_{\frac{N}{2}-1}\widehat{\textbf{w}}_{i+N-B}\omega^{(i+N-B)m_j}+\epsilon\widehat{\textbf{z}}_{\frac{N}{2}}\widehat{\textbf{w}}_{i+N-B+1}\omega^{(i+N-B+1)m_j}\big|,\\
&\quad\quad\quad\quad j=1,2,3.
\end{array}
\end{align}
Note that \eqref{222new1}
 is equivalent to
\begin{align}\label{222new}
\frac{N|\widehat{y}^{\textbf{w}}_{\frac{N}{2}-1-(i+N-B),m_j}|}{|\widehat{\textbf{w}}_{i+N-B}\omega^{(i+N-B)m_j}|} =\big|\widehat{\mathring{\textbf{z}}}_{\frac{N}{2}-1}+v_{j,\frac{N}{2}-1}\big|, j=1,2, 3,
\end{align}
where
\begin{equation}\label{2558}
v_{j,\frac{N}{2}-1}:=\frac{\epsilon\widehat{\textbf{z}}_{\frac{N}{2}}\widehat{\textbf{w}}_{i+N-B+1}\omega^{(i+N-B+1)m_j}}{\widehat{\textbf{w}}_{i+N-B}\omega^{(i+N-B)m_j}}.
\end{equation}
For the generic analytic signal $\textbf{z}$, we have  $\widehat{\textbf{z}}_{\frac{N}{2}}\neq0$.
Therefore, for $v_{j,\frac{N}{2}-1}$ in \eqref{2558} we have
\begin{align}\label{KKKK234}\frac{v_{1,\frac{N}{2}-1}-v_{2,\frac{N}{2}-1}}{v_{1,\frac{N}{2}-1}-v_{3,\frac{N}{2}-1}}=\frac{\omega^{m_1}-\omega^{m_2}}{\omega^{m_1}-\omega^{m_3}}.\end{align}
By \eqref{KKKK234} and  Lemma \ref{m}  we have
$\Im\Big(\dfrac{v_{1,\frac{N}{2}-1}-v_{2,\frac{N}{2}-1}}{v_{1,\frac{N}{2}-1}-v_{3,\frac{N}{2}-1}}\Big)\ne0$.
Then  it follows from  Lemma \ref{L} that  there exists  a unique solution to the equation system \eqref{222new} w.r.t $\widehat{\mathring{\textbf{z}}}_{\frac{N}{2}-1}$.
Clearly, $\epsilon\widehat{\textbf{z}}_{\frac{N}{2}-1}$ is a solution.
Then it is the unique one. In what follows, we address how to    recover
the other components $\widehat{\textbf{z}}_{\frac{N}{2}-2}, \ldots, \widehat{\textbf{z}}_{0}$.
Suppose that $\epsilon\widehat{\textbf{z}}_{k}$ has been obtained  for any  $k\in \{\frac{N}{2}, \frac{N}{2}-1, \ldots, k_{0}\}$ with $k_{0}\in \{\frac{N}{2}, \frac{N}{2}-1, \ldots, 1\}$
 by   the measurements
\begin{align}
\notag\big\{|\widehat{y}^{\textbf{w}}_{\frac{N}{2}-(i+N-B),0}|,|\widehat{y}^{\textbf{w}}_{\ell-(i+N-B),m_{j}}|: \ell=\frac{N}{2}-1,\ldots,k_{0},
j=1,2,3\big\}.
\end{align}
Now we discuss how to  recover $\widehat{\textbf{z}}_{k_0-1}$.
Consider the equation system w.r.t $\widehat{\mathring{\textbf{z}}}_{k_0-1}$:
\begin{equation}
\begin{array}{lll}
\label{886}|\widehat{y}^{\textbf{w}}_{k_0-1-(i+N-B),m_j}|
 &=\frac{1}{N}\big|\widehat{\mathring{\textbf{z}}}_{k_0-1}\widehat{\textbf{w}}_{i+N-B}\omega^{(i+N-B)m_j}\\ &\quad+ \sum\limits_{l=1}^{\frac{N}{2}-k_{0}+1}\epsilon\widehat{\textbf{z}}_{k_{0}-1+l}\widehat{\textbf{w}}_{i+N-B+l}\omega^{(i+N-B+l)m_j} \big|,
 j=1,2,3.
\end{array}
\end{equation}
Note that \eqref{886} is equivalent to
\begin{equation} \label{other01}
\frac{N|\widehat{y}^{\textbf{w}}_{k_0-1-(i+N-B),m_j}|}{|\widehat{\textbf{w}}_{i+N-B}\omega^{(i+N-B)m_j}|}=\big|\widehat{\mathring{\textbf{z}}}_{k_0-1}+v_{j,k_0-1} \big|, j=1,2, 3,\end{equation}
where
\begin{align}\label{45582}\begin{array}{lll}
 v_{j,k_{0}-1}:=\frac{\displaystyle \epsilon\widehat{\textbf{z}}_{k_{0}}\widehat{\textbf{w}}_{i+N-B+1}\omega^{(i+N-B+1)m_j}+\sum\limits_{l=2}^{\frac{N}{2}-k_{0}+1}\epsilon\widehat{\textbf{z}}_{k_{0}-1+l}\widehat{\textbf{w}}_{i+N-B+l}\omega^{(i+N-B+l)m_j}}{\widehat{\textbf{w}}_{i+N-B}\omega^{(i+N-B)m_j}}.
\end{array}
\end{align}
%
%
Motivated by Lemma \ref{L},
define
\begin{align}
\notag f(\widehat{\textbf{z}}_{k_{0}})&:=\frac{v_{1,k_{0}-1}-v_{2,k_{0}-1}}{v_{1,k_{0}-1}-v_{3,k_{0}-1}}\\
&=\dfrac{a\widehat{\textbf{z}}_{k_{0}}+b}{c\widehat{\textbf{z}}_{k_{0}}+d},
\end{align}
where
\begin{align}\label{up1133}
\left\{\begin{array}{lll}
a=\epsilon\widehat{\textbf{w}}_{i+N-B+1}(\omega^{m_1}-\omega^{m_2}),\\
b=\sum\limits_{l=2}^{\frac{N}{2}-k_{0}+1}\epsilon\widehat{\textbf{z}}_{k_{0}-1+l}\widehat{\textbf{w}}_{i+N-B+l}(\omega^{lm_1}-\omega^{lm_2}),\\
c=\epsilon\widehat{\textbf{w}}_{i+N-B+1}(\omega^{m_1}-\omega^{m_3}),\\
d=\sum\limits_{l=2}^{\frac{N}{2}-k_{0}+1}\epsilon\widehat{\textbf{z}}_{k_{0}-1+l}\widehat{\textbf{w}}_{i+N-B+l}(\omega^{lm_1}-\omega^{lm_3}).
\end{array}\right.
\end{align}
Recall that $\widehat{\textbf{w}}_{_{i+N-B+1}}\neq0$
and $m_{1}, m_{2}, m_{3}$ are distinct.
Then  $ac\neq0$. For the generic analytic signal $\textbf{z}$, we have $ad-bc\neq0.$
That is,    $f(\widehat{\textbf{z}}_{k_{0}})$
meets  the requirements  in  Lemma \ref{lemma3.2}.
Then  $\Im[f(\widehat{\textbf{z}}_{k_{0}})]\neq0$.
Therefore, by Lemma \ref{L} the component  $\epsilon\widehat{\textbf{z}}_{k_{0}-1}$
 can be determined by the equation system \eqref{other01}. Through  the induction procedures, the proof can be completed.

 For $N$ being odd, as in the even case the recovery  starts with $\widehat{\textbf{z}}_{\frac{N-1}{2}}$.
Suppose that what we get is
$\widehat{\textbf{z}}_{\frac{N-1}{2}}e^{\textbf{i}\widehat{\theta}_{0}}$.
 Through the similar recursive procedures as in \eqref{other01}, what we get is
 $e^{\textbf{i}\widehat{\theta}_{0}}(\widehat{\textbf{z}}_{0}, \ldots,
 \widehat{\textbf{z}}_{\frac{N-1}{2}})$.
 Recall that  $\widehat{\textbf{z}}_{0}$
is real. Then one needs to choose a phase
$\tilde{\theta}$ such that $e^{\textbf{i}\tilde{\theta}}e^{\textbf{i}\widehat{\theta}_{0}}\widehat{\textbf{z}}_{0}$ is real.
That is, what we get is $\epsilon
(\widehat{\textbf{z}}_{0}, \ldots,
 \widehat{\textbf{z}}_{\frac{N-1}{2}})$
 with $\epsilon\in\{1, -1\}$.
 This completes the proof.
\end{proof}

\begin{rem}\label{diyigeyaoqiu}
In Theorem \ref{abc} it is required that
$\widehat{\textbf{w}}_{i+N-B+1}\neq0$. Such a requirement  is crucial for
the determination of $\widehat{\textbf{z}}_{\lfloor\frac{N}{2}\rfloor-1}$.
 If it is  not satisfied, then
the equation system w.r.t
$\widehat{\mathring{\textbf{z}}}_{\lfloor\frac{N}{2}\rfloor-1}$:
$$
\frac{N|\widehat{y}^{\textbf{w}}_{\lfloor\frac{N}{2}\rfloor-1-(i+N-B),m_j}|}{|\widehat{\textbf{w}}_{i+N-B}\omega^{(i+N-B)m_j}|} =\big|\widehat{\mathring{\textbf{z}}}_{\lfloor\frac{N}{2}\rfloor-1}+\frac{\widehat{\textbf{z}}_{\lfloor\frac{N}{2}\rfloor}\widehat{\textbf{w}}_{i+N-B+1}\omega^{(i+N-B+1)m_j}}{\widehat{\textbf{w}}_{i+N-B}\omega^{(i+N-B)m_j}}\big|,j=1,2,3
$$
degenerates to
$$\frac{N|\widehat{y}^{\textbf{w}}_{\lfloor\frac{N}{2}\rfloor-1-(i+N-B),m_j}|}{|\widehat{\textbf{w}}_{i+N-B}\omega^{(i+N-B)m_j}|} =\big|\widehat{\mathring{\textbf{z}}}_{\lfloor\frac{N}{2}\rfloor-1}\big|, j=1,2,3.
$$
Clearly, the above  system  is underdetermined and $\widehat{\textbf{z}}_{\lfloor\frac{N}{2}\rfloor-1}$ can not be determined.
\end{rem}

\subsection{The second main result:    window bandlimit   $B=\lceil\frac{N}{2}\rceil+1$ case}\label{secondmainresult}
Suppose that   the window   $\textbf{w}\in \mathbb{C}^N$ is $(\lceil\frac{N}{2}\rceil+1)$-bandlimited. Consequently,  there exists
$i\in \{0, \ldots ,N-1\}$ such that \begin{align}\label{tiaojian1} \widehat{\textbf{w}}_i=\cdots=\widehat{\textbf{w}}_{i+\lfloor\frac{N}{2}\rfloor-2}=0,
\widehat{\textbf{w}}_{i+\lfloor\frac{N}{2}\rfloor-1}\neq0.\end{align}
We are interested in   the    STFT measurements  at $(2-i+N-n, m)$:
\begin{align}\label{fangcheng121} \big|\widehat{y}^{\textbf{w}}_{2-i+N-n,m}\big|=\dfrac{1}{N}\big|\sum_{l=0}^{N-1}\widehat{\textbf{z}}_{2-i+N-n+l}\widehat{\textbf{w}}_l\omega^{lm}\big|,  \end{align}
where  $n=1, \ldots,\lfloor\frac{N}{2}\rfloor$.

 Again the following is a motivation example for the structure of the summation in \eqref{fangcheng121}.

\begin{exam}
Let $N=6$ and the window bandlimit  $B=\lceil\frac{N}{2}\rceil+1=4.$ For an analytic signal $\textbf{z}\in \mathbb{C}^{6}$,
it follows from Proposition   \ref{jiegou} (i) that its
DFT $\widehat{\textbf{z}}=(\widehat{\textbf{z}}_0,\widehat{\textbf{z}}_1,\widehat{\textbf{z}}_2,\widehat{\textbf{z}}_3, 0,  0)$.
Choose a $4$-bandlimited window $\textbf{w}\in \mathbb{C}^{6}$ such that
$\widehat{\textbf{w}}=(\widehat{\textbf{w}}_0,\widehat{\textbf{w}}_1,\widehat{\textbf{w}}_2,\widehat{\textbf{w}}_3,0,0)$
and correspondingly  $i=4$ in \eqref{tiaojian1}.
For $n=1, 2, 3$, $|\widehat{y}_{4-n,0}^{\textbf{w}}|$ in \eqref{fangcheng121} are expressed as:
$|\widehat{y}_{3,0}^{\textbf{w}}|=\frac{1}{6}|\widehat{\textbf{z}}_{3}\widehat{\textbf{w}}_{0}+\widehat{\textbf{z}}_{0}\widehat{\textbf{w}}_{3}|$,
$|\widehat{y}_{2,0}^{\textbf{w}}|=\frac{1}{6}|\widehat{\textbf{z}}_{2}\widehat{\textbf{w}}_{0}+\widehat{\textbf{z}}_3\widehat{\textbf{w}}_{1}|$,
$|\widehat{y}_{1,0}^{\textbf{w}}|=\frac{1}{6}|\widehat{\textbf{z}}_{1}\widehat{\textbf{w}}_{0}+\widehat{\textbf{z}}_{2}\widehat{\textbf{w}}_{1}+\widehat{\textbf{z}}_3\widehat{\textbf{w}}_{2}|$.
The terms $\widehat{\textbf{z}}_{k}\widehat{\textbf{w}}_{l}$ on which  $\widehat{y}_{4-n,0}^{\textbf{w}}$
is   dependent are arranged  as follows,
\begin{align}\label{JHRLLl}
\begin{matrix}
n=1&\widehat{\textbf{z}}_0\widehat{\textbf{w}}_3&&&&\widehat{\textbf{z}}_3\widehat{\textbf{w}}_0  \\
n=2&&&&\widehat{\textbf{z}}_2\widehat{\textbf{w}}_0&\widehat{\textbf{z}}_3\widehat{\textbf{w}}_1 \\
n=3&&&\widehat{\textbf{z}}_1\widehat{\textbf{w}}_0&\widehat{\textbf{z}}_2\widehat{\textbf{w}}_1&\widehat{\textbf{z}}_3\widehat{\textbf{w}}_2\\
\end{matrix}
\end{align}
Based on  \eqref{JHRLLl}  the terms $\widehat{\textbf{z}}_{k}\widehat{\textbf{w}}_{l}\omega^{lm}$ of  $\widehat{y}_{4-n,m}^{\textbf{w}}$
can be arranged similarly.
\end{exam}

For the general case when the window bandlimit $B=\lceil\frac{N}{2}\rceil+1$, as in \eqref{JHRLLl},
it follows from  \eqref{tiaojian1} and Proposition   \ref{jiegou}
that
the terms $\widehat{\textbf{z}}_{k}\widehat{\textbf{w}}_{l}$ on which  $|\widehat{y}^{\textbf{w}}_{2-i+N-n,0}|$ in \eqref{fangcheng121}
is   dependent are arranged  as follows,

\begin{align}\label{1stjiegou1}
\begin{matrix}
\widehat{\textbf{z}}_{0}\widehat{\textbf{w}}_{i-1+N}&&&&\widehat{\textbf{z}}_{\lfloor\frac{N}{2}\rfloor}\widehat{\textbf{w}}_{i+\lfloor\frac{N}{2}\rfloor-1}  \\
&&&\widehat{\textbf{z}}_{\lfloor\frac{N}{2}\rfloor-1}\widehat{\textbf{w}}_{i+\lfloor\frac{N}{2}\rfloor-1}&\widehat{\textbf{z}}_{\lfloor\frac{N}{2}\rfloor}\widehat{\textbf{w}}_{i+\lfloor\frac{N}{2}\rfloor} \\
&&\begin{rotate}{60}$\ddots$\end{rotate}&&\vdots\\
&\widehat{\textbf{z}}_{1}\widehat{\textbf{w}}_{i+\lfloor\frac{N}{2}\rfloor-1}&\widehat{\textbf{z}}_{2}\widehat{\textbf{w}}_{i+\lfloor\frac{N}{2}\rfloor}&\cdots&\widehat{\textbf{z}}_{\lfloor\frac{N}{2}\rfloor}\widehat{\textbf{w}}_{i+2\lfloor\frac{N}{2}\rfloor-2} \\
\end{matrix}
\end{align}
An observation on \eqref{fangcheng121} gives us that,
for $\big|\widehat{y}^{\textbf{w}}_{2-i+N-n,m}\big|$
the related terms  $\widehat{\textbf{z}}_{k}\widehat{\textbf{w}}_{l}\omega^{lm}$
are arranged as in \eqref{1stjiegou1}.
Motivated by such a structure, we next use the (multi-window)  measurements $\big\{ |\widehat{y}_{2-i+N-n,m}^{\textbf{w}(s)}|:s=1, 2, 3, 4 \big\} $  to do the PR for $\textbf{z}$, which is stated below  as our second main theorem.

\begin{theo}\label{999}  Assume that  the STFT separation parameter $L$ satisfies  $\lceil N/L\rceil\geq3$.
Suppose that the four windows  $\textbf{w}^{(s)}\in \mathbb{C}^{N}, s=1, \ldots, 4$
 are   $(\lceil\frac{N}{2}\rceil+1)$-bandlimited such that they
 satisfy  \eqref{tiaojian1} with  $i\in \{0, \ldots ,N-1\}$,
 $\widehat{\textbf{w}}_{i+\lfloor\frac{N}{2}\rfloor}^{(1)}\neq0$, and let
$m_{1}, m_{2}, m_{3}\in \{0, 1, \ldots, \lceil N/L\rceil-1\}$ be three distinct   numbers.
 If
the  matrix
\begin{align}\label{JJK}\mathcal{A}_{0}:=\left(\begin{array}{llllllll}
a_{11}^{(1)}&a_{12}^{(1)}&a_{21}^{(1)}&a_{22}^{(1)}\\
a_{11}^{(2)}&a_{12}^{(2)}&a_{21}^{(2)}&a_{22}^{(2)}\\
a_{11}^{(3)}&a_{12}^{(3)}&a_{21}^{(3)}&a_{22}^{(3)}\\
a_{11}^{(4)}&a_{12}^{(4)}&a_{21}^{(4)}&a_{22}^{(4)}\end{array}\right)\end{align}
is invertible,
where
\begin{equation}\label{AAA}
\left\{\begin{aligned}
a_{11}^{(s)}=|\widehat{\textbf{w}}^{(s)}_{i+\lfloor\frac{N}{2}\rfloor-1}|^2,\ a_{12}^{(s)}=\widehat{\textbf{w}}^{(s)}_{i+\lfloor\frac{N}{2}\rfloor-1}\overline{\widehat{\textbf{w}}^{(s)}_{i-1+N}},\\a_{21}^{(s)}=\widehat{\textbf{w}}^{(s)}_{i-1+N}\overline{\widehat{\textbf{w}}^{(s)}_{i+\lfloor\frac{N}{2}\rfloor-1}},\ a_{22}^{(s)}=|\widehat{\textbf{w}}^{(s)}_{i-1+N}|^2,\end{aligned}\right.
\end{equation}
then 
any generic analytic signal $\textbf{z}\in \mathbb{C}^N$ can be determined (up to a global sign) by its  $(3\lfloor\frac{N}{2}\rfloor+1)$ number of STFT measurements 
\begin{align}\label{4wmeasuements}  \begin{array}{lll}\Big\{|\widehat{y}_{1-i+N,0}^{\textbf{w}(1)}|,|\widehat{y}_{1-i+N,0}^{\textbf{w}(2)}|, |\widehat{y}_{1-i+N,0}^{\textbf{w}(3)}|, |\widehat{y}_{1-i+N,0}^{\textbf{w}(4)}|,
|\widehat{y}_{k-(i+\lfloor\frac{N}{2}\rfloor-1),m_{j}}^{\textbf{w}(1)}|: k=1, \ldots,\lfloor\frac{N}{2}\rfloor-1,\\
\quad\quad\quad\quad\quad\quad  j=1,2,3\Big\}.\end{array}\end{align}
\end{theo}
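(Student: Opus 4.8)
The plan is to recover $\widehat{\textbf{z}}$ in two stages. The first is a four-window \emph{lifting} step that simultaneously pins down the two endpoint Fourier coefficients $\widehat{\textbf{z}}_0$ and $\widehat{\textbf{z}}_{\lfloor N/2\rfloor}$ up to one common sign; the second is a single-window recursion on $\textbf{w}^{(1)}$ that is essentially the recursion already carried out in Theorem \ref{abc}. Setting $n=1$ in \eqref{fangcheng121} and reading off the first row of \eqref{1stjiegou1}, for each window one has $N\widehat{y}^{\textbf{w}(s)}_{1-i+N,0}=\widehat{\textbf{z}}_{\lfloor N/2\rfloor}\widehat{\textbf{w}}^{(s)}_{i+\lfloor N/2\rfloor-1}+\widehat{\textbf{z}}_0\widehat{\textbf{w}}^{(s)}_{i-1+N}$, because the zero block $\widehat{\textbf{z}}_{N-1}=\widehat{\textbf{z}}_{N-2}=\cdots=0$ (Proposition \ref{jiegou}) annihilates every other wrap-around contribution. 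Expanding $N^2|\widehat{y}^{\textbf{w}(s)}_{1-i+N,0}|^2$ produces exactly $a_{11}^{(s)}|\widehat{\textbf{z}}_{\lfloor N/2\rfloor}|^2+a_{12}^{(s)}\widehat{\textbf{z}}_{\lfloor N/2\rfloor}\overline{\widehat{\textbf{z}}_0}+a_{21}^{(s)}\overline{\widehat{\textbf{z}}_{\lfloor N/2\rfloor}}\widehat{\textbf{z}}_0+a_{22}^{(s)}|\widehat{\textbf{z}}_0|^2$ with the coefficients of \eqref{AAA}. Collecting $s=1,2,3,4$ gives the linear system $\mathcal{A}_{0}\,\mathbf{x}=\mathbf{b}$, where $\mathbf{x}=\big(|\widehat{\textbf{z}}_{\lfloor N/2\rfloor}|^2,\ \widehat{\textbf{z}}_{\lfloor N/2\rfloor}\overline{\widehat{\textbf{z}}_0},\ \overline{\widehat{\textbf{z}}_{\lfloor N/2\rfloor}}\widehat{\textbf{z}}_0,\ |\widehat{\textbf{z}}_0|^2\big)^{\top}$ and $\mathbf{b}$ is the column of the four measured numbers $N^2|\widehat{y}^{\textbf{w}(s)}_{1-i+N,0}|^2$. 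Since $\mathcal{A}_{0}$ in \eqref{JJK} is invertible, $\mathbf{x}$ is recovered outright; in particular $|\widehat{\textbf{z}}_0|^2$, $|\widehat{\textbf{z}}_{\lfloor N/2\rfloor}|^2$, and the cross term $C:=\widehat{\textbf{z}}_{\lfloor N/2\rfloor}\overline{\widehat{\textbf{z}}_0}$ are known.

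Next I would use that $\widehat{\textbf{z}}_0$ is real by Proposition \ref{jiegou} and nonzero for a generic signal. Writing $\widehat{\textbf{z}}_0=\epsilon\sqrt{|\widehat{\textbf{z}}_0|^2}$ with $\epsilon\in\{1,-1\}$ and using $\overline{\widehat{\textbf{z}}_0}=\widehat{\textbf{z}}_0$, the cross term yields $\widehat{\textbf{z}}_{\lfloor N/2\rfloor}=C/\overline{\widehat{\textbf{z}}_0}=\epsilon\,C/\sqrt{|\widehat{\textbf{z}}_0|^2}$. Thus a \emph{single} sign $\epsilon$ governs both coefficients at once, so the pair $(\widehat{\textbf{z}}_0,\widehat{\textbf{z}}_{\lfloor N/2\rfloor})$ is fixed up to the common global sign $\epsilon$; this remains true for odd $N$, where $\widehat{\textbf{z}}_{\lfloor N/2\rfloor}$ is complex, since the reality of $\widehat{\textbf{z}}_0$ alone already collapses the residual phase to a sign.

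With $\epsilon\widehat{\textbf{z}}_{\lfloor N/2\rfloor}$ and $\epsilon\widehat{\textbf{z}}_0$ in hand, I would run the recursion on $\textbf{w}^{(1)}$ for $k=\lfloor N/2\rfloor-1,\ldots,1$. By \eqref{1stjiegou1} the measurement $|\widehat{y}^{\textbf{w}(1)}_{k-(i+\lfloor N/2\rfloor-1),m_j}|$ has exactly the shape of \eqref{886}: the fresh unknown $\widehat{\textbf{z}}_{k}$ multiplied by the nonzero leading coefficient $\widehat{\textbf{w}}^{(1)}_{i+\lfloor N/2\rfloor-1}$, plus a fully known offset built from the already-recovered $\widehat{\textbf{z}}_{k+1},\ldots,\widehat{\textbf{z}}_{\lfloor N/2\rfloor}$; the wrap-around term hitting $\widehat{\textbf{z}}_0$ is absent for these $k$ because the relevant $\widehat{\textbf{z}}$-subscripts stay in $\{k,\ldots,\lfloor N/2\rfloor\}$. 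Dividing out the leading coefficient turns each measurement into $|\widehat{\mathring{\textbf{z}}}_{k}+v_{j}|=\text{measured}$, $j=1,2,3$. Exactly as in the proof of Theorem \ref{abc}, I would form $f(\widehat{\textbf{z}}_{k+1})=(v_1-v_2)/(v_1-v_3)$, invoke $\widehat{\textbf{w}}^{(1)}_{i+\lfloor N/2\rfloor}\neq0$ to get $ac\neq0$ and genericity to get $ad-bc\neq0$ in Lemma \ref{lemma3.2}, conclude $\Im\big((v_1-v_2)/(v_1-v_3)\big)\neq0$ (with Lemma \ref{m} supplying the companion hypothesis), and apply Lemma \ref{L} to solve uniquely for $\epsilon\widehat{\textbf{z}}_{k}$. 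This determines $\epsilon\widehat{\textbf{z}}_0,\ldots,\epsilon\widehat{\textbf{z}}_{\lfloor N/2\rfloor}$, hence $\epsilon\widehat{\textbf{z}}$ and $\textbf{z}$ up to a global sign, using $4+3(\lfloor N/2\rfloor-1)=3\lfloor N/2\rfloor+1$ measurements, matching \eqref{4wmeasuements}.

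The main obstacle is the first stage: one magnitude $|\widehat{y}^{\textbf{w}}_{1-i+N,0}|$ entangles the two endpoints $\widehat{\textbf{z}}_0$ and $\widehat{\textbf{z}}_{\lfloor N/2\rfloor}$ through the periodic wrap-around and cannot separate them, so the crux is to linearize four such magnitudes into $\mathcal{A}_{0}\mathbf{x}=\mathbf{b}$ and then to argue that the reality of $\widehat{\textbf{z}}_0$ reduces the factorization ambiguity of the recovered cross term to the \emph{single} sign $\epsilon$ that the ensuing recursion must share; the recursion itself is routine once Theorem \ref{abc} is available (the only new book-keeping being the vanishing of the wrap-around term for $k\ge1$, cf.\ Remark \ref{diyigeyaoqiu} for the role of $\widehat{\textbf{w}}^{(1)}_{i+\lfloor N/2\rfloor}\neq0$). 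Finally, only finitely many polynomial non-degeneracy conditions are imposed ($\widehat{\textbf{z}}_0\neq0$ and one condition $ad-bc\neq0$ per recursion step), so their joint failure set lies in the vanishing locus of a single nonzero polynomial on $\mathbb{C}^N$, and the conclusion holds for generic analytic signals in the sense of Definition \ref{kzxcvb}.
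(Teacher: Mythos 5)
Your proposal is correct and follows essentially the same route as the paper's proof: the four-window measurements at index $1-i+N$ are lifted to the linear system $\mathcal{A}_0\mathbf{x}=\mathbf{b}$ in the quadratic monomials of $(\widehat{\textbf{z}}_0,\widehat{\textbf{z}}_{\lfloor N/2\rfloor})$, invertibility of $\mathcal{A}_0$ plus the reality of $\widehat{\textbf{z}}_0$ pins this pair down up to one common sign, and the remaining coefficients are recovered by the same three-measurement recursion on $\textbf{w}^{(1)}$ via Lemmas \ref{L}, \ref{lemma3.2} and \ref{m}. Your only additions — spelling out that the cross term $\widehat{\textbf{z}}_{\lfloor N/2\rfloor}\overline{\widehat{\textbf{z}}_0}$ forces the two signs to agree, and noting the absence of wrap-around terms for $k\ge 1$ — are points the paper treats more tersely but identically in substance.
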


\begin{proof}
Consider the equation system
w.r.t $(\widehat{\mathring{\textbf{z}}}_0, \widehat{\mathring{\textbf{z}}}_{\lfloor\frac{N}{2}\rfloor})$:
\begin{equation}\label{666}|\widehat{y}^{\textbf{w}(s)}_{1-i+N,0}|=\frac{1}{N}|\widehat{\mathring{\textbf{z}}}_{\lfloor\frac{N}{2}\rfloor}\widehat{\textbf{w}}^{(s)}_{i+\lfloor\frac{N}{2}\rfloor-1}+\widehat{\mathring{\textbf{z}}}_0\widehat{\textbf{w}}^{(s)}_{i-1+N}|, s=1,2,3,4.\end{equation}
Note that \eqref{666}
is equivalent to
\begin{align} \label{K4GH} \big(\widehat{\mathring{\textbf{z}}}_{\lfloor\frac{N}{2}\rfloor}\widehat{\textbf{w}}_{i+\lfloor\frac{N}{2}\rfloor-1}^{(s)}\!+\widehat{\mathring{\textbf{z}}}_0\widehat{\textbf{w}}_{i-1+N}^{(s)}\big)\overline{\big(\widehat{\mathring{\textbf{z}}}_{\lfloor\frac{N}{2}\rfloor}\widehat{\textbf{w}}_{i+\lfloor\frac{N}{2}\rfloor-1}^{(s)}\!+\widehat{\mathring{\textbf{z}}}_0\widehat{\textbf{w}}_{i-1+N}^{(s)}\big)}=
N^2|\widehat{y}_{1-i+N,0}^{\textbf{w}(s)}|^2, s=1,2,3,4. \end{align}
Through the direct calculation,
\eqref{K4GH} is equivalent to
\begin{align}\label{huifuguochengxxxx1}
\begin{array}{llllllll}
\mathcal{A}_{0}\left(\begin{array}{llllllll}
|\widehat{\mathring{\textbf{z}}}_{\lfloor\frac{N}{2}\rfloor}|^2\\
\widehat{\mathring{\textbf{z}}}_{\lfloor\frac{N}{2}\rfloor}\widehat{\mathring{\textbf{z}}}_{0} \\
\overline{\widehat{\mathring{\textbf{z}}}_{\lfloor\frac{N}{2}\rfloor}}\widehat{\mathring{\textbf{z}}}_{0}\\
\widehat{\mathring{\textbf{z}}}_{0}^2
\end{array}\right)=\left(\begin{array}{lllllllll}
N^2|\widehat{y}_{1-i+N,0}^{\textbf{w}(1)}|^2\\
N^2|\widehat{y}_{1-i+N,0}^{\textbf{w}(2)}|^2 \\
N^2|\widehat{y}_{1-i+N,0}^{\textbf{w}(3)}|^2 \\
N^2|\widehat{y}_{1-i+N,0}^{\textbf{w}(4)}|^2
\end{array}\right).
\end{array}
\end{align}
Since $\mathcal{A}_{0}$ is invertible and $\widehat{\textbf{z}}_0\in \mathbb{R}$,
$(\widehat{\textbf{z}}_0, \widehat{\textbf{z}}_{\lfloor\frac{N}{2}\rfloor})$ can be determined
up to a sign by the four  measurements in \eqref{666}. We denote such a recovery result by
$\epsilon(\widehat{\textbf{z}}_0, \widehat{\textbf{z}}_{\lfloor\frac{N}{2}\rfloor})$ with $\epsilon\in \{1,-1\}$.
In what follows, we discuss how to determine other components $\widehat{\textbf{z}}_1,\ldots, \widehat{\textbf{z}}_{\lfloor\frac{N}{2}\rfloor-1} $ of $\widehat{\textbf{z}}$.

We first address the recovery of  $\widehat{\textbf{z}}_{\lfloor\frac{N}{2}\rfloor-1}$ by  STFT magnitudes  $\{|\widehat{y}_{N-i,m_j}^{\textbf{w}(1)}|:j=1,2,3\}$.
Consider the equation system w.r.t $\widehat{\mathring{\textbf{z}}}_{\lfloor\frac{N}{2}\rfloor-1}$:  \begin{equation}\begin{array}{lll}\label{21299}
|\widehat{y}_{N-i,m_j}^{\textbf{w}(1)}| =\frac{1}{N}\big|\widehat{\mathring{\textbf{z}}}_{\lfloor\frac{N}{2}\rfloor-1}\widehat{\textbf{w}}_{i+\lfloor\frac{N}{2}\rfloor-1}^{(1)}\omega^{(i+\lfloor\frac{N}{2}\rfloor-1)m_j}+\epsilon\widehat{\textbf{z}}_{\lfloor\frac{N}{2}\rfloor}\widehat{\textbf{w}}_{i+\lfloor\frac{N}{2}\rfloor}^{(1)}\omega^{(i+\lfloor\frac{N}{2}\rfloor)m_j}\big|,j=1,2,3.
\end{array}
\end{equation}
Note that \eqref{21299} is equivalent to
\begin{equation}\label{21229}
\frac{N|\widehat{y}_{N-i,m_j}^{\textbf{w}(1)}|}{|\widehat{\textbf{w}}_{i+\lfloor\frac{N}{2}\rfloor-1}^{(1)}\omega^{(i+\lfloor\frac{N}{2}\rfloor-1)m_j}|} =\big|\widehat{\mathring{\textbf{z}}}_{\lfloor\frac{N}{2}\rfloor-1}+v_{j,\lfloor\frac{N}{2}\rfloor-1}\big|,j=1,2,3,
\end{equation}
where
$
v_{j,\lfloor\frac{N}{2}\rfloor-1}:=\frac{\epsilon\widehat{\textbf{z}}_{\lfloor\frac{N}{2}\rfloor}\widehat{\textbf{w}}_{i+\lfloor\frac{N}{2}\rfloor}^{(1)}\omega^{(i+\lfloor\frac{N}{2}\rfloor)m_j}}{\widehat{\textbf{w}}_{i+\lfloor\frac{N}{2}\rfloor-1}^{(1)}\omega^{(i+\lfloor\frac{N}{2}\rfloor-1)m_j}}.
$
For the generic analytic signal $\textbf{z}\in \mathbb{C}^N$, we have  $\widehat{\textbf{z}}_{\lfloor\frac{N}{2}\rfloor}\neq0$.
Therefore, for $j=1,2,3$ we have
\begin{align}\label{JHLLL}\dfrac{v_{1,\lfloor\frac{N}{2}\rfloor-1}-v_{2,\lfloor\frac{N}{2}\rfloor-1}}{v_{1,\lfloor\frac{N}{2}\rfloor-1}-v_{3,\lfloor\frac{N}{2}\rfloor-1}}
=\frac{\omega^{m_1}-\omega^{m_2}}{\omega^{m_1}-\omega^{m_3}}.
\end{align}
By \eqref{JHLLL} and  Lemma \ref{m}  we have
$\Im\Big(\dfrac{v_{1,\lfloor\frac{N}{2}\rfloor-1}-v_{2,\lfloor\frac{N}{2}\rfloor-1}}{v_{1,\lfloor\frac{N}{2}\rfloor-1}-v_{3,\lfloor\frac{N}{2}\rfloor-1}}\Big)\ne0.$
Now it follows from  Lemma \ref{L} that  there exists  a unique solution to the equation system \eqref{21229} w.r.t $\widehat{\mathring{\textbf{z}}}_{\lfloor\frac{N}{2}\rfloor-1}$.
Clearly, $\epsilon\widehat{\textbf{z}}_{\lfloor\frac{N}{2}\rfloor-1}$ is a solution.
Then it is the unique solution. In what follows, we address how to recover
the other components $\widehat{\textbf{z}}_{\lfloor\frac{N}{2}\rfloor-2}, \ldots, \widehat{\textbf{z}}_{1}$.
Suppose that for any  $k\in \{\lfloor\frac{N}{2}\rfloor,
\lfloor\frac{N}{2}\rfloor-1, \ldots, k_{0},0\}$  where  $k_{0}\in \{\lfloor\frac{N}{2}\rfloor, \lfloor\frac{N}{2}\rfloor-1, \ldots, 2\}$,
the component  $\widehat{\textbf{z}}_{k}$ has been determined by  the measurements
\begin{align}\begin{array}{lll}
\notag\big\{|\widehat{y}^{\textbf{w}(s)}_{1-i+N,0}|,|\widehat{y}^{\textbf{w}(1)}_{\ell-(i+\lfloor\frac{N}{2}\rfloor-1),m_{j}}|: \ell=\lfloor\frac{N}{2}\rfloor-1,\ldots,k_{0},
s=1,2,3,4,
j=1,2,3\big\}.
\end{array}
\end{align}
We next recover  $\widehat{\textbf{z}}_{k_{0}-1}$.
Consider the equation system w.r.t $\widehat{\mathring{\textbf{z}}}_{k_{0}-1}$:  \begin{align}
\label{7865}\begin{array}{lllll}
\displaystyle |\widehat{y}_{k_{0}-1-(i+\lfloor\frac{N}{2}\rfloor-1),m_j}^{\textbf{w}(1)}|=&
\frac{1}{N}\big|\widehat{\mathring{\textbf{z}}}_{k_{0}-1}\widehat{\textbf{w}}_{i+\lfloor\frac{N}{2}\rfloor-1}^{(1)}\omega^{(i+\lfloor\frac{N}{2}\rfloor-1)m_j}\\
&+\displaystyle\sum\limits_{l=1}^{\lfloor\frac{N}{2}\rfloor-k_{0}+1}\epsilon\widehat{\textbf{z}}_{k_{0}-1+l}\widehat{\textbf{w}}_{i+\lfloor\frac{N}{2}\rfloor-1+l}^{(1)}\omega^{(i+\lfloor\frac{N}{2}\rfloor-1+l)m_j}\big|,j=1,2,3.
\end{array}
\end{align}
Note that  \eqref{7865} is equivalent to
\begin{equation}\label{42}
\frac{N|\widehat{y}_{k_{0}-1-(i+\lfloor\frac{N}{2}\rfloor-1),m_j}^{\textbf{w}(1)}|}{|\widehat{\textbf{w}}_{i+\lfloor\frac{N}{2}\rfloor-1}^{(1)}\omega^{(i+\lfloor\frac{N}{2}\rfloor-1)m_j}|} =\big|\widehat{\mathring{\textbf{z}}}_{k_{0}-1}+v_{j,k_{0}-1}\big|,j=1,2,3,
\end{equation}
where
\begin{equation}\label{658}
v_{j,k_{0}-1}:=\frac{\epsilon\widehat{\textbf{z}}_{k_{0}}\widehat{\textbf{w}}_{i+\lfloor\frac{N}{2}\rfloor}^{(1)}\omega^{(i+\lfloor\frac{N}{2}\rfloor)m_j}+\displaystyle\sum\limits_{l=2}^{\lfloor\frac{N}{2}\rfloor-k_{0}+1}\epsilon\widehat{\textbf{z}}_{k_{0}-1+l}\widehat{\textbf{w}}_{i+\lfloor\frac{N}{2}\rfloor-1+l}^{(1)}\omega^{(i+\lfloor\frac{N}{2}\rfloor-1+l)m_j}}{\widehat{\textbf{w}}_{i+\lfloor\frac{N}{2}\rfloor-1}^{(1)}\omega^{(i+\lfloor\frac{N}{2}\rfloor-1)m_j}}.
\end{equation}
Define
\begin{align}
\notag f(\widehat{\textbf{z}}_{k_{0}})&:=\frac{v_{1,k_{0}-1}-v_{2,k_{0}-1}}{v_{1,k_{0}-1}-v_{3,k_{0}-1}}\\
&=\dfrac{a\widehat{\textbf{z}}_{k_{0}}+b}{c\widehat{\textbf{z}}_{k_{0}}+d},
\end{align}
where
\begin{align}\label{up1133}
\left\{\begin{array}{lll}
a=\epsilon\widehat{\textbf{w}}_{i+\lfloor\frac{N}{2}\rfloor}^{(1)}(\omega^{m_1}-\omega^{m_2}),\\
b=\sum\limits_{l=2}^{\lfloor\frac{N}{2}\rfloor-k_{0}+1}\epsilon\widehat{\textbf{z}}_{k_{0}-1+l}\widehat{\textbf{w}}_{i+\lfloor\frac{N}{2}\rfloor-1+l}^{(1)}(\omega^{lm_1}-\omega^{lm_2}),\\
c=\epsilon\widehat{\textbf{w}}_{i+\lfloor\frac{N}{2}\rfloor}^{(1)}(\omega^{m_1}-\omega^{m_3}),\\
d=\sum\limits_{l=2}^{\lfloor\frac{N}{2}\rfloor-k_{0}+1}\epsilon\widehat{\textbf{z}}_{k_{0}-1+l}\widehat{\textbf{w}}_{i+\lfloor\frac{N}{2}\rfloor-1+l}^{(1)}(\omega^{lm_1}-\omega^{lm_3}).
\end{array}\right.
\end{align}
Since $\widehat{\textbf{w}}_{i+\lfloor\frac{N}{2}\rfloor}^{(1)}\neq 0$,  $ac\neq0$. For the generic analytic signal $\textbf{z}$, we have $ad-bc\neq0.$
That is,    $f(\widehat{\textbf{z}}_{k_{0}})$
meets  the requirements  in  Lemma \ref{lemma3.2}.
Then  $\Im[f(\widehat{\textbf{z}}_{k_{0}})]\neq0$.
By Lemma \ref{L}, $\epsilon\widehat{\textbf{z}}_{k_{0}-1}$ can be determined. This completes the proof.
\end{proof}

\begin{rem}\label{disangeyaoqiu}

 The condition
$\widehat{\textbf{w}}_{i+\lfloor\frac{N}{2}\rfloor}^{(1)}\neq0$ in Theorem \ref{999} is also important since if otherwise, then
the   equation system w.r.t
$\widehat{\mathring{\textbf{z}}}_{\lfloor\frac{N}{2}\rfloor-1}$:
$$
\frac{N|\widehat{y}_{N-i,m_j}^{\textbf{w}(1)}|}{|\widehat{\textbf{w}}_{i+\lfloor\frac{N}{2}\rfloor-1}^{(1)}\omega^{(i+\lfloor\frac{N}{2}\rfloor-1)m_j}|} =\big|\widehat{\mathring{\textbf{z}}}_{\lfloor\frac{N}{2}\rfloor-1}+\frac{\widehat{\textbf{z}}_{\lfloor\frac{N}{2}\rfloor}\widehat{\textbf{w}}_{i+\lfloor\frac{N}{2}\rfloor}^{(1)}\omega^{(i+\lfloor\frac{N}{2}\rfloor)m_j}}{\widehat{\textbf{w}}_{i+\lfloor\frac{N}{2}\rfloor-1}^{(1)}\omega^{(i+\lfloor\frac{N}{2}\rfloor-1)m_j}}\big|,j=1,2,3
$$
degenerates to
$$
\frac{N|\widehat{y}_{N-i,m_j}^{\textbf{w}(1)}|}{|\widehat{\textbf{w}}_{i+\lfloor\frac{N}{2}\rfloor-1}^{(1)}\omega^{(i+\lfloor\frac{N}{2}\rfloor-1)m_j}|} =\big|\widehat{\mathring{\textbf{z}}}_{\lfloor\frac{N}{2}\rfloor-1}\big|,j=1,2,3.
$$
Clearly, the above  system  is underdetermined  and $\widehat{\textbf{z}}_{\lfloor\frac{N}{2}\rfloor-1}$ can not be recovered exactly.
\end{rem}
The following provides
a design for the windows in Theorem \ref{999}.
\begin{exam}
Choose a $(\lceil\frac{N}{2}\rceil+1)$-bandlimited window
$\widehat{\textbf{w}}^{(1)}$ such that \eqref{tiaojian1} holds with $i\in  \{0, \ldots ,N-1\}$.
Consequently, $\widehat{\textbf{w}}^{(1)}_{i+\lfloor\frac{N}{2}\rfloor}\neq0.$
Now  choose
the other three  $(\lceil\frac{N}{2}\rceil+1)$-bandlimited windows
$\widehat{\textbf{w}}^{(s)}, s=2, 3,4$
such that
$\widehat{\textbf{w}}^{(2)}_{i+\lfloor\frac{N}{2}\rfloor-1}=(\widehat{\textbf{w}}^{(1)}_{i+\lfloor\frac{N}{2}\rfloor-1})^2$,
$\widehat{\textbf{w}}^{(3)}_{i+\lfloor\frac{N}{2}\rfloor-1}=(\widehat{\textbf{w}}^{(1)}_{i+\lfloor\frac{N}{2}\rfloor-1})^3$,
$\widehat{\textbf{w}}^{(4)}_{i+\lfloor\frac{N}{2}\rfloor-1}=(\widehat{\textbf{w}}^{(1)}_{i+\lfloor\frac{N}{2}\rfloor-1})^4$,
$\widehat{\textbf{w}}^{(2)}_{i-1+N}=(\widehat{\textbf{w}}^{(1)}_{i-1+N})^2$,
$\widehat{\textbf{w}}^{(3)}_{i-1+N}=(\widehat{\textbf{w}}^{(1)}_{i-1+N})^3$ and
$\widehat{\textbf{w}}^{(4)}_{i-1+N}=(\widehat{\textbf{w}}^{(1)}_{i-1+N})^4$.
Additionally, it is required that $\widehat{\textbf{w}}^{(1)}_{i+\lfloor\frac{N}{2}\rfloor-1}\neq\widehat{\textbf{w}}^{(1)}_{i-1+N}$, $\widehat{\textbf{w}}^{(1)}_{i+\lfloor\frac{N}{2}\rfloor-1}\overline{\widehat{\textbf{w}}^{(1)}_{i-1+N}}\notin \mathbb{R}$, $|\widehat{\textbf{w}}^{(1)}_{i+\lfloor\frac{N}{2}\rfloor-1}|\neq|\widehat{\textbf{w}}^{(1)}_{i-1+N}|$ and $\widehat{\textbf{w}}^{(1)}_{i+\lfloor\frac{N}{2}\rfloor-1}\widehat{\textbf{w}}^{(1)}_{i-1+N}\neq 0$.
Then $\mathcal{A}_{0}$ in \eqref{JJK} can be expressed as
\begin{align}
\label{JblK}
\begin{array}{lll}
\mathcal{A}_{0}=\left(\begin{array}{llllllll}
|\widehat{\textbf{w}}^{(1)}_{i+\lfloor\frac{N}{2}\rfloor-1}|^2&\widehat{\textbf{w}}^{(1)}_{i+\lfloor\frac{N}{2}\rfloor-1}\overline{\widehat{\textbf{w}}^{(1)}_{i-1+N}}&\overline{\widehat{\textbf{w}}^{(1)}_{i+\lfloor\frac{N}{2}\rfloor-1}}\widehat{\textbf{w}}^{(1)}_{i-1+N}&|\widehat{\textbf{w}}^{(1)}_{i-1+N}|^2\\
|\widehat{\textbf{w}}^{(1)}_{i+\lfloor\frac{N}{2}\rfloor-1}|^4&(\widehat{\textbf{w}}^{(1)}_{i+\lfloor\frac{N}{2}\rfloor-1}\overline{\widehat{\textbf{w}}^{(1)}_{i-1+N}})^2&(\overline{\widehat{\textbf{w}}^{(1)}_{i+\lfloor\frac{N}{2}\rfloor-1}}\widehat{\textbf{w}}^{(1)}_{i-1+N})^2&|\widehat{\textbf{w}}^{(1)}_{i-1+N}|^4\\
|\widehat{\textbf{w}}^{(1)}_{i+\lfloor\frac{N}{2}\rfloor-1}|^6&(\widehat{\textbf{w}}^{(1)}_{i+\lfloor\frac{N}{2}\rfloor-1}\overline{\widehat{\textbf{w}}^{(1)}_{i-1+N}})^3&(\overline{\widehat{\textbf{w}}^{(1)}_{i+\lfloor\frac{N}{2}\rfloor-1}}\widehat{\textbf{w}}^{(1)}_{i-1+N})^3&|\widehat{\textbf{w}}^{(1)}_{i-1+N}|^6\\
|\widehat{\textbf{w}}^{(1)}_{i+\lfloor\frac{N}{2}\rfloor-1}|^8&(\widehat{\textbf{w}}^{(1)}_{i+\lfloor\frac{N}{2}\rfloor-1}\overline{\widehat{\textbf{w}}^{(1)}_{i-1+N}})^4&(\overline{\widehat{\textbf{w}}^{(1)}_{i+\lfloor\frac{N}{2}\rfloor-1}}\widehat{\textbf{w}}^{(1)}_{i-1+N})^4&|\widehat{\textbf{w}}^{(1)}_{i-1+N}|^8\end{array}\right).\end{array}\end{align}
Clearly,
\begin{align}
\begin{array}{lll}
\mathcal{A}_{0}=\left(\begin{array}{llllllll}
1&1&1&1\\
|\widehat{\textbf{w}}^{(1)}_{i+\lfloor\frac{N}{2}\rfloor-1}|^2&\widehat{\textbf{w}}^{(1)}_{i+\lfloor\frac{N}{2}\rfloor-1}\overline{\widehat{\textbf{w}}^{(1)}_{i-1+N}}&\overline{\widehat{\textbf{w}}^{(1)}_{i+\lfloor\frac{N}{2}\rfloor-1}}\widehat{\textbf{w}}^{(1)}_{i-1+N}&|\widehat{\textbf{w}}^{(1)}_{i-1+N}|^2\\
|\widehat{\textbf{w}}^{(1)}_{i+\lfloor\frac{N}{2}\rfloor-1}|^4&(\widehat{\textbf{w}}^{(1)}_{i+\lfloor\frac{N}{2}\rfloor-1}\overline{\widehat{\textbf{w}}^{(1)}_{i-1+N}})^2&(\overline{\widehat{\textbf{w}}^{(1)}_{i+\lfloor\frac{N}{2}\rfloor-1}}\widehat{\textbf{w}}^{(1)}_{i-1+N})^2&|\widehat{\textbf{w}}^{(1)}_{i-1+N}|^4\\
|\widehat{\textbf{w}}^{(1)}_{i+\lfloor\frac{N}{2}\rfloor-1}|^6&(\widehat{\textbf{w}}^{(1)}_{i+\lfloor\frac{N}{2}\rfloor-1}\overline{\widehat{\textbf{w}}^{(1)}_{i-1+N}})^3&(\overline{\widehat{\textbf{w}}^{(1)}_{i+\lfloor\frac{N}{2}\rfloor-1}}\widehat{\textbf{w}}^{(1)}_{i-1+N})^3&|\widehat{\textbf{w}}^{(1)}_{i-1+N}|^6\end{array}\right)\\ \quad\quad\quad\times
\left(\begin{array}{llllllll}
|\widehat{\textbf{w}}^{(1)}_{i+\lfloor\frac{N}{2}\rfloor-1}|^2&0&0&0\\
0&\widehat{\textbf{w}}^{(1)}_{i+\lfloor\frac{N}{2}\rfloor-1}\overline{\widehat{\textbf{w}}^{(1)}_{i-1+N}}&0&0\\
0&0&\overline{\widehat{\textbf{w}}^{(1)}_{i+\lfloor\frac{N}{2}\rfloor-1}}\widehat{\textbf{w}}^{(1)}_{i-1+N}&0\\
0&0&0&|\widehat{\textbf{w}}^{(1)}_{i-1+N}|^2\end{array}\right).
\end{array}
\end{align}
Then   $\mathcal{A}_{0}$   is invertible,
and  the four windows $\textbf{w}^{(s)}, s=1,2,3,4$
meet the requirements
in  Theorem \ref{999}.
As an example for $(N,B,i)=(48,25,25)$,
the graphs of
$\textbf{w}^{(s)}, s=1,2,3,4$
and their DFTs are plotted in Figure \ref{4w}.
\end{exam}

\begin{figure}[htbp] 
\centering 
\subfigure[$\textbf{w}^{(1)}$]
{\begin{minipage}{7cm} 
\centering
\includegraphics[scale=0.36]{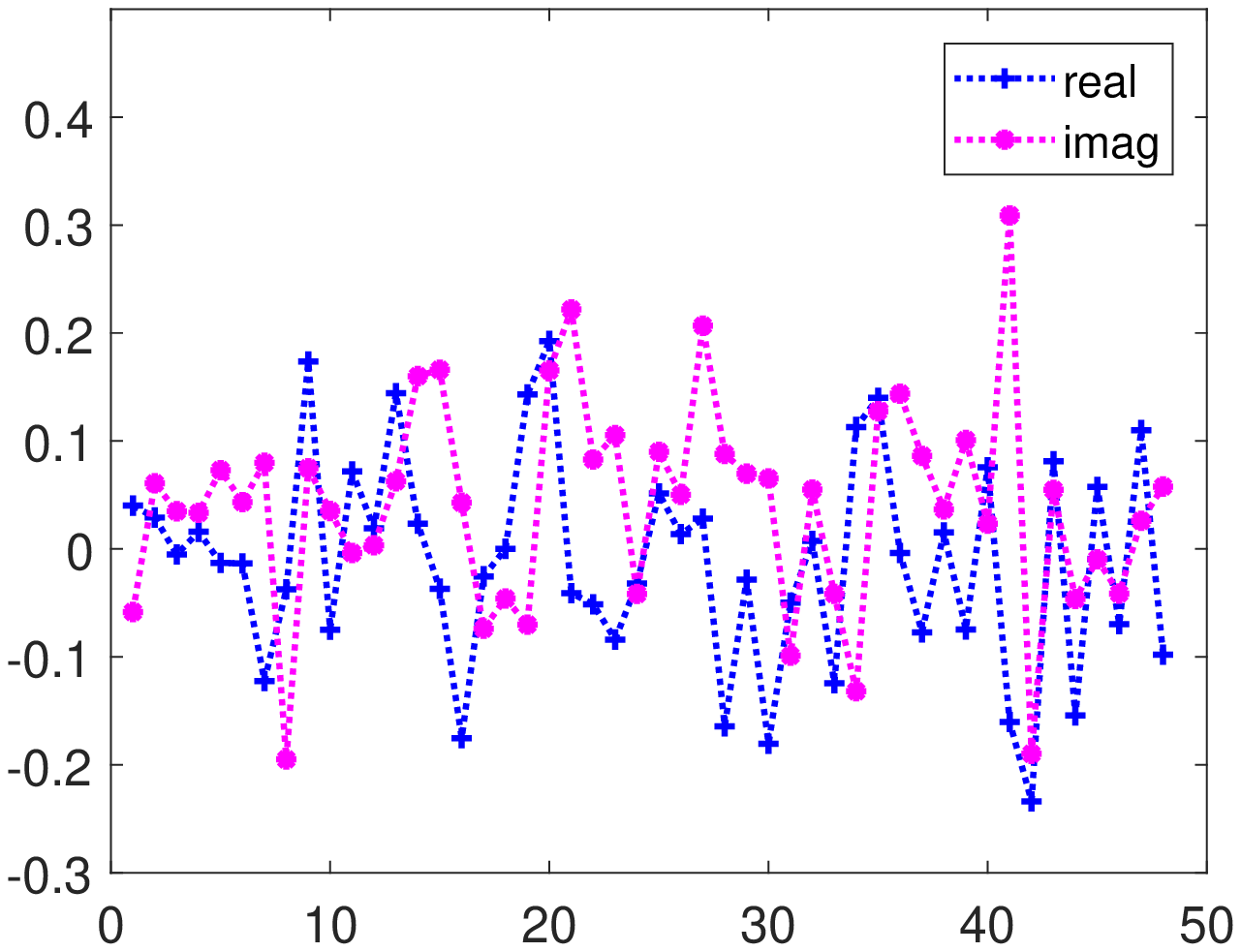}
\end{minipage}}
\subfigure[$\textbf{w}^{(2)}$]
{\begin{minipage}{7cm} 
\centering
\includegraphics[scale=0.36]{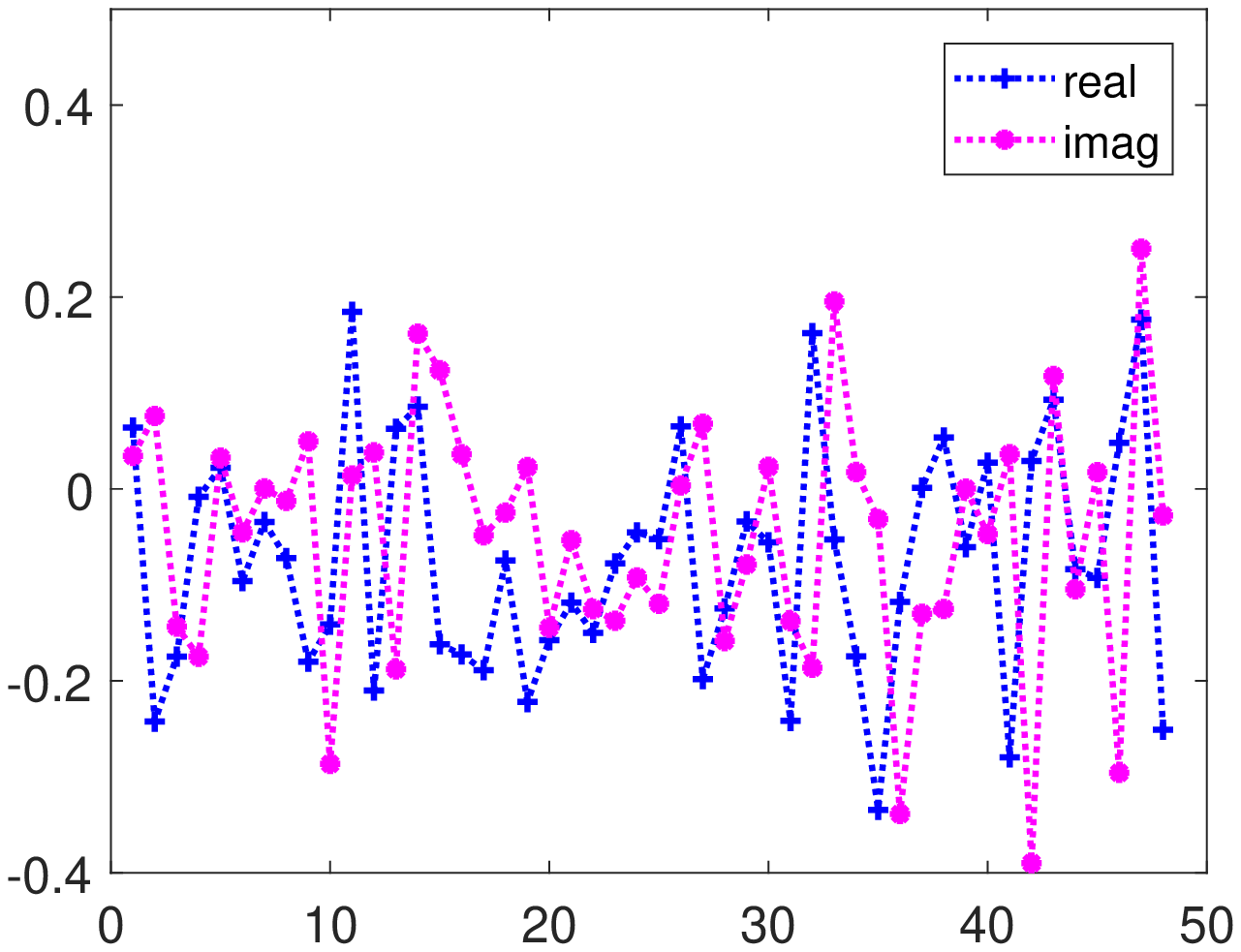}
\end{minipage}}
\subfigure[$\textbf{w}^{(3)}$]
{\begin{minipage}{7cm} 
\centering
\includegraphics[scale=0.36]{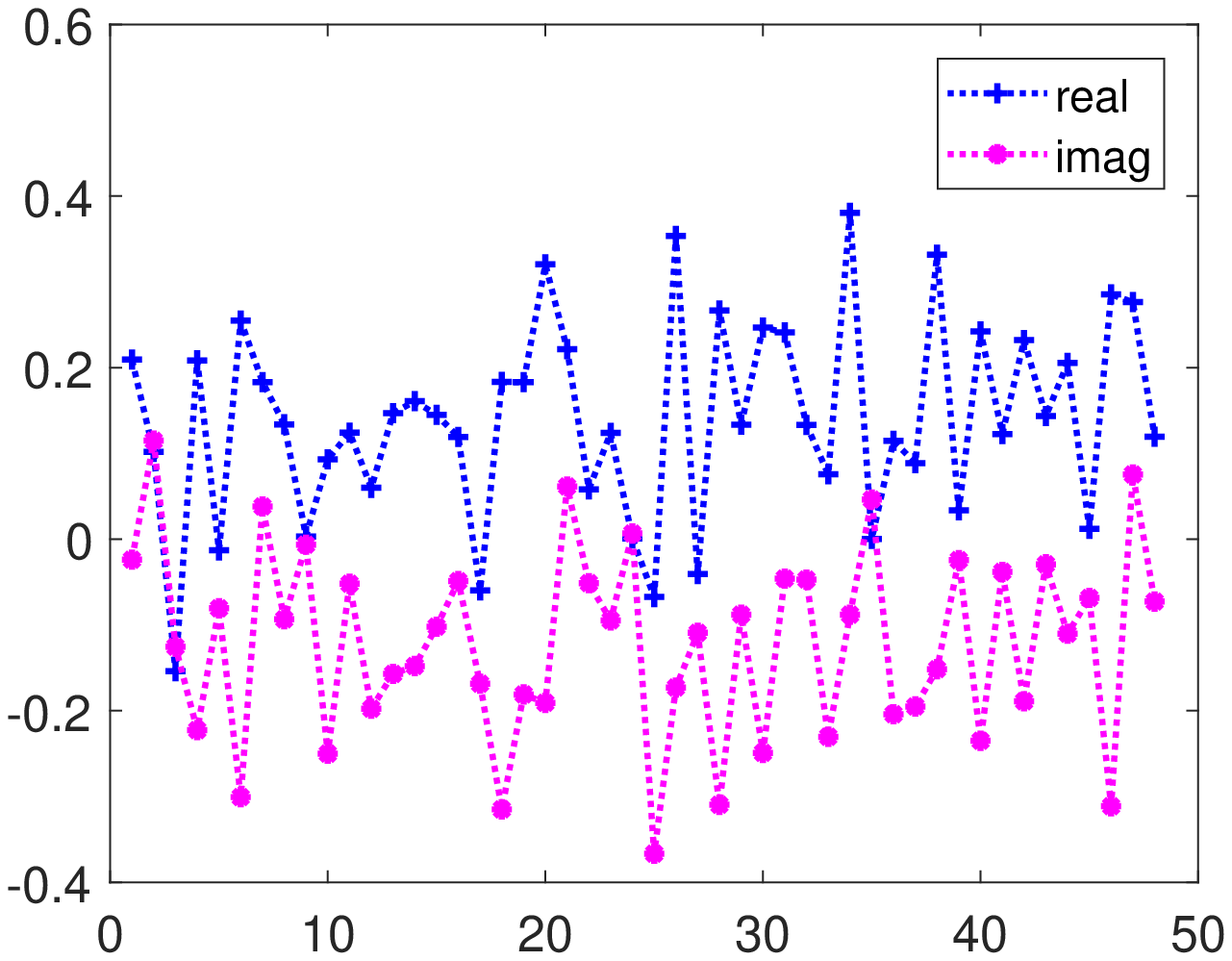}
\end{minipage}}
\subfigure[$\textbf{w}^{(4)}$]
{\begin{minipage}{7cm} 
\centering
\includegraphics[scale=0.36]{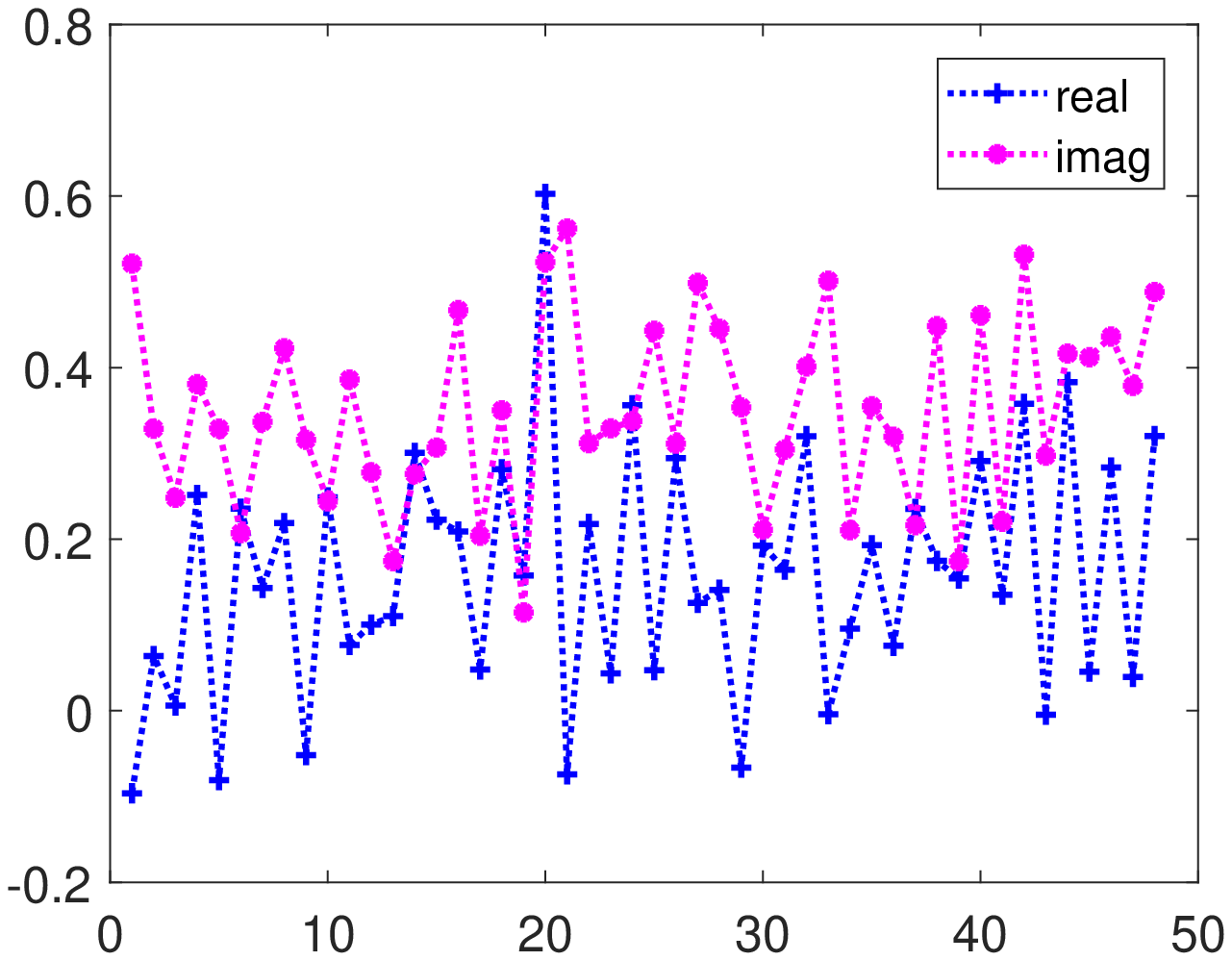}
\end{minipage}}
\subfigure[$\widehat{\textbf{w}}^{(1)}$]
{\begin{minipage}{7cm} 
\centering
\includegraphics[scale=0.36]{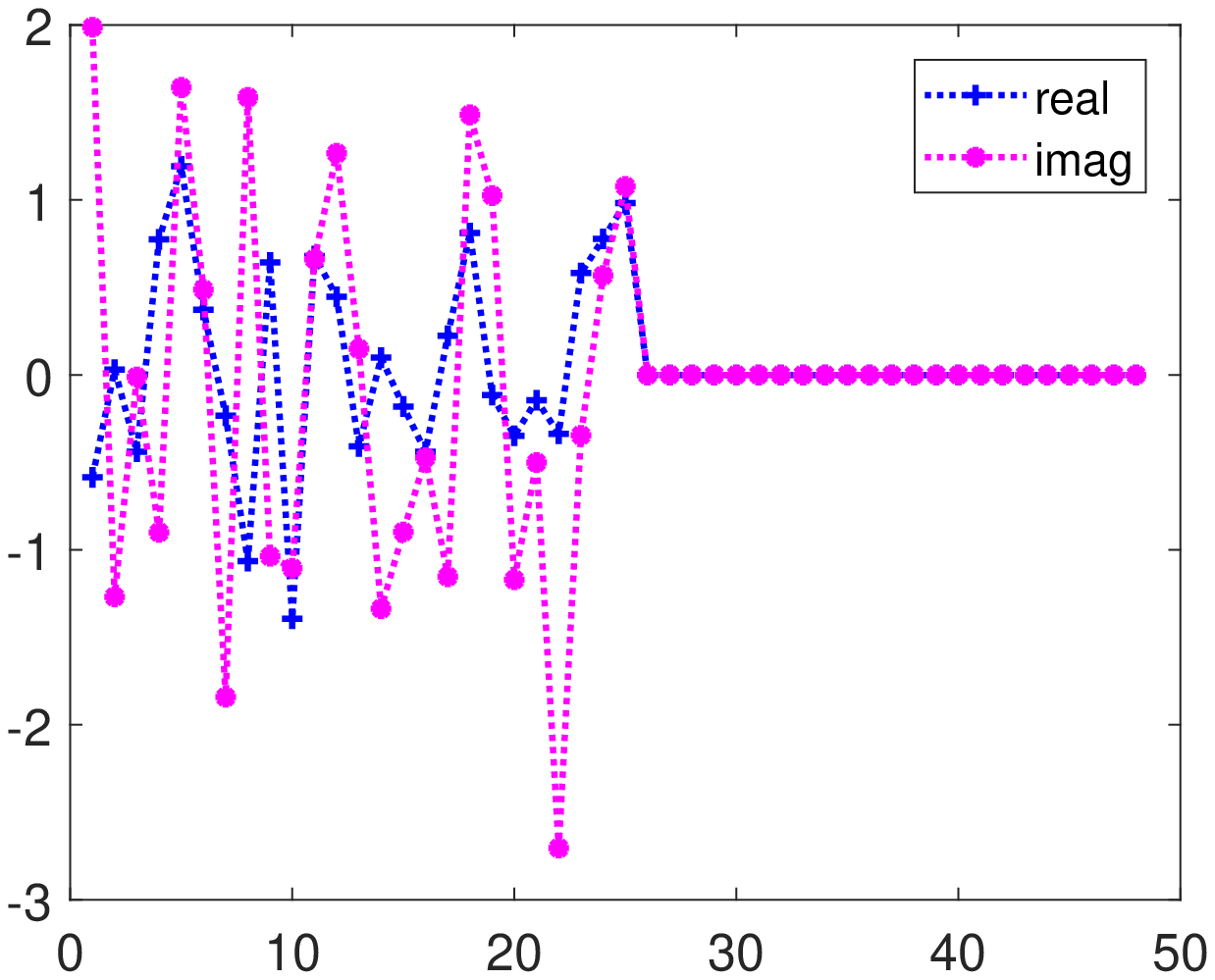}
\end{minipage}}
\subfigure[$\widehat{\textbf{w}}^{(2)}$]
{\begin{minipage}{7cm} 
\centering
\includegraphics[scale=0.36]{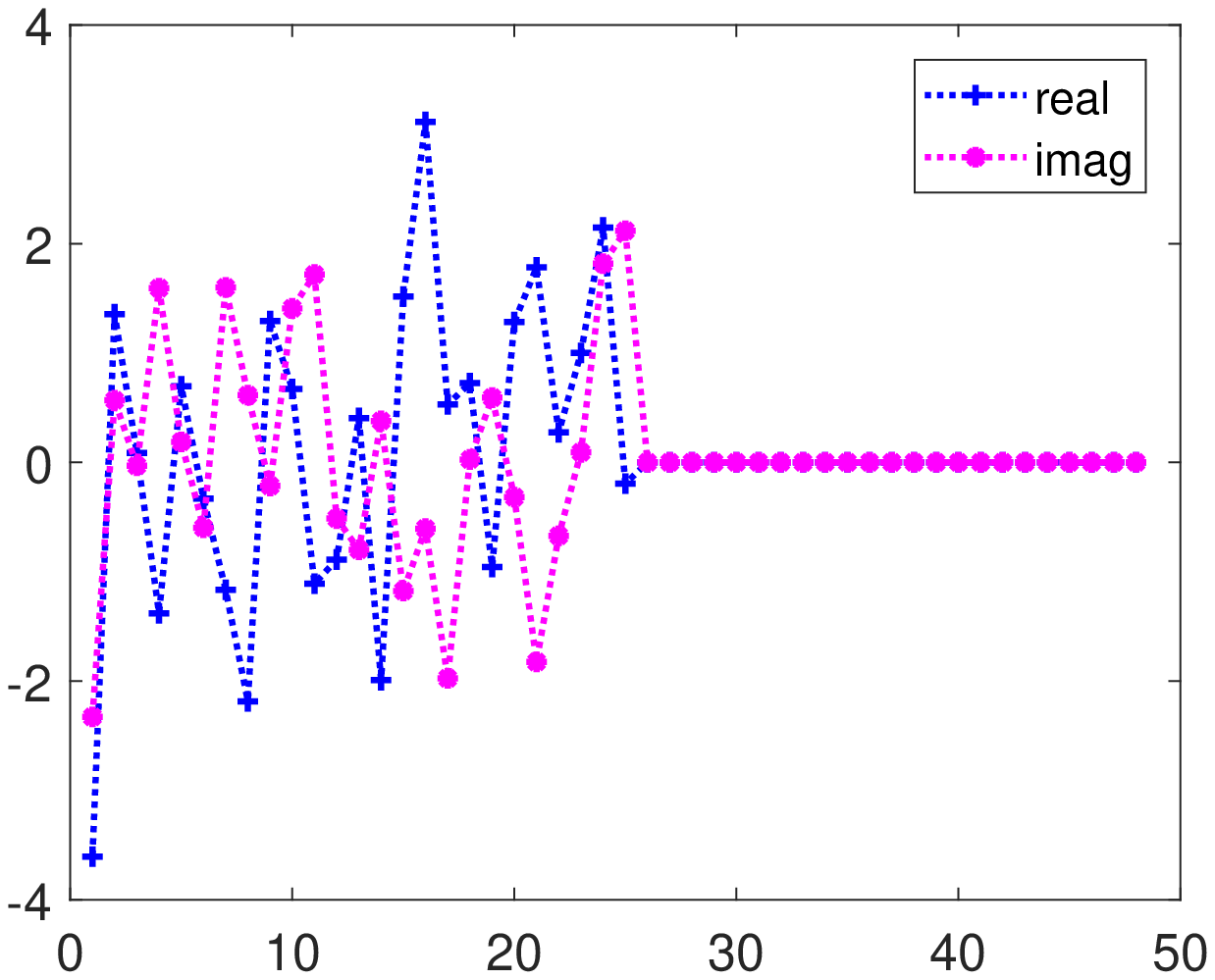}
\end{minipage}}
\subfigure[$\widehat{\textbf{w}}^{(3)}$]
{\begin{minipage}{7cm} 
\centering
\includegraphics[scale=0.36]{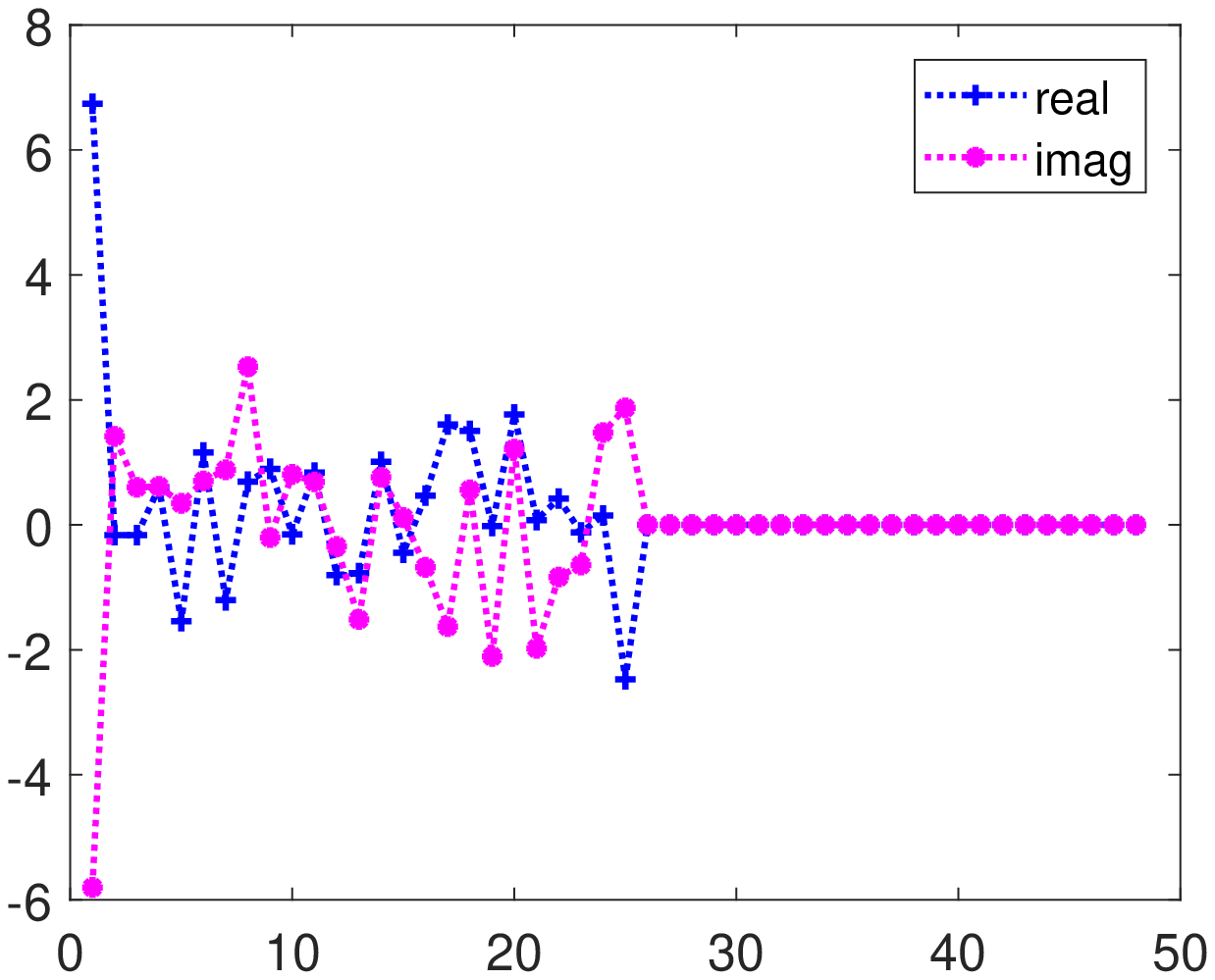}
\end{minipage}}
\subfigure[$\widehat{\textbf{w}}^{(4)}$]
{\begin{minipage}{7cm} 
\centering
\includegraphics[scale=0.36]{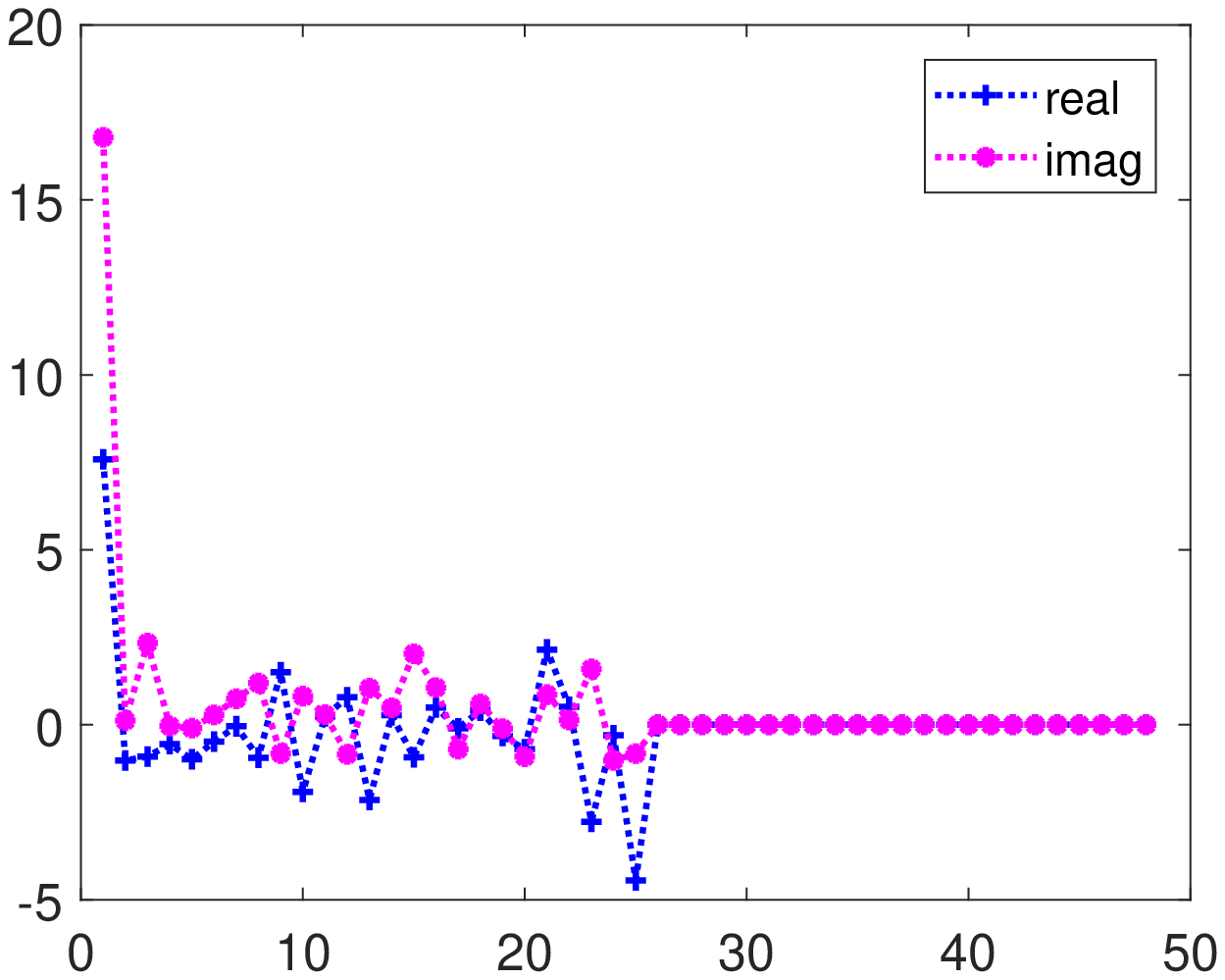}
\end{minipage}}
\caption{(a-d) The real and imaginary parts of the four windows $\textbf{w}^{(s)}$
; (e-h) The real and imaginary parts of  $\widehat{\textbf{w}}^{(s)},s=1,2,3,4$.}
\label{4w}
\end{figure}

%
%
%
%
\subsection{The third main result:  the analytic window case}\label{thirdmainresult} The main purpose of this subsection is to show that if $N$ is even  and all the windows are analytic,  then fewer measurements than Theorem  \ref{abc} and \ref{999} are  required for the recovery.

\begin{lem}\label{k}
Suppose that $N$ is even, and   $\textbf{z},$ $  \tilde{\textbf{z}}\in \mathbb{C}^N$ are both  generic  analytic signals with  DFTs  $\widehat{\textbf{z}}=(\widehat{\textbf{z}}_0,\widehat{\textbf{z}}_1,\ldots,\widehat{\textbf{z}}_{\frac{N}{2}},0,\ldots,0)$
and $\widehat{\tilde{\textbf{z}}}=(\widehat{\tilde{\textbf{z}}}_0,\widehat{\tilde{\textbf{z}}}_1,\ldots,\widehat{\tilde{\textbf{z}}}_{\frac{N}{2}},0,$
$\ldots,0).$
 Assume that the STFT separation parameter  $0<L<N$ satisfies $\lceil N/L\rceil\geq3$,
and $\textbf{w}^{(1)}$ is an   analytic window   such that $\widehat{\textbf{w}}_1^{(1)}\widehat{\textbf{w}}_0^{(1)}\neq0$.
Let $m_{1}, m_{2}, m_{3}\in \{0, 1, \ldots, \lceil N/L\rceil-1\}$ be  three distinct  parameters.
If $|\widehat{\tilde{\textbf{z}}}_\frac{N}{2}|\ne|\widehat{\textbf{z}}_\frac{N}{2}|$  and $\tilde{\textbf{z}}$
has the same STFT (associted with the window $\textbf{w}^{(1)}$)     magnitudes
as  $\textbf{z}$ at $(\frac{N}{2}-1, m_{j})$, $j=1,2,3$, then
 \begin{align} \label{AAAAAA1} \frac{\widehat{\tilde{\textbf{z}}}^{2}_\frac{N}{2}}{|\widehat{\textbf{z}}_{\frac{N}{2}-1}|^2}
=\frac{(\widehat{\textbf{w}}_0^{(1)})^2}{|\widehat{\textbf{w}}_1^{(1)}|^2}.
\end{align}
\end{lem}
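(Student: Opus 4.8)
The plan is to first exploit the analyticity of both the signal and the window to collapse the STFT sum at the shift $\frac{N}{2}-1$ to only two terms. By Proposition \ref{jiegou}(i) the nonzero Fourier coefficients of an analytic signal (for even $N$) live on $\{0,1,\ldots,\frac{N}{2}\}$, with $\widehat{\textbf{z}}_0,\widehat{\textbf{z}}_{\frac{N}{2}}$ real; the same holds for the analytic window $\textbf{w}^{(1)}$. Feeding this into the frequency-side formula \eqref{stft} at $(\frac{N}{2}-1,m)$, the only indices $l$ for which both $\widehat{\textbf{w}}^{(1)}_l$ and $\widehat{\textbf{z}}_{\frac{N}{2}-1+l}$ (indices mod $N$) are generically nonzero are $l=0$ and $l=1$, whence
\[
\widehat{y}^{\textbf{w}(1)}_{\frac{N}{2}-1,m}=\tfrac{1}{N}\big(\widehat{\textbf{z}}_{\frac{N}{2}-1}\widehat{\textbf{w}}^{(1)}_0+\widehat{\textbf{z}}_{\frac{N}{2}}\widehat{\textbf{w}}^{(1)}_1\omega^{m}\big),
\]
and likewise for $\tilde{\textbf{z}}$. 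Writing $A=\widehat{\textbf{z}}_{\frac{N}{2}-1}\widehat{\textbf{w}}^{(1)}_0$, $B=\widehat{\textbf{z}}_{\frac{N}{2}}\widehat{\textbf{w}}^{(1)}_1$, and $\tilde A,\tilde B$ for the tilded quantities, the hypothesis of equal STFT magnitudes at $(\frac{N}{2}-1,m_j)$ becomes $|A+B\omega^{m_j}|=|\tilde A+\tilde B\omega^{m_j}|$ for $j=1,2,3$.

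Next I would expand the squared magnitudes. Since $|\omega^{m_j}|=1$, one has $|A+B\omega^{m_j}|^2=|A|^2+|B|^2+2\Re(\bar A B\,\omega^{m_j})$, so each equality reads
\[
P+2\,\Re\!\big(Q\,\omega^{m_j}\big)=0,\qquad j=1,2,3,
\]
with real $P=|A|^2+|B|^2-|\tilde A|^2-|\tilde B|^2$ and complex $Q=\bar A B-\bar{\tilde A}\tilde B$. Viewing $(P,\Re Q,\Im Q)$ as three real unknowns, these are homogeneous linear equations whose coefficient determinant equals, up to a nonzero scalar, the determinant expressing collinearity of the points $\omega^{m_1},\omega^{m_2},\omega^{m_3}$ on the unit circle; its vanishing is precisely the condition $\Im\big(\frac{\omega^{m_1}-\omega^{m_2}}{\omega^{m_1}-\omega^{m_3}}\big)=0$, which Lemma \ref{m} rules out. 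Hence the system is nonsingular and $P=0$, $Q=0$. I expect this step, recognizing the three magnitude equations as a nonsingular linear system via the non-collinearity supplied by Lemma \ref{m}, to be the crux of the argument.

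Finally I would translate $Q=0$ and $P=0$ back to the Fourier coefficients. Cancelling the common nonzero factor $\widehat{\textbf{w}}^{(1)}_0\widehat{\textbf{w}}^{(1)}_1$ in $Q=0$ gives $\overline{\widehat{\textbf{z}}_{\frac{N}{2}-1}}\,\widehat{\textbf{z}}_{\frac{N}{2}}=\overline{\widehat{\tilde{\textbf{z}}}_{\frac{N}{2}-1}}\,\widehat{\tilde{\textbf{z}}}_{\frac{N}{2}}$; taking moduli squared and using that $\widehat{\textbf{z}}_{\frac{N}{2}},\widehat{\tilde{\textbf{z}}}_{\frac{N}{2}}$ are real yields $\widehat{\textbf{z}}_{\frac{N}{2}}^{2}\,|\widehat{\textbf{z}}_{\frac{N}{2}-1}|^2=\widehat{\tilde{\textbf{z}}}_{\frac{N}{2}}^{2}\,|\widehat{\tilde{\textbf{z}}}_{\frac{N}{2}-1}|^2$. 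Since $\widehat{\textbf{w}}^{(1)}_0$ is real by analyticity (so $|\widehat{\textbf{w}}^{(1)}_0|^2=(\widehat{\textbf{w}}^{(1)}_0)^2$), the relation $P=0$ reads $|\widehat{\textbf{z}}_{\frac{N}{2}-1}|^2(\widehat{\textbf{w}}^{(1)}_0)^2+\widehat{\textbf{z}}_{\frac{N}{2}}^{2}|\widehat{\textbf{w}}^{(1)}_1|^2=|\widehat{\tilde{\textbf{z}}}_{\frac{N}{2}-1}|^2(\widehat{\textbf{w}}^{(1)}_0)^2+\widehat{\tilde{\textbf{z}}}_{\frac{N}{2}}^{2}|\widehat{\textbf{w}}^{(1)}_1|^2$. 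Abbreviating $\alpha=|\widehat{\textbf{z}}_{\frac{N}{2}-1}|^2$, $\tilde\alpha=|\widehat{\tilde{\textbf{z}}}_{\frac{N}{2}-1}|^2$, $\beta=\widehat{\textbf{z}}_{\frac{N}{2}}^{2}$, $\tilde\beta=\widehat{\tilde{\textbf{z}}}_{\frac{N}{2}}^{2}$, $p=(\widehat{\textbf{w}}^{(1)}_0)^2$, $q=|\widehat{\textbf{w}}^{(1)}_1|^2$, these two relations are $\alpha\beta=\tilde\alpha\tilde\beta$ and $(\alpha-\tilde\alpha)p=(\tilde\beta-\beta)q$. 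Eliminating $\tilde\alpha=\alpha\beta/\tilde\beta$ turns the second into $\alpha\,\tfrac{\tilde\beta-\beta}{\tilde\beta}\,p=(\tilde\beta-\beta)q$, and cancelling the nonzero factor $\tilde\beta-\beta$, which is nonzero precisely because $|\widehat{\tilde{\textbf{z}}}_{\frac{N}{2}}|\ne|\widehat{\textbf{z}}_{\frac{N}{2}}|$, gives $\alpha p=\tilde\beta q$, that is $\tilde\beta/\alpha=p/q$, which is exactly \eqref{AAAAAA1}. The only bookkeeping needed is to note that the generic hypotheses guarantee the nonvanishing of $\widehat{\textbf{z}}_{\frac{N}{2}-1}$, $\widehat{\textbf{z}}_{\frac{N}{2}}$ and $\widehat{\tilde{\textbf{z}}}_{\frac{N}{2}}$ used in the cancellations.
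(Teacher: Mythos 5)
Your proposal is correct and takes essentially the same route as the paper: both collapse the STFT at $(\frac{N}{2}-1,m_j)$ to the two terms $l=0,1$, expand the three squared-magnitude equalities, conclude that the modulus part and the cross term must vanish separately, and then eliminate $|\widehat{\tilde{\textbf{z}}}_{\frac{N}{2}-1}|^2$ to reach \eqref{AAAAAA1}. The only cosmetic difference is the mechanism forcing that vanishing: you invert a $3\times 3$ real system whose nonsingularity is the non-collinearity of $\omega^{m_1},\omega^{m_2},\omega^{m_3}$ (supplied by Lemma \ref{m}), while the paper multiplies the relation by $\omega^{m_j}$ and notes that a quadratic polynomial with three distinct roots $\omega^{m_1},\omega^{m_2},\omega^{m_3}$ must have all coefficients zero.
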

\begin{proof}
By Proposition \ref{jiegou}, both  $\widehat{\textbf{z}}_\frac{N}{2}$ and  $\widehat{\tilde{\textbf{z}}}_\frac{N}{2}$
are real-valued.
Suppose that $\widehat{\tilde{\textbf{z}}}_\frac{N}{2}=\lambda_{\textbf{z},  \tilde{\textbf{z}}}\widehat{\textbf{z}}_\frac{N}{2}$ such that $\pm1\ne \lambda_{\textbf{z},  \tilde{\textbf{z}}} \in\mathbb{R}$.
 Since the  STFT  magnitudes of $\textbf{z} $ at $(\frac{N}{2}-1, m_{j})$, $j=1,2,3$
are identical to those of  $  \tilde{\textbf{z}}$,
 we have that
\begin{equation}\label{655}\begin{array}{llll}
\frac{1}{N}|\widehat{\textbf{z}}_{\frac{N}{2}-1}\widehat{\textbf{w}}_0^{(1)}+\widehat{\textbf{z}}_\frac{N}{2}\widehat{\textbf{w}}_1^{(1)}\omega^{m_j}|=|\widehat{y}_{\frac{N}{2}-1,m_j}^{\textbf{w}(1)}|=\frac{1}{N}|\widehat{\tilde{\textbf{z}}}_{\frac{N}{2}-1}\widehat{\textbf{w}}_0^{(1)}+\widehat{\tilde{\textbf{z}}}_{\frac{N}{2}}\widehat{\textbf{w}}_1^{(1)}\omega^{m_j}|,
j=1,2, 3.\end{array}\end{equation}
Since $\widehat{\tilde{\textbf{z}}}_\frac{N}{2}=\lambda_{\textbf{z},  \tilde{\textbf{z}}}\widehat{\textbf{z}}_\frac{N}{2}$,
\begin{equation}\label{755}
|\widehat{\textbf{z}}_{\frac{N}{2}-1}\widehat{\textbf{w}}_0^{(1)}+\widehat{\textbf{z}}_{\frac{N}{2}}\widehat{\textbf{w}}_1^{(1)}\omega^{m_j}|^2=|\widehat{\tilde{\textbf{z}}}_{\frac{N}{2}-1}\widehat{\textbf{w}}_0^{(1)}+\lambda_{\textbf{z},  \tilde{\textbf{z}}}\widehat{\textbf{z}}_{\frac{N}{2}}\widehat{\textbf{w}}_1^{(1)}\omega^{m_j}|^2.
\end{equation}
Using Proposition \ref{jiegou}  again, $\widehat{\textbf{w}}_0^{(1)}$ is  real-valued. Then  \eqref{755} is equivalent to
\begin{align}\label{855}\begin{array}{llll}
(\lambda_{\textbf{z},  \tilde{\textbf{z}}}^2-1)\widehat{\textbf{z}}^2_{\frac{N}{2}}|\widehat{\textbf{w}}_1^{(1)}|^2+(|\widehat{\tilde{\textbf{z}}}_{\frac{N}{2}-1}|^2-|\widehat{\textbf{z}}_{\frac{N}{2}-1}|^2)(\widehat{\textbf{w}}_0^{(1)})^2
\\+2\Re\{\overline{\omega^{m_j}\widehat{\textbf{w}}_1^{(1)}}\widehat{\textbf{w}}_0^{(1)}\widehat{\textbf{z}}_{\frac{N}{2}}(\lambda_{\textbf{z},  \tilde{\textbf{z}}}\widehat{\tilde{\textbf{z}}}_{\frac{N}{2}-1}-\widehat{\textbf{z}}_{\frac{N}{2}-1})\}=0.
\end{array}\end{align}
Multiplying by  $\omega^{m_j}$ on both sides of \eqref{855}  leads to
\begin{align}\label{955}\begin{array}{llll}
[(\lambda_{\textbf{z},  \tilde{\textbf{z}}}^2-1)\widehat{\textbf{z}}^2_{\frac{N}{2}}|\widehat{\textbf{w}}_1^{(1)}|^2+(|\widehat{\tilde{\textbf{z}}}_{\frac{N}{2}-1}|^2-|\widehat{\textbf{z}}_{\frac{N}{2}-1}|^2)(\widehat{\textbf{w}}_0^{(1)})^2]\omega^{m_j}\\
+(\lambda_{\textbf{z},  \tilde{\textbf{z}}}\overline{\widehat{\tilde{\textbf{z}}}_{\frac{N}{2}-1}}-\overline{\widehat{\textbf{z}}_{\frac{N}{2}-1})}\widehat{\textbf{w}}_1^{(1)}\widehat{\textbf{w}}_0^{(1)}\widehat{\textbf{z}}_{\frac{N}{2}}\omega^{2m_j}
+(\lambda_{\textbf{z},  \tilde{\textbf{z}}}\widehat{\tilde{\textbf{z}}}_{\frac{N}{2}-1}-\widehat{\textbf{z}}_{\frac{N}{2}-1})\widehat{\textbf{w}}_0^{(1)}\overline{\widehat{\textbf{w}}_1^{(1)}}\widehat{\textbf{z}}_{\frac{N}{2}}=0.
\end{array}\end{align}
Consider the following equation w.r.t $x$:
\begin{align}\label{1055}\begin{array}{llll}
[(\lambda_{\textbf{z},  \tilde{\textbf{z}}}^2-1)\widehat{\textbf{z}}^2_{\frac{N}{2}}|\widehat{\textbf{w}}_1^{(1)}|^2+(|\widehat{\tilde{\textbf{z}}}_{\frac{N}{2}-1}|^2-|\widehat{\textbf{z}}_{\frac{N}{2}-1}|^2)(\widehat{\textbf{w}}_0^{(1)})^2]x\\
+(\lambda_{\textbf{z},  \tilde{\textbf{z}}}\overline{\widehat{\tilde{\textbf{z}}}_{\frac{N}{2}-1}}-\overline{\widehat{\textbf{z}}_{\frac{N}{2}-1})}\widehat{\textbf{w}}_1^{(1)}\widehat{\textbf{w}}_0^{(1)}\widehat{\textbf{z}}_{\frac{N}{2}}x^2
+(\lambda_{\textbf{z},  \tilde{\textbf{z}}}\widehat{\tilde{\textbf{z}}}_{\frac{N}{2}-1}-\widehat{\textbf{z}}_{\frac{N}{2}-1})\widehat{\textbf{w}}_0^{(1)}\overline{\widehat{\textbf{w}}_1^{(1)}}\widehat{\textbf{z}}_{\frac{N}{2}}=0.
\end{array}\end{align}
If the polynomial on the left-hand side of \eqref{1055} is a  non-zero polynomial, then there are at most two solutions to the above equation.
By  \eqref{955}, $\omega^{m_j}, j=1,2,3 $ are the three distinct solutions to \eqref{1055}.
Therefore, all the coefficients in \eqref{1055} are zero. Then
\begin{align}\label{KKKKK}(\lambda_{\textbf{z},  \tilde{\textbf{z}}}\widehat{\tilde{\textbf{z}}}_{\frac{N}{2}-1}-\widehat{\textbf{z}}_{\frac{N}{2}-1})\widehat{\textbf{w}}_0^{(1)}\overline{\widehat{\textbf{w}}_1^{(1)}}\widehat{\textbf{z}}_{\frac{N}{2}}=0
\end{align}
and  \begin{align} \label{KKKKKYU} (\lambda_{\textbf{z},  \tilde{\textbf{z}}}^2-1)\widehat{\textbf{z}}^2_{\frac{N}{2}}|\widehat{\textbf{w}}_1^{(1)}|^2+(|\widehat{\tilde{\textbf{z}}}_{\frac{N}{2}-1}|^2-|\widehat{\textbf{z}}_{\frac{N}{2}-1}|^2)(\widehat{\textbf{w}}_0^{(1)})^2=0.
\end{align}
Since $\textbf{z}$ and $\tilde{\textbf{z}}$ are    generic analytic signals,  we get that  $\widehat{\textbf{z}}_{\frac{N}{2}},
\widehat{\textbf{z}}_{\frac{N}{2}-1},
\widehat{\tilde{\textbf{z}}}_{\frac{N}{2}-1},
\lambda_{\textbf{z},  \tilde{\textbf{z}}} $ are nonzeros.
From \eqref{KKKKK} we have  $\widehat{\tilde{\textbf{z}}}_{\frac{N}{2}-1}=\frac{1}{\lambda_{\textbf{z},  \tilde{\textbf{z}}}}
\widehat{\textbf{z}}_{\frac{N}{2}-1}.$ Combining this   with  \eqref{KKKKKYU} we have that
 \begin{align} \label{hKKKKKYU} (\lambda_{\textbf{z},  \tilde{\textbf{z}}}^2-1)\widehat{\textbf{z}}^2_{\frac{N}{2}}|\widehat{\textbf{w}}_1^{(1)}|^2
 +(\frac{1}{\lambda_{\textbf{z},  \tilde{\textbf{z}}}^{2}}-1)|\widehat{\textbf{z}}_{\frac{N}{2}-1}|^2(\widehat{\textbf{w}}_0^{(1)})^2=0,
\end{align}
 which implies that
 \begin{align} \label{AAAAAA} \widehat{\textbf{z}}^2_{\frac{N}{2}}=
-\frac{(\frac{1}{\lambda_{\textbf{z},  \tilde{\textbf{z}}}^{2}}-1)|\widehat{\textbf{z}}_{\frac{N}{2}-1}|^2(\widehat{\textbf{w}}_0^{(1)})^2}{(\lambda_{\textbf{z},  \tilde{\textbf{z}}}^2-1)|\widehat{\textbf{w}}_1^{(1)}|^2}
=\frac{1}{\lambda_{\textbf{z},  \tilde{\textbf{z}}}^{2}}\frac{|\widehat{\textbf{z}}_{\frac{N}{2}-1}|^2(\widehat{\textbf{w}}_0^{(1)})^2}{|\widehat{\textbf{w}}_1^{(1)}|^2}.
\end{align}
It follows from $\lambda_{\textbf{z},  \tilde{\textbf{z}}}=\frac{\widehat{\tilde{\textbf{z}}}_\frac{N}{2}}{\widehat{\textbf{z}}_\frac{N}{2}}$ and \eqref{AAAAAA} that
 \begin{align} \label{12AAAAAA1} \frac{\widehat{\tilde{\textbf{z}}}^{2}_\frac{N}{2}}{|\widehat{\textbf{z}}_{\frac{N}{2}-1}|^2}
=\frac{(\widehat{\textbf{w}}_0^{(1)})^2}{|\widehat{\textbf{w}}_1^{(1)}|^2},
\end{align}
 which completes the proof.
\end{proof}

 Now we are ready to prove our  third main result.

\begin{theo}\label{3}  Assume that
 $N$ is even and the  STFT separation parameter $L$ satisfies  $\lceil N/L\rceil\geq 3$. Let $m_{1}, m_{2}, m_3\in \{0, 1, \ldots, \lceil N/L\rceil-1\}$ be distinct.
If the  two   windows $\textbf{w}^{(1)}$ and  $\textbf{w}^{(2)}$
 are analytic such that $\widehat{\textbf{w}}_1^{(1)}\widehat{\textbf{w}}_0^{(1)}\neq0$,
$\widehat{\textbf{w}}_{\frac{N}{2}}^{(1)}\widehat{\textbf{w}}_{\frac{N}{2}}^{(2)}\neq0$
 and
$
\widehat{\textbf{w}}_0^{(1)}\widehat{\textbf{w}}_{\frac{N}{2}}^{(2)}-\widehat{\textbf{w}}_0^{(2)}
\widehat{\textbf{w}}_{\frac{N}{2}}^{(1)}\neq0,
$
then any generic analytic   signal  $\textbf{z}\in\mathbb{C}^N$
can be determined (up to a sign) by
its  $(\frac{3N}{2}-1)$ number of STFT magnitudes
\begin{align} \label{2wmeasuements13} \big\{|\widehat{y}_{\frac{N}{2},0}^{\textbf{w}(1)}|,|\widehat{y}_{\frac{N}{2},0}^{\textbf{w}(2)}|,|\widehat{y}_{k,m_j}^{\textbf{w}(1)}|:k=1,\ldots,\frac{N}{2}-1,
 j=1,2,3\big\}.\end{align}
\end{theo}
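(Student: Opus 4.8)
The plan is to leverage two structural facts special to this setting: since $N$ is even and both windows are analytic, Proposition \ref{jiegou} forces $\widehat{\textbf{z}}_0,\widehat{\textbf{z}}_{\frac{N}{2}},\widehat{\textbf{w}}_0^{(s)},\widehat{\textbf{w}}_{\frac{N}{2}}^{(s)}$ to be real, and the analytic support kills $\widehat{\textbf{w}}^{(s)}_l$ for $l>\frac{N}{2}$. Expanding the two ``DC'' measurements via \eqref{stft} and this support then gives
\[
N|\widehat{y}_{\frac{N}{2},0}^{\textbf{w}(s)}|=\big|\widehat{\textbf{z}}_{\frac{N}{2}}\widehat{\textbf{w}}_0^{(s)}+\widehat{\textbf{z}}_0\widehat{\textbf{w}}_{\frac{N}{2}}^{(s)}\big|,\qquad s=1,2 .
\]
Because each bracketed quantity is real, the measurement fixes it only up to a sign. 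Applying Cramer's rule with the matrix $\left(\begin{smallmatrix}\widehat{\textbf{w}}_0^{(1)}&\widehat{\textbf{w}}_{\frac{N}{2}}^{(1)}\\[2pt]\widehat{\textbf{w}}_0^{(2)}&\widehat{\textbf{w}}_{\frac{N}{2}}^{(2)}\end{smallmatrix}\right)$, invertible exactly by the hypothesis $\widehat{\textbf{w}}_0^{(1)}\widehat{\textbf{w}}_{\frac{N}{2}}^{(2)}-\widehat{\textbf{w}}_0^{(2)}\widehat{\textbf{w}}_{\frac{N}{2}}^{(1)}\neq0$, produces for each sign pattern a unique real pair $(\widehat{\textbf{z}}_0,\widehat{\textbf{z}}_{\frac{N}{2}})$. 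After absorbing the global sign I am left with the true pair and at most one spurious pair; the whole difficulty is that, unlike the four-window scheme of Theorem \ref{999}, two analytic windows leave this binary ambiguity in $\widehat{\textbf{z}}_{\frac{N}{2}}$.

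Next, fixing the correct $\widehat{\textbf{z}}_{\frac{N}{2}}$ (carrying an overall sign $\epsilon$), I would recover $\widehat{\textbf{z}}_{\frac{N}{2}-1},\ldots,\widehat{\textbf{z}}_1$ by the identical induction used in Theorems \ref{abc} and \ref{999}. The analytic window $\textbf{w}^{(1)}$ is $(\frac{N}{2}+1)$-bandlimited with first nonzero Fourier coefficient at index $0$, so it plays the role of the window in Theorem \ref{999} with $i=\frac{N}{2}+1$; the hypothesis $\widehat{\textbf{w}}_0^{(1)}\widehat{\textbf{w}}_1^{(1)}\neq0$ supplies the two leading nonzero coefficients needed to normalize each step and to guarantee $ac\neq0$ in Lemma \ref{lemma3.2}. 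Thus the step recovering $\widehat{\textbf{z}}_{k_0-1}$ reduces the three magnitudes $|\widehat{y}^{\textbf{w}(1)}_{k_0-1,m_j}|$ to the form $|\widehat{\mathring{\textbf{z}}}_{k_0-1}+v_{j,k_0-1}|$, and Lemma \ref{m} (first step) together with Lemma \ref{lemma3.2} and genericity ($ad-bc\neq0$) give $\Im\big(\frac{v_1-v_2}{v_1-v_3}\big)\neq0$, whence Lemma \ref{L} determines the component uniquely. This recovers $\epsilon\textbf{z}$ from the correct branch.

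The heart of the proof is excluding the spurious branch, where Lemma \ref{k} is the decisive tool. Let $\tilde{\textbf{z}}$ be any analytic signal sharing all measurements in \eqref{2wmeasuements13} with $\textbf{z}$; I would show $\tilde{\textbf{z}}=\pm\textbf{z}$ generically by splitting on $|\widehat{\tilde{\textbf{z}}}_{\frac{N}{2}}|$. If $|\widehat{\tilde{\textbf{z}}}_{\frac{N}{2}}|=|\widehat{\textbf{z}}_{\frac{N}{2}}|$ then, both being real, $\widehat{\tilde{\textbf{z}}}_{\frac{N}{2}}=\pm\widehat{\textbf{z}}_{\frac{N}{2}}$; after normalizing the global sign the two DC equations force $\widehat{\tilde{\textbf{z}}}_0=\widehat{\textbf{z}}_0$ (taking the minus sign in both equations would require $\widehat{\textbf{w}}_0^{(1)}\widehat{\textbf{w}}_{\frac{N}{2}}^{(2)}=\widehat{\textbf{w}}_0^{(2)}\widehat{\textbf{w}}_{\frac{N}{2}}^{(1)}$ once $\widehat{\textbf{z}}_{\frac{N}{2}}\neq0$, contradicting invertibility), and then the uniqueness of the recursion above propagates the equality to every component, giving $\tilde{\textbf{z}}=\pm\textbf{z}$. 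If instead $|\widehat{\tilde{\textbf{z}}}_{\frac{N}{2}}|\neq|\widehat{\textbf{z}}_{\frac{N}{2}}|$, then $\textbf{z},\tilde{\textbf{z}}$ satisfy the hypotheses of Lemma \ref{k} (equal magnitudes at $(\frac{N}{2}-1,m_j)$), so the lemma forces the rigid identity $\widehat{\tilde{\textbf{z}}}_{\frac{N}{2}}^{\,2}\,|\widehat{\textbf{w}}_1^{(1)}|^2=|\widehat{\textbf{z}}_{\frac{N}{2}-1}|^2\,(\widehat{\textbf{w}}_0^{(1)})^2$.

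Finally I would turn this identity into a genericity exclusion. The spurious value satisfies $\widehat{\tilde{\textbf{z}}}_{\frac{N}{2}}=\pm\widehat{\textbf{z}}'_{\frac{N}{2}}$, where by Step~1 the spurious $\widehat{\textbf{z}}'_{\frac{N}{2}}$ is an explicit real linear function of $(\widehat{\textbf{z}}_0,\widehat{\textbf{z}}_{\frac{N}{2}})$; hence $\widehat{\tilde{\textbf{z}}}_{\frac{N}{2}}^{\,2}$ is a fixed quadratic in $(\widehat{\textbf{z}}_0,\widehat{\textbf{z}}_{\frac{N}{2}})$, and the identity of Lemma \ref{k} becomes one polynomial relation among $\widehat{\textbf{z}}_0,\widehat{\textbf{z}}_{\frac{N}{2}}$ and $|\widehat{\textbf{z}}_{\frac{N}{2}-1}|^2$. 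Since $|\widehat{\textbf{z}}_{\frac{N}{2}-1}|^2$ can be varied independently of $\widehat{\textbf{z}}_0,\widehat{\textbf{z}}_{\frac{N}{2}}$, this relation is a nonzero polynomial, and its vanishing locus is precisely where the spurious branch could survive. Multiplying it by the finitely many other exceptional polynomials used above (for instance $\widehat{\textbf{z}}_{\frac{N}{2}}\neq0$, $|\widehat{\textbf{z}}'_{\frac{N}{2}}|\neq|\widehat{\textbf{z}}_{\frac{N}{2}}|$, and the $ad-bc\neq0$ conditions) and transporting the result from the coordinates of $\widehat{\textbf{z}}$ back to $\textbf{x}\in\mathbb{R}^N$ via Proposition \ref{ncer}, I would conclude that every non-recoverable analytic signal lies in a single vanishing locus, which is exactly the statement of Definition \ref{kzxcvb}. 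The main obstacle I anticipate is the bookkeeping of this last paragraph: checking that the nonvanishing conditions Lemma \ref{k} imposes on $\tilde{\textbf{z}}$ (its components $\widehat{\tilde{\textbf{z}}}_{\frac{N}{2}},\widehat{\tilde{\textbf{z}}}_{\frac{N}{2}-1}$ and the ratio $\lambda$ nonzero) are themselves generic in $\textbf{z}$, and that the exclusion polynomial is genuinely not identically zero.
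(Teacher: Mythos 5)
Your proposal is correct and takes essentially the same route as the paper's own proof: solve the two window measurements at $(\frac{N}{2},0)$ via the invertibility hypothesis to obtain two candidate pairs $(\widehat{\textbf{z}}_0,\widehat{\textbf{z}}_{\frac{N}{2}})$, exclude the spurious branch by converting the identity of Lemma \ref{k} into a nonzero polynomial relation that fails generically (the paper's $H$/$\tilde{H}$ argument), and recover the remaining components by the recursion of Theorem \ref{999} using Lemmas \ref{lemma3.2}, \ref{m} and \ref{L}. The only difference is organizational—you phrase the exclusion by splitting on whether an arbitrary competitor satisfies $|\widehat{\tilde{\textbf{z}}}_{\frac{N}{2}}|=|\widehat{\textbf{z}}_{\frac{N}{2}}|$, whereas the paper works directly with the two explicit candidate solutions and shows their absolute values differ—which is cosmetic.
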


\begin{proof}
Since   $\textbf{z}$, $\textbf{w}^{(1)}$ and $\textbf{w}^{(2)}$ are all  analytic,  it follows from  Proposition \ref{jiegou} (i)  that the six numbers $\widehat{\textbf{z}}_0,\widehat{\textbf{z}}_{\frac{N}{2}},\widehat{\textbf{w}}_0^{(1)},\widehat{\textbf{w}}_0^{(2)},\widehat{\textbf{w}}_{\frac{N}{2}}^{(1)}$ and $\widehat{\textbf{w}}_{\frac{N}{2}}^{(2)}$ are all  real-valued.

{\bf Step 1: The determination of  $(\widehat{\textbf{z}}_0, \widehat{\textbf{z}}_{\frac{N}{2}}, \widehat{\textbf{z}}_{\frac{N}{2}-1})$.}

In this step, we  prove that $(\widehat{\textbf{z}}_0, \widehat{\textbf{z}}_{\frac{N}{2}}, \widehat{\textbf{z}}_{\frac{N}{2}-1})$
can be determined, up to a sign, by the five measurements $\{|\widehat{y}_{\frac{N}{2},0}^{\textbf{w}(1)}|,|\widehat{y}_{\frac{N}{2},0}^{\textbf{w}(2)}|,|\widehat{y}_{\frac{N}{2}-1,m_j}^{\textbf{w}(1)}|: j=1,2, 3\}$.
Consider the equation system w.r.t the  variable $(\widehat{\mathring{\textbf{z}}}_0,\widehat{\mathring{\textbf{z}}}_{\frac{N}{2}})\in \mathbb{R}^2$:
\begin{equation}\label{1}
\left\{
\begin{aligned}
|\widehat{y}_{\frac{N}{2},0}^{\textbf{w}(1)}| &=\frac{1}{N}|\widehat{\mathring{\textbf{z}}}_\frac{N}{2}\widehat{\textbf{w}}_0^{(1)}+\widehat{\mathring{\textbf{z}}}_0\widehat{\textbf{w}}_{\frac{N}{2}}^{(1)}|,\\
|\widehat{y}_{\frac{N}{2},0}^{\textbf{w}(2)}| &=\frac{1}{N}|\widehat{\mathring{\textbf{z}}}_\frac{N}{2}\widehat{\textbf{w}}_0^{(2)}+\widehat{\mathring{\textbf{z}}}_0\widehat{\textbf{w}}_{\frac{N}{2}}^{(2)}|.
\end{aligned}
\right.
\end{equation}
It follows from $\widehat{\textbf{w}}_0^{(1)}\widehat{\textbf{w}}_{\frac{N}{2}}^{(2)}-
\widehat{\textbf{w}}_0^{(2)}\widehat{\textbf{w}}_{\frac{N}{2}}^{(1)}\neq0$
that
 the solutions (up to a global sign $\epsilon$)  to \eqref{1} are
\begin{align}\label{000}
(\widehat{\mathring{\textbf{z}}}_0,\widehat{\mathring{\textbf{z}}}_{\frac{N}{2}})=\Big(\frac{N(-|\widehat{y}_{\frac{N}{2},0}^{\textbf{w}(1)}|\widehat{\textbf{w}}_{0}^{(2)}+|\widehat{y}_{\frac{N}{2},0}^{\textbf{w}(2)}|\widehat{\textbf{w}}_0^{(1)})}{\widehat{\textbf{w}}_0^{(1)}\widehat{\textbf{w}}_{\frac{N}{2}}^{(2)}-\widehat{\textbf{w}}_0^{(2)}\widehat{\textbf{w}}_{\frac{N}{2}}^{(1)}},
\frac{N(|\widehat{y}_{\frac{N}{2},0}^{\textbf{w}(1)}|\widehat{\textbf{w}}_{\frac{N}{2}}^{(2)}-|\widehat{y}_{\frac{N}{2},0}^{\textbf{w}(2)}|\widehat{\textbf{w}}_\frac{N}{2}^{(1)})}{\widehat{\textbf{w}}_0^{(1)}\widehat{\textbf{w}}_{\frac{N}{2}}^{(2)}-\widehat{\textbf{w}}_0^{(2)}\widehat{\textbf{w}}_{\frac{N}{2}}^{(1)}}
\Big)
\end{align}
and
\begin{align}\label{010}
(\widehat{\mathring{\textbf{z}}}_0,\widehat{\mathring{\textbf{z}}}_{\frac{N}{2}})=\Big(\frac{N(-|\widehat{y}_{\frac{N}{2},0}^{\textbf{w}(1)}|\widehat{\textbf{w}}_{0}^{(2)}-|\widehat{y}_{\frac{N}{2},0}^{\textbf{w}(2)}|\widehat{\textbf{w}}_0^{(1)})}{\widehat{\textbf{w}}_0^{(1)}\widehat{\textbf{w}}_{\frac{N}{2}}^{(2)}-\widehat{\textbf{w}}_0^{(2)}\widehat{\textbf{w}}_{\frac{N}{2}}^{(1)}},
\frac{N(|\widehat{y}_{\frac{N}{2},0}^{\textbf{w}(1)}|\widehat{\textbf{w}}_{\frac{N}{2}}^{(2)}+|\widehat{y}_{\frac{N}{2},0}^{\textbf{w}(2)}|\widehat{\textbf{w}}_\frac{N}{2}^{(1)})}{\widehat{\textbf{w}}_0^{(1)}\widehat{\textbf{w}}_{\frac{N}{2}}^{(2)}-\widehat{\textbf{w}}_0^{(2)}\widehat{\textbf{w}}_{\frac{N}{2}}^{(1)}}
\Big).
\end{align}

 For any $\widehat{\mathring{\textbf{z}}}_\frac{N}{2}$
 given  through \eqref{000} or \eqref{010}, the following equations w.r.t
$\widehat{\mathring{\textbf{z}}}_{\frac{N}{2}-1}$:
\begin{equation}\label{12fg655}\begin{array}{llll}
|\widehat{y}_{\frac{N}{2}-1,m_j}^{\textbf{w}(1)}|=\frac{1}{N}|\widehat{\mathring{\textbf{z}}}_{\frac{N}{2}-1}\widehat{\textbf{w}}_0^{(1)}+\widehat{\mathring{\textbf{z}}}_\frac{N}{2}\widehat{\textbf{w}}_1^{(1)}\omega^{m_j}|,
j=1,2, 3\end{array}\end{equation}
have a unique solution if and only if
the three circles w.r.t the variable $\widehat{\mathring{\textbf{z}}}_{\frac{N}{2}-1}$:
\begin{align}\label{circle}
\frac{N\big|\widehat{y}_{\frac{N}{2}-1,m_j}^{\textbf{w}(1)}\big|}{\big|\widehat{\textbf{w}}^{(1)}_{0}\big|}=\big|\widehat{\mathring{\textbf{z}}}_{\frac{N}{2}-1}+\frac{\widehat{\mathring{\textbf{z}}}_{\frac{N}{2}}\widehat{\textbf{w}}^{(1)}_1\omega^{m_j}}{\widehat{\textbf{w}}^{(1)}_{0}}\big|,
j=1,2,3 \end{align}
 have only one intersection point.
 We next prove that
 for  the  two choices of $\widehat{\mathring{\textbf{z}}}_{\frac{N}{2}}$ given by \eqref{000} and  \eqref{010}:
\begin{align}\label{kkkker}
\widehat{\mathring{\textbf{z}}}_{\frac{N}{2}}=\frac{N(|\widehat{y}_{\frac{N}{2},0}^{\textbf{w}(1)}|\widehat{\textbf{w}}_{\frac{N}{2}}^{(2)}-|\widehat{y}_{\frac{N}{2},0}^{\textbf{w}(2)}|\widehat{\textbf{w}}_\frac{N}{2}^{(1)})}{\widehat{\textbf{w}}_0^{(1)}\widehat{\textbf{w}}_{\frac{N}{2}}^{(2)}-\widehat{\textbf{w}}_0^{(2)}\widehat{\textbf{w}}_{\frac{N}{2}}^{(1)}} \ \hbox{and} \
\widehat{\mathring{\textbf{z}}}_{\frac{N}{2}}=\frac{N(|\widehat{y}_{\frac{N}{2},0}^{\textbf{w}(1)}|\widehat{\textbf{w}}_{\frac{N}{2}}^{(2)}+|\widehat{y}_{\frac{N}{2},0}^{\textbf{w}(2)}|\widehat{\textbf{w}}_\frac{N}{2}^{(1)})}{\widehat{\textbf{w}}_0^{(1)}\widehat{\textbf{w}}_{\frac{N}{2}}^{(2)}-\widehat{\textbf{w}}_0^{(2)}\widehat{\textbf{w}}_{\frac{N}{2}}^{(1)}},\end{align}
there is only one choice  such that the corresponding three circles in \eqref{circle}
have only one intersection point.
By  Lemma \ref{k}, we just need to prove the two aspects:
(1) the two numbers in \eqref{kkkker}  do not have the same absolute values;
(2) Lemma \ref{k} \eqref{AAAAAA1} does not hold.

If (1) does not hold then
 \begin{align} \label{pdtj}  |\widehat{y}_{\frac{N}{2},0}^{\textbf{w}(1)}|\widehat{\textbf{w}}_{\frac{N}{2}}^{(2)}=0 \  \hbox{or} \ |\widehat{y}_{\frac{N}{2},0}^{\textbf{w}(2)}|
\widehat{\textbf{w}}_{\frac{N}{2}}^{(1)}=0.\end{align}
For the generic analytic  signal $\textbf{z}$,
it follows from \eqref{1} that $|\widehat{y}_{\frac{N}{2},0}^{\textbf{w}(1)}|\neq0$
and $|\widehat{y}_{\frac{N}{2},0}^{\textbf{w}(2)}|\neq0.$
This combining with  $\widehat{\textbf{w}}_{\frac{N}{2}}^{(1)}\widehat{\textbf{w}}_{\frac{N}{2}}^{(2)}\neq0$
leads to that  \eqref{pdtj} does not hold. Therefore, (1) hold.

Next we prove
(2).
Without loss of generality, denote
\begin{align}\label{kker}\widehat{\tilde{\textbf{z}}}_\frac{N}{2}=\frac{N(|\widehat{y}_{\frac{N}{2},0}^{\textbf{w}(1)}|\widehat{\textbf{w}}_{\frac{N}{2}}^{(2)}-|\widehat{y}_{\frac{N}{2},0}^{\textbf{w}(2)}|\widehat{\textbf{w}}_\frac{N}{2}^{(1)})}{\widehat{\textbf{w}}_0^{(1)}\widehat{\textbf{w}}_{\frac{N}{2}}^{(2)}-\widehat{\textbf{w}}_0^{(2)}\widehat{\textbf{w}}_{\frac{N}{2}}^{(1)}}, \widehat{\textbf{z}}_\frac{N}{2}=\frac{N(|\widehat{y}_{\frac{N}{2},0}^{\textbf{w}(1)}|\widehat{\textbf{w}}_{\frac{N}{2}}^{(2)}+|\widehat{y}_{\frac{N}{2},0}^{\textbf{w}(2)}|\widehat{\textbf{w}}_\frac{N}{2}^{(1)})}{\widehat{\textbf{w}}_0^{(1)}\widehat{\textbf{w}}_{\frac{N}{2}}^{(2)}-\widehat{\textbf{w}}_0^{(2)}\widehat{\textbf{w}}_{\frac{N}{2}}^{(1)}}.\end{align}
It follows from (1) that   $|\widehat{\textbf{z}}_\frac{N}{2}|\neq|\widehat{\tilde{\textbf{z}}}_\frac{N}{2}|$.
By  \eqref{000}
and \eqref{010},   \eqref{AAAAAA1} is equivalent to
\begin{align}\label{gggggggg}
A_1\widehat{\textbf{z}}_\frac{N}{2}^4+A_2\widehat{\textbf{z}}_\frac{N}{2}^2\widehat{\textbf{z}}_{0}^2+A_3\widehat{\textbf{z}}_\frac{N}{2}^3\widehat{\textbf{z}}_{0}
+A_4\widehat{\textbf{z}}_\frac{N}{2}\widehat{\textbf{z}}_{0}^3
-A_5\widehat{\textbf{z}}_\frac{N}{2}^2|\widehat{\textbf{z}}_{\frac{N}{2}-1}|^2\\\notag
-A_6\widehat{\textbf{z}}_{0}^2|\widehat{\textbf{z}}_{\frac{N}{2}-1}|^2
-A_7\widehat{\textbf{z}}_\frac{N}{2}\widehat{\textbf{z}}_{0}|\widehat{\textbf{z}}_{\frac{N}{2}-1}|^2
+C^2K^4|\widehat{\textbf{z}}_{\frac{N}{2}-1}|^4=0,\end{align}
where all the coefficients $A_{i}$ depend only on $\widehat{\textbf{w}}_0^{(1)},
\widehat{\textbf{w}}_0^{(2)}, \widehat{\textbf{w}}_{\frac{N}{2}}^{(1)}$
and $\widehat{\textbf{w}}_{\frac{N}{2}}^{(2)}$,
and  $
K=\widehat{\textbf{w}}_0^{(1)}\widehat{\textbf{w}}_{\frac{N}{2}}^{(2)}-\widehat{\textbf{w}}_{\frac{N}{2}}^{(1)}\widehat{\textbf{w}}_0^{(2)}$,
$ C=\frac{(\widehat{\textbf{w}}_0^{(1)})^{2}}{|\widehat{\textbf{w}}_1^{(1)}|^2}.$
Clearly, $K\neq0$ and $C\neq0$.
Define a  polynomial
as follows
\begin{align}\label{ggggggg6}
H(x_0,x_1,x_{2},x_{3})=A_1x_{3}^4+A_2x_{3}^2x_{0}^2+A_3x_{3}^3x_{0}
+A_4x_{3}x_{0}^3
-A_5x_{3}^2x_{1}x_{2}\\\notag
-A_6x_{0}^2x_{1}x_{2}
-A_7x_{3}x_{0}x_{1}x_{2}
+C^2K^4x^{2}_{1}x^{2}_{2}.\end{align}
Since $K\neq0$ and $C\neq0$, $H$ is a nonzero polynomial.
%
Replacing $\widehat{\textbf{z}}_k$ by
$\sum_{n=0}^{N-1}\textbf{z}_{n}e^{-2\pi \textbf{i}kn/N},$
then it follows from \eqref{ggggggg6} that  there exists a  polynomial
$\tilde{H}(\textbf{z}_{0}, \ldots, \textbf{z}_{N-1})$
such that $\tilde{H}(\textbf{z}_{0}, \ldots, \textbf{z}_{N-1})=
H(\widehat{\textbf{z}}_0,\widehat{\textbf{z}}_{\frac{N}{2}-1}, \overline{\widehat{\textbf{z}}_{\frac{N}{2}-1}},
\widehat{\textbf{z}}_{\frac{N}{2}}).$ Since $H$ is a nonzero polynomial, $\tilde{H}$
is also a nonzero polynomial.
Moreover, as those of $H$ the coefficients of $\tilde{H}$ depend only on $\widehat{\textbf{w}}_0^{(1)},
\widehat{\textbf{w}}_0^{(2)}, \widehat{\textbf{w}}_{\frac{N}{2}}^{(1)}$
and $\widehat{\textbf{w}}_{\frac{N}{2}}^{(2)}$. Now
it follows from \eqref{gggggggg} that
$\tilde{H}(\textbf{z}_{0}, \ldots, \textbf{z}_{N-1})=
H(\widehat{\textbf{z}}_0,\widehat{\textbf{z}}_{\frac{N}{2}-1}, \overline{\widehat{\textbf{z}}_{\frac{N}{2}-1}},
\widehat{\textbf{z}}_{\frac{N}{2}})=0.$
But for the generic signal
$\textbf{z}$, we have  $\tilde{H}(\textbf{z}_{0}, \ldots, \textbf{z}_{N-1})\neq 0$. This is a contradiction.

Summarizing what addressed above,  (1) and (2) hold. Consequently, only one of two  choices of $\widehat{\textbf{z}}_{\frac{N}{2}}$ in \eqref{kkkker} is feasible. Combining \eqref{000} or \eqref{010},
$(\widehat{\textbf{z}}_0,\widehat{\textbf{z}}_{\frac{N}{2}})$
can be determined up to  a sign.  With $\epsilon(\widehat{\textbf{z}}_0,\widehat{\textbf{z}}_{\frac{N}{2}})$ at hand,
\eqref{12fg655} is equivalent to
\begin{align}\label{circle1}
\frac{N\big|\widehat{y}_{\frac{N}{2}-1,m_j}^{\textbf{w}(1)}\big|}{\big|\widehat{\textbf{w}}^{(1)}_{0}\big|}=\big|\widehat{\mathring{\textbf{z}}}_{\frac{N}{2}-1}+v_{j,\frac{N}{2}-1}\big|,
j=1,2,3 \end{align}
where
$
v_{j,\frac{N}{2}-1}=\frac{\epsilon\widehat{\textbf{z}}_{\frac{N}{2}}\widehat{\textbf{w}}^{(1)}_1\omega^{m_j}}{\widehat{\textbf{w}}^{(1)}_{0}}.
$
For the generic analytic signal $\textbf{z}\in \mathbb{C}^N$, we have  $\widehat{\textbf{z}}_{\frac{N}{2}}\neq0$.
Therefore,
$$\dfrac{v_{1,\frac{N}{2}-1}-v_{2,\frac{N}{2}-1}}{v_{1,\frac{N}{2}-1}-v_{3,\frac{N}{2}-1}}\\
=\frac{\omega^{m_1}-\omega^{m_2}}{\omega^{m_1}-\omega^{m_3}}.$$
By  Lemma \ref{m}, we have
$$\Im\Big(\dfrac{v_{1,\frac{N}{2}-1}-v_{2,\frac{N}{2}-1}}{v_{1,\frac{N}{2}-1}-v_{3,\frac{N}{2}-1}}\Big)\ne0.$$
Now it follows from  Lemma \ref{L} that  there exists  a unique solution to the equation system \eqref{circle1} w.r.t $\widehat{\mathring{\textbf{z}}}_{\frac{N}{2}-1}$.
Clearly, $\epsilon\widehat{\textbf{z}}_{\frac{N}{2}-1}$
is a solution to
\eqref{circle1}.
Then $\epsilon\widehat{\textbf{z}}_{\frac{N}{2}-1}$ is the unique solution.
Summarizing what has been  addressed above, from the five  measurements $\{|\widehat{y}_{\frac{N}{2},0}^{\textbf{w}(1)}|,|\widehat{y}_{\frac{N}{2},0}^{\textbf{w}(2)}|,|\widehat{y}_{\frac{N}{2}-1,m_j}^{\textbf{w}(1)}|: j=1,2, 3\}$
the vector $\epsilon(\widehat{\textbf{z}}_0,\widehat{\textbf{z}}_{\frac{N}{2}},\widehat{\textbf{z}}_{\frac{N}{2}-1})$
with $\epsilon\in \{1, -1\}$ can be obtained.

{\bf Step 2: The determination of other  components $
\widehat{\textbf{z}}_1, \ldots, \widehat{\textbf{z}}_{\frac{N}{2}-2}$ }

Having $\epsilon(\widehat{\textbf{z}}_0,\widehat{\textbf{z}}_{\frac{N}{2}},\widehat{\textbf{z}}_{\frac{N}{2}-1})$
at hand, through the similar procedures in the proof of  Theorem \ref{999}, other  components $
\widehat{\textbf{z}}_1, \ldots, \widehat{\textbf{z}}_{\frac{N}{2}-2}$  can be determined
  (up to the sign $\epsilon$) by the $(\frac{3N}{2}-6)$  measurements
\begin{align}\label{212} \Big\{|\widehat{y}_{k,m_j}^{\textbf{w}(1)}| =\frac{1}{N}|\sum_{l=0}^{N-1}\widehat{\textbf{z}}_{k+l}\widehat{\textbf{w}}_l^{(1)}\omega^{lm_j}|: k=1,\ldots,\frac{N}{2}-2, j=1,2,3\Big\}.
\end{align}
This completes the proof.
\end{proof}

\section{Conclusion}
This paper concerns  the phase retrieval of analytic signals in $\mathbb{C}^{N}$ by STFT measurements.
For the window of STFT being bandlimited, we examine the structure of STFT.
In particular, if the windows are $B$-bandlimited our main results state that
a generic analytic signal can be determined up to a sign by $(3\lfloor\frac{N}{2}\rfloor+1)$ measurements.
What is more, if $N$ is even and the windows are also analytic then the above number of measurements can be
reduced to $(3\lfloor\frac{N}{2}\rfloor-1)$.

\bibliographystyle{splncs03}

\bibliography{pls}

\end{document}